\newcommand{\eqdef}{=\vcentcolon}
\newtheorem{theorem}{Theorem}
\let\l=\lambda
\let\D=\Delta   
 \let\r=\rho
\def\de{\mathrm d}
\def\to{\rightarrow}
\newcommand{\beq}{\begin{equation}} \newcommand{\eeq}{\end{equation}}
\newcommand{\Tr}{\text{Tr}}
\begin{document}

\title{
Approximate Survey Propagation for Statistical Inference 
}

\author{Fabrizio Antenucci}
\affiliation{
Institut de physique th\'eorique, Universit\'e Paris Saclay, CNRS, CEA, F-91191 Gif-sur-Yvette, France
}
\affiliation{Soft and Living Matter Lab., Rome Unit of CNR-NANOTEC, Institute of
Nanotechnology, Piazzale Aldo Moro 5, I-00185, Rome, Italy}

\author{Florent Krzakala}
\affiliation{Laboratoire de Physique Statistique, CNRS \& Sorbonnes Universit\'es \& \'Ecole Normale Supérieure, PSL University, Paris, France.}

\author{Pierfrancesco Urbani}
\affiliation{
Institut de physique th\'eorique, Universit\'e Paris Saclay, CNRS, CEA, F-91191 Gif-sur-Yvette, France
}

\author{Lenka Zdeborov\'a}
\affiliation{
Institut de physique th\'eorique, Universit\'e Paris Saclay, CNRS, CEA, F-91191 Gif-sur-Yvette, France
}

\begin{abstract}
  Approximate message passing algorithm enjoyed considerable attention
  in the last decade. In this paper we introduce a variant of the AMP
  algorithm that takes into account glassy nature of the system under
  consideration. We coin this algorithm as the approximate survey
  propagation (ASP) and derive it for a class of low-rank matrix
  estimation problems.
  We derive the state evolution for the ASP algorithm and prove that it
  reproduces the one-step replica symmetry breaking (1RSB) fixed-point equations,
  well-known in physics of disordered systems.
  Our derivation thus gives a concrete algorithmic
  meaning to the 1RSB equations that is of independent interest.
  We characterize the performance of ASP in terms of convergence and
  mean-squared error as a function of the free Parisi parameter $s$.
  We conclude that
%
%
  when there is a model mismatch between the true generative model and
  the inference model, the performance of AMP rapidly degrades both in
  terms of MSE and of convergence, while ASP converges in a larger
  regime and can reach lower errors.
%
%
%
%
%
  Among other results, our analysis leads us to a striking hypothesis
  that whenever $s$ (or other parameters) can be set in such a way
  that the Nishimori condition $M=Q>0$ is restored, then the
  corresponding algorithm is able to reach mean-squared error as low
  as the Bayes-optimal error obtained when the model and its
  parameters are known and exactly matched in the inference procedure.
%
%
%
%
\end{abstract}

\maketitle

\tableofcontents

\newpage

\section{Introduction}

\subsection{General Motivation}

Many problems in data analysis and other data-related science can be formulated as optimization or
inference in high-dimensional and non-convex landscapes. General
classes of algorithms that are able to deal with some of these
problems include Monte Carlo and Gibbs-based sampling \cite{neal2000markov}, variational
mean field methods \cite{wainwright2008graphical}, stochastic gradient
descent \cite{bottou2010large} and their variants. Approximate message passing (AMP) algorithms \cite{ThoulessAnderson77,DonohoMaleki09} is another such class
that has one very remarkable advantage over all the before mentioned:
for a large class of models where instances are
generated from a probabilistic model, the performance of AMP on large such
instances can be tracked via a so-called {\it state evolution} \cite{bolthausen2014iterative,bayati2011dynamics,javanmard2013state}. 
This development has very close connections to statistical physics of
disordered systems because the state evolution that describes AMP
coincides with fixed-point equations that arise from the replica and
cavity methods as known in the theory of glasses and spin glasses
\cite{MPV87}. 

AMP and its state evolution have been so far mainly used in two
contexts. On the one hand, for optimization in cases where the
associated landscape is {\it convex}. This the the case e.g.~in the original
work of Donoho, Maleki, Montanari \cite{DonohoMaleki09} where
$\ell_1$-norm minimization for sparse
linear regression is analyzed, or in the study of the so-called
M-estimators \cite{donoho2016high,advani2016statistical}. On the other
hand, in the setting of Bayes-optimal inference where the model that
generated the data is assumed to be known perfectly, see
e.g.~\cite{zdeborova2015statistical,LesKrzZde17}, 
where the so-called Nishimori conditions ensure that the associated
posterior probability measure is of so-called replica symmetric kind in the context
of spin glasses \cite{MPV87,NishimoriBook}. Many (if not most) of
inference and optimization problems that are solved in the currently most
challenging applications are highly non-convex and the underlying model
that generated the data is not known. It is hence an important general
research direction to
understand the behavior of algorithms and find their more robust generalizations
encompassing such settings. 

In the present paper we make a step in
this general direction for the class of AMP algorithms. We analyze in
detail the phase diagram and phases of convergence of the AMP
algorithm on a prototypical example of a problem that is non-convex
and not in the Bayes-optimal setting. The example we choose is
rank-one matrix estimation problem that has the same phase diagram as
the Sherrington-Kirkpatrick (SK) model with ferromagnetic bias~\cite{SherringtonKirkpatrick75}. AMP reproduces the corresponding 
replica symmetric phase diagram with region of non-convergence being
given by the instability towards \emph{replica symmetry breaking} (RSB).  We note
that while
this phase diagram is well known in the literature on spin glasses, its algorithmic consequences are
obscured in the early literature by the fact that unless the AMP algorithm
is iterated with the correct time-indices
\cite{bolthausen2014iterative} convergence is not
reached, see discussion e.g.~in \cite{zdeborova2015statistical}.

Our main contribution is the development 
of a new type of approximate message passing algorithm that takes into
account breaking of the replica symmetry and reduces to AMP for a
special choice of parameters. We call this the {\it
  approximate survey propagation} (ASP) algorithm, following up on the
influential work on survey propagation in sparse random constraint
satisfaction problems \cite{MPZ02,BMZ05}. We show that there are
regions (away from the Nishimori line) in which ASP converges while AMP does not, and where at
the same time ASP provides lower estimation error than AMP. We show
that the state evolution of ASP leads to the one-step-replica symmetry
breaking (1RSB)  fixed-point equations well known from the study of spin
glasses. This is the first algorithm that provably converges
towards fixed-points of the 1RSB equations.  Again we stress that, while the 1RSB phase
diagram and corresponding physics of the ferromagnetically biased SK
model is well-known, its algorithmic confirmation is new to our knowledge: even if the 1RSB
versions of the Thouless-Anderson-Palmer (TAP)
\cite{ThoulessAnderson77} equations was previously discussed, e.g.~in \cite{MPV87,yasuda2012replica}, the corresponding 
time-indices in ASP are crucial in order to reproduce this phase
diagram algorithmically. Our work gives a concrete
algorithmic meaning to the 1RSB fixed-point equations, and can thus be
potentially used to
understand this concept independently of the heuristic replica or cavity methods. 

\subsection{The low-rank estimation model and Bayesian setting}

As this is a kind of exploratory paper we focus on the problem of
rank-one matrix estimation. This problem
is (among) the simplest where the ASP algorithm can be tested. In particular, for binary (Ising) variables it is equivalent to the Sherrington-Kirkpatrick model with
ferromagnetically biased couplings as studied e.g.~in
\cite{SherringtonKirkpatrick75,de1978stability,NishimoriBook}.  
%
%
Low-rank matrix estimation is a problem ubiquitous in modern data
processing. Commonly it is stated in the following way: Given an observed matrix $Y$ one aims to write it as
$Y_{ij} =   \lambda u^{\top}_i v_j + \xi_{ij} $, where $u_i, v_j\in
{\mathbb R}^r$ with $r$ being the rank, and $\xi_{ij}$ 
is a noise or a part of the data matrix $Y$ that is not well explained
via this decomposition. Commonly used methods such as principal
component analysis or clustering can be formulated as low-rank matrix
estimation. Applications of these methods range from medicine over to 
signal processing or marketing. From an algorithmic point of view and
under various constraints on the factors $u_i, v_j$ and the noise
$\xi_{ij}$ the problem is highly non-trivial (non-convex and NP-hard in
worst case). Development of algorithms for low-rank matrix estimation
and their theoretical properties is an object of numerous studies in
statistics, machine learning, signal processing, applied mathematics,
etc. \cite{wright2009robust,candes2011robust,candes2009exact}. 
In this
paper we focus on the symmetric low-rank matrix estimation where
the matrix $Y_{ij}$ is symmetric, i.e. $Y_{ij}=Y_{ji}$ and the desired
decomposition is $Y_{ij} =   \lambda x^{\top}_i x_j + \xi_{ij} $, where $x_i\in
{\mathbb R}^r$.

A model for low-rank matrix estimation that can be solved exactly in the
large size limit using the techniques studied in this paper is the
so-called general-output low-rank matrix estimation model \cite{lesieur2015mmse,LesKrzZde17} where the matrix $Y_{ij}\in {\mathbb R}^{N
  \times N}$ is generated from a given probability distribution $P_{\rm
out}^{(0)}(Y_{ij}|w_{ij}^{(0)})$ with 
\beq
w_{ij}^{(0)}  \equiv \frac{\left(x_i^{(0)}\right)^T x^{(0)}_j}{\sqrt{N}}\, ,
\label{eq:Y}
\eeq 
and where each component $x_i^{(0)}\in {\mathbb
  R}^{r_0}$ is generated independently from a probability distribution $P_{0}(x_i^{(0)})$. 
The $N\times r$ matrix $X^{(0)}$ can then be seen as an unknown signal
that is to be reconstructed from the observation of the matrix
$Y_{ij}$.  

Following Bayesian approach, a way to infer $X^{(0)}$ given $Y$ is to analyze the posterior probability distribution
\begin{align}
 P(X|Y) = \frac{1}{Z_X(Y)} \prod_{1 \leq i \leq N} P_X(x_i)  \;
  \prod_{1 \leq i < j \leq N}  \exp\left[ g \left( Y_{ij} |
  \frac{x^\top_i x_j}{\sqrt{N}} \right) \right]\, ,
\label{eq:posterior}
\end{align}
where $P_X(x_i)$ is the assumed prior on signal components $x_i\in
{\mathbb R}^r$ and $P_{\rm out} \left( Y_{ij} | w_{ij} \right) =
\exp\left[ g \left( Y_{ij} | w_{ij} \right) \right]$ is the assumed model
for the noise. 
The normalization $Z_X(Y)$ is the partition function in the
statistical physics notation. And the posterior distribution
(\ref{eq:posterior}) is nothing else but the Boltzmann measure on the
$r$-dimensional spin variables $x_i$. 


In a Bayes-optimal estimation, we know exactly the model that generated the data, i.e. 
\beq
P_x=P_0,  \quad 
P_{\rm out}  = P_{\rm out}^{(0)}, \quad  r=r_0\, . \label{Bayes_optimal}
\eeq
 The Bayes-optimal
estimator is defined as the one that among all the estimators minimizes the mean-squared error to the ground
truth signal $x_i^{(0)}$. This is computed as the mean of the posterior
marginals or, in physics language, the local magnetizations. 

In this paper we will focus on
inference with the mismatching model where at least one of the above
equalities (\ref{Bayes_optimal}) does not hold. Yet, we will still aim to evaluate the
estimator that computes the mean of the posterior marginals 
\beq
          \hat x_i = \int x_i P(X|Y) \prod_{j=1}^N {\rm d} x_j \, .
\eeq
We will call this the marginalization estimator. 

Another common estimator in inference is the maximum posterior
probability (MAP)
estimator where 
\beq
       \hat X^{\text{MAP}} = \underset{X}{\text{argmax}} \; P(X|Y) \, .
\label{eq:MAP_estimator}       
\eeq
Generically, optimizing a complicated cost function is simpler than
marginalizing over it, because optimization is in general simpler than
counting. Thus MAP is often thought of as a simpler estimator to
evaluate. Moreover, in statistics the problems are usually set in a regime
where the MAP and the marginalization estimator coincide. However,
this is not the case in the setting considered in the present paper
and we will comment on the difference in the subsequent sections.

In our analysis we are interested in the \textit{high-dimensional statistics} limit, $N \to \infty$ whereas $r = \mathcal{O}(1)$. 
The factor $1/\sqrt{N}$ in Eq.~(\ref{eq:Y}) in this limit ensures that
the estimation error we are able to obtain is in a regime where it
goes from zero to randomly bad as the parameters vary.
This case is different from traditional statistics, where one is
typically concerned with estimation error going to zero as $N$ grows. 

In the $N \to \infty$ limit, the above defined model benefits of an
universality property in the noise channel
\cite{lesieur2015mmse,LesKrzZde17} (see \cite{krzakala2016mutual} for
a rigorous proof) as the estimation error depends on the function $g$
only through the matrices
\begin{align}
& S_{ij}=\left.\frac{\partial g(Y_{ij}|w)}{\partial w}\right|_{w=0}
\, , &
& \hat R_{ij}=-\left.\frac{\partial^2 g(Y_{ij}|w)}{\partial w^2}\right|_{w=0} \, .
\end{align}
Because of this universality, in the following we will restrict the
assumed output channel to be Gaussian with 
\begin{align}
& g \left( Y|w \right) = -\frac{(Y-w)^2}{2 \D} - \frac{1}{2} \log 2 \pi \D 
 \, , &
 & S_{ij}= \frac{Y_{ij}}{\D}
 \, , &
 & \hat R_{ij}= \frac{1}{\D} \, .
\label{eq:definition_S_R}
\end{align}
The ground truth channel $P_{\rm out}^{(0)}(Y_{ij}|w_{ij}^{(0)})$ that was used to generate the observation $Y_{ij}$
is also Gaussian centered around $w_{ij}^{(0)}=x_i^{(0)} x_j^{(0)}
/\sqrt{N}$ with variance $\Delta_0$.

\subsection{Ferromagnetically biased SK model and related work}

The numerical and example section of this paper will focus on one of
the simplest cases of of rank-one $r_0=r=1$ estimation with binary
signal, i.e.  \beq
P_0(x_i)=P_X(x_i)=\frac{1}{2}[\delta(x_i-1)+\delta(x_i+1)] \,
. \label{eq:SK_prior} \eeq In physics language such a prior
corresponds to the Ising spins and the Boltzmann measure
(\ref{eq:posterior}) is then the one of a Sherrington-Kirkpatrick (SK)
model \cite{SherringtonKirkpatrick75} with interaction matrix
$Y_{ij}$. After a gauge transformation $x_i \to s_i x_i^{(0)}$,
$\xi_{ij} \to \tilde \xi_{ij} x_i^{(0)} x_j^{(0)}$ this is equivalent
to the SK model at temperature $\Delta$ with random iid interactions
of mean $1/N$ and variance $\Delta_0/N$ (see for instance the
discussion in Sec. II.B of \cite{zdeborova2015statistical} or Sec. II.F
of \cite{zdeborova2010generalization}).

This variant of the SK model with
ferromagnetically biased interactions is very well known
in the statistical physics literature. The original paper
\cite{SherringtonKirkpatrick75} presents the replica symmetric phase
diagram of this model, \cite{de1978stability} computes the AT line
below which the replica symmetry is broken and Parisi famously
presents the exact solution of this model below the AT line in
\cite{parisi1979infinite}. A rather complete account for the physical
properties of this model is reviewed in \cite{MPV87}.

Note that while the replica solution and phase diagram of this model
is very well known in the physics literature, the algorithmic interpretation of the phase diagram in
terms of the AMP algorithm is recent. It is due to Bolthausen
\cite{bolthausen2014iterative} who noticed that in order for the TAP
equations \cite{ThoulessAnderson77} to converge and to reproduce the features of the well known phase
diagram, one needs to adjust the iteration indices in the TAP
equations. We will call the TAP equations with adjusted time indices
the AMP algorithm. Bolthausen proved that in the
Sherrington-Kirkpatrick model the AMP algorithm converges if and only
if above the de Almeida-Thouless (AT) line \cite{de1978stability}. 

The work of Bolthausen together with the development of AMP for sparse
linear estimation and compressed sensing \cite{DonohoMaleki09} revived
the interest in the algorithmic aspects of dense spin glasses. For a
review of the recent progress see e.g.~\cite{zdeborova2015statistical}. 
AMP for low-rank matrix estimation was studied e.g.~in 
\cite{rangan2012iterative,NIPS2013_5074,DeshpandeM14,lesieur2015phase,deshpande2015asymptotic,lesieur2015mmse},
its rigorous asymptotic analysis called the state evolution in
\cite{bayati2011dynamics,rangan2012iterative,
  deshpande2015asymptotic,javanmard2013state}. 

Correctness of the replica theory in the Bayes-optimal setting was
proven rigorously in a sequence of work
\cite{DeshpandeM14,deshpande2015asymptotic,krzakala2016mutual,barbier2016mutual,LelargeMiolane16,barbier2017stochastic,alaoui2018estimation}. While
the first complete proof is due to \cite{krzakala2016mutual}, the
Ising case discussed here is equivalent the Gauge-Symmetric
Sherrington-Kirkpatrick proven earlier in \cite{korada2009exact}.
Various applications and phase diagrams for the problem are discussed
in detail in e.g. \cite{LesKrzZde17}.

While it is well known in the physics literature \cite{de1978stability,parisi1979infinite,MPV87} that below the AT
line replica symmetry breaking is needed to describe correctly the
Boltzmann measure (\ref{eq:posterior}) the algorithmic consequences of
that stay unexplored up to date. There are several versions of the
TAP equations embodying a replica symmetry breaking structure in the literature
\cite{MPV87,yasuda2012replica} but they do not include the proper
time-indices and hence will not be able to reproduce quantitatively
the RSB phase diagram (just as it was the case for the TAP equations
before the work of \cite{DonohoMaleki09,bolthausen2014iterative}). 

In this paper we close this gap and derive approximate survey
propagation, an AMP-type of algorithm that takes
into account replica symmetry breaking. Using the state evolution
theory \cite{rangan2012iterative,DeshpandeM14,deshpande2015asymptotic}
we prove that the ASP algorithm reproduces the 1RSB phase diagram in
the limit of large system sizes. We study properties of the ASP
algorithm, resulting estimation error as a function
of the Parisi parameter,  its convergence and finite size
behavior.


\section{Properties of approximate message passing}

\subsection{Reminder of AMP and its state evolution}

In this section we recall the standard Approximate Message Passing
(AMP) algorithm.
Within the context of low-rank matrix estimation, the AMP equations are referred as Low-RAMP and are discussed extensively in \cite{LesKrzZde17}.
In the physics literature, the Low-RAMP would be equivalent to the TAP
equations  \cite{ThoulessAnderson77} (with corrected iteration indices)
for a model of vectorial spins with local magnetic fields and general kind of two-body interactions. 
In this sense, the Low-RAMP equations encompass as a special case the original
TAP equations for the Sherrington-Kirkpatrick model \cite{SherringtonKirkpatrick75}, for the Hopfield model \cite{Mezard17,MPV87} and for the
restricted Boltzmann machine \cite{GabTraKrz15,Traelal16,Mezard17,TubMon17}.


\subsubsection{Low-RAMP equations. }
\label{sec:lowRAMP}
Let us state the Low-RAMP equations to emphasize the differences and similarities with the replica symmetry breaking approach 
of Sec.~\ref{sec:the_cavity_approach_1RSB}.
The Low-RAMP algorithm evaluates the marginals of Eq.~(\ref{eq:posterior}) starting from 
the belief propagation (BP) equations \cite{yedidia2003understanding} (cf. the factor graph of Fig. \ref{fig:factor_graph_rep} in the case where $s=1$).
The main assumptions of BP is that the incoming messages are probabilistically
independent when conditioned on the value of the root. 
In the present case the factor $1/\sqrt{N}$ in Eq.~(\ref{eq:Y}) makes the interactions sufficiently weak 
so that the assumption of independence of incoming messages is plausible at the leading order in the large size limit. 
Moreover, this assumption is particularly beneficial in the case of continuous variables: 
since we have an accumulation of many independent messages,
the central limit theorem assures that it is sufficient to consider only means and variances to represent the result (\textit{relaxed-belief propagation})
instead of dealing with whole probability distributions.
To finally obtain the Low-RAMP equations, the further step is the so-called \textit{TAPification}: 
from the relaxed-belief propagation equations one notices that the
algorithm can be further simplified if instead of dealing with  $\mathcal{O}(N^2)$ messages associated to each directed edge,
one works with only node-dependent quantities. 
This generates the so-called Onsager terms. Keeping track of the correct time
indices under iteration in order to preserve convergence of the iterative scheme \cite{zdeborova2015statistical}, one ends up with the Low-RAMP equations \cite{LesKrzZde17}
\begin{align}
 B^t_{i} &=  \frac{1}{\sqrt{N}} \sum_{k=1}^N  \, S_{ki} {\hat x}_k^t
 - \left( \frac{1}{N} \sum_{k=1}^N S_{ki}^2 \sigma_{k}^t  \right) \;
           {\hat x}_i^{t-1}\, ,
 \label{eq:B_AMP}
 \\
  A^t_{i} &=  \frac{1}{N} \sum_{k=1}^N  \left[  {\hat R}_{ki} \left( {\hat x}_k^t \; {\hat x}_k^{t,\top} + \sigma_{k}^t  \right)
 - S_{ki}^2  \, \sigma_{k}^t \right]\, ,
 \\
 {\hat x}_i^{t+1} &=
 \frac{\partial f_{\text{in}}}{\partial B} \left[   A_{i}^t , \,
                    B_{i}^t \right]\, ,
 \label{eq:x_AMP}
  \\
 \sigma_{i}^{t+1} &=
 \frac{\partial^2 f_{\text{in}}}{\partial B^2} \left[   A_{i}^t , \,
                    B_{i}^t \right]\, ,
  \label{eq:sigma_AMP}
\end{align}
where
\begin{align}
 f_{\text{in}} \left[   A , \, B \right]
 &\equiv  \log  \left[ \int dx \, P_X(x) \, \exp \left( B^\top x -
   \frac{1}{2} x^\top A x \right)  \right] \, .
\end{align}
Note also that these equations can be further simplified replacing $S_{ij}^2$ by its mean, without changing the leading order in $N$ of the expressions.
This simplification is also exploited in the rigours derivation of the state evolution, cf. Sec. \ref{sec:rigorous_RS}.

Practically, one initializes  ${\hat x}_i^0 = \sigma_i^1 = 0$ and ${\hat x}_i^1$ to some small numbers, then evaluates  $B^1_{i}$ and $A^1_{i}$,
then ${\hat x}_i^2$ and $\sigma_i^2$ and keep going till convergence.
The values of ${\hat x}_i \in \mathbb{R}^r$ and $\sigma_i \in \mathbb{R}^{r \times r}$ at convergence are the estimators 
of the mean and the covariance matrix of the variable $x_i$.
The mean squared error (MSE) with respect to the ground truth
$X^{(0)}$ that is reached by the algorithm is then
\begin{align}
 \text{MSE} (\hat x) = 
 { \frac{1}{N} \sum_{i=1}^N \left\lVert \hat x_i - x^{(0)}_i \right\rVert^2  } \, .
\end{align}

\subsubsection{Bethe Free Energy.}
The fixed-points of the Low-RAMP equations are stationary points of the Bethe free energy of the model.
In general, the free energy of a probability measure is defined as the logarithm of its normalization\footnotemark[1].
\footnotetext[1]{Note as in physics the free energy is usually defined as the negative logarithm.}
Within the same assumptions of Low-RAMP, the free energy can be approximated using the Plefka expansion \cite{plefka1982convergence,georges1991expand},
obtaining the Bethe free energy \cite{LesKrzZde17}
\begin{align}
 \Phi_{\text{Bethe}} =&
 \max_{A_{i},B_{i}}
 \sum_{1 \leq i \leq N}  f_{\text{in}}(A_{i},B_{i})
 - B^\top_{i} \hat x_i + \frac{1}{2} \Tr \left[ A_{i} \left( \hat x_i \hat x_i^\top + \sigma_{i} \right) \right]
 \\
 & + \frac{1}{2}   \sum_{1 \leq i , j \leq N} \biggl\{  
 \frac{S_{ij}}{\sqrt{N}} \hat x_i^\top \hat x_j  
 -  \frac{\hat R_{ij}}{2N} \Tr \left[ \left( \hat x_i \hat x_i^\top + \sigma_{i} \right) \left( \hat x_j \hat x_j^\top + \sigma_{j} \right) \right] 
 \\
& \qquad \qquad \qquad +   \frac{S_{ij}^2}{2N} \Tr \left[ \hat x_i \hat x_i^\top \sigma_j + \sigma_i  \hat x_j \hat x_j^\top - \sigma_i \sigma_j  \right] \biggl\}  
\label{eq:freeEnergyBethe}
\end{align}
where $\hat x_i$ and $\sigma_{i}$ are considered explicit functions of $A$ and $B$ as in Eqs. (\ref{eq:x_AMP}), (\ref{eq:sigma_AMP}).
The Bethe free energy is useful to analyze situations in which the AMP equations have more than one fixed-point:
the best achievable mean squared error is associated to the largest free energy.
The Bethe free energy is also useful in order to use adaptive damping to improve the convergence of the Low-RAMP equations \cite{rangan2016fixed,vila2015adaptive}.

\subsubsection{State Evolution.}
\label{sec:state_evolution_RS}
One of the advantages of AMP-type algorithms is that one can analyze their performance in the large size limit 
via the so-called \textit{State Evolution} (SE), equivalent to the cavity method in the physics literature.
Assume that $Y$ is generated from the following process:
the signal $\{x_i^{(0)} \in \mathbb{R}^{r_0} \}$ is extracted from a probability distribution $ P_0 ( \left\{ x^{(0)} \right\} ) =\prod_{i=1}^N P_0 (x_i^{(0)})$
and then it is measured through a Gaussian channel of zero mean and variance $\Delta_0$, so that
the probability distribution of the matrix elements $Y_{ij}$ is given by 
\begin{align}
& P_{\text{out}}(Y_{ij}) = 
\exp\left[g^{(0)} \left( Y_{ij} \bigg| \frac{ x_i^{(0)}  x_j^{(0)} }{ \sqrt{N}} \right)\right]
\, , &
& g^{(0)}(Y|w) = -\frac{1}{2}\ln(2\pi \Delta_0) - \frac{1}{2\D_0}(Y-w)^2 \, . 
\label{eq:def_output_channel}
\end{align}
Note as here we are considering the general situation in which the prior $ P_0 (x) $ and the noise channel $P_{\text{out}}(Y_{ij})$ (and possibly also the rank $r_0$)
are not known and are in principle different from the ones used in the posterior Eq.~(\ref{eq:posterior}).
If both the prior and the channel are exactly known, we say to be in the \textit{Bayes-optimal case}.

Central limit theorem assures that the averages of $B_{i}$ and $A_{i}$ over $Y$ 
are Gaussian with 
\begin{align}
\overline{B_{i}^t} &= \frac{M^t}{\D} x_i^{(0)}
\, , &
\overline{(B_{i}^t)^2} - \overline{B_{i}^t}^2
&= \frac{\D_0}{\D^2} Q^t 
\, , &
 \overline{A_{i}^t} &= 
 \frac{\D_0}{\D^2} Q^t - \frac{\D_0-\D}{\D^2}  \Sigma_X^t \, ,
\label{cavity:eq}
\end{align}
while the variance of $A_{i}$ is of smaller order in $N$ and
where we have defined the order parameters
\begin{align}
& M^t =\frac 1N \sum_{k=1}^N\hat x_k^t \, x_k^{(0),\top} \in
  \mathbb{R}^{r \times r_0}
\, , &
& Q^t =\frac 1N \sum_{k=1}^N \hat x_k^t \, \hat x_k^{t,\top} \in \mathbb{R}^{r \times r}
\, , &
& \Sigma_X^t = \frac 1N \sum_{k=1}^N \sigma_{k}^t \in \mathbb{R}^{r
  \times r}\, .
\label{eq:def_MtQ_AMP}
\end{align}
Using then Eqs.~(\ref{eq:x_AMP}), (\ref{eq:sigma_AMP}) to fix self-consistently the values of the order parameters, one obtains the state evolution equations  \cite{LesKrzZde17}
\begin{align}
& M^{t+1} = \mathbf{E}_{x^{(0)},W} \left[  
 \frac{\partial f_{\text{in}}}{\partial B} \left(  A^t  ,
 B^t
 \right) x^{(0),\top} \right] \, ,
 \label{eq:SE_RS_M}
 \\
& Q^{t+1} = \mathbf{E}_{x^{(0)},W} \left[  
 \frac{\partial f_{\text{in}}}{\partial B} \left(   A^t  , B^t
 \right) \,
  \frac{\partial f_{\text{in}}}{\partial B} \left( A^t  , B^t
  \right)^\top \right] \, ,
 \\
& \Sigma_X^{t+1} = \mathbf{E}_{x^{(0)},W} \left[  
 \frac{\partial^2 f_{\text{in}}}{\partial B^2} \left(  A^t  , B^t
  \right)  \right] \, ,
 \label{eq:SE_RS_Sigma}
\end{align}
where $x^{(0)} $ is distributed according to $P_0(x)$, $W$ is a Gaussian noise of zero mean and unit covariance matrix
and we have defined
\begin{align}
& A^t \equiv \frac{\D_0}{\D^2} Q^t - \frac{\D_0-\D}{\D^2}  \Sigma_X^t 
\, , &
& B^t \equiv \frac{M^t}{\D} x^{(0)} + \sqrt{\frac{\D_0}{\D^2} Q^t } \; W \, .
\end{align}
Similarly, the large size limit of the Bethe free energy Eq.~(\ref{eq:freeEnergyBethe}) is given by
\begin{align}
 \Phi_{\text{RS}} = \max \left\{ \phi_{\text{RS}} \left( M,Q,\Sigma \right) , 
 \frac{\partial \phi_{\text{RS}}}{\partial M} =  \frac{\partial \phi_{\text{RS}}}{\partial Q} =  \frac{\partial \phi_{\text{RS}}}{\partial \Sigma} = 0\right\}
\end{align}
with
\begin{align}
\label{eq:RSfreeEnergy}
 \phi_{\text{RS}} \left( M,Q,\Sigma \right) &=
 \mathbb{E}_{x_0,W} \left[   f_{\text{in}} \left(  A  , B \right) \right] + \frac{\D_0}{4 \D^2} \Tr \left[Q Q^\top \right] -  \frac{1}{2 \D} \Tr \left[M M^\top \right]
 - \frac{\D_0 - \D}{4  \D^2} \Tr \left[(Q+\Sigma)(Q+\Sigma)^\top \right] \, .
\end{align}
This free energy coincides with the one obtained by replica theory under replica symmetric (RS) ansatz.


For Low-RAMP, the mean squared error (MSE) can be evaluated from the state evolution as
\begin{align}
 \text{MSE}_{\text{Low-RAMP}}  = \text{Tr} \left[ \mathbf{E}_{x_0} \left[ x_0 x_0^\top \right] - 2 M + Q \right] 
 \label{eq:MSE_lowRAMP}
\end{align}
allowing us to assess the typical performance of the algorithm.

\subsubsection{The rigorous approach}
\label{sec:rigorous_RS} 
The fact that state evolution accurately tracks the behavior of the AMP algorithm has been proven \cite{bayati2011dynamics,bolthausen2014iterative,rangan2012iterative,DeshpandeM14,deshpande2015asymptotic}. In this section, we shall discuss the main lines of this progress.

First, let us rewrite the AMP equations (\ref{eq:B_AMP}--\ref{eq:sigma_AMP}) as follows, using a vectorial form:
\begin{eqnarray}
{\bf B}^{t} &=& \frac S{\sqrt N} \hat {\bf x}^t - b^t \hat {\bf x}^{t-1} \, , \label{flo1-1}\\
{\bf x}^{t+1} &=&\eta_t({\bf B}^{t})\, , \label{flo1-2}
\end{eqnarray}
where the scalar quantity $b$ is computed as
\begin{eqnarray}
b^{t+1} = \frac{ {\mathbb E} [S^2] }N\sum_{i=1}^N  \eta'_t({\bf B}^{t}_i)\, , \label{flo1-3}
\end{eqnarray}
with $\mathbb E [S^2]=\Delta_0/\Delta^2$ is the average value of the square of each element of the matrix $S$. The link with the AMP equations written previously is direct when one choose the denoising function $\eta_t(B)$ to be
\begin{eqnarray}
\eta_t(B) &\vcentcolon=& \partial_B f_{\rm in}(B,A^t) \, . \label{flo1-4}
\end{eqnarray}
The strong advantage of the rigorous theorem is that it can be stated for {\it any} function $\eta_t()$ (under some Lipschitz conditions), and we shall make advantage of this in the next chapter.

With these notations, the state evolution is rigorous thanks to a set of works due to Rangan and Fletcher \cite{rangan2012iterative}, Deshpande and Montanari \cite{DeshpandeM14}, and Deshpande, Abbe, and Montanari \cite{deshpande2015asymptotic} \footnote{See in particular lemma 4.4 in \cite{deshpande2015asymptotic}. One can go from their notation to ours by a simple change or variable. First what they denote as $Y^{\rm DAM}$ corresponds to $Y\sqrt{\lambda}$, with $\lambda=\Delta_0^{-1}$, so that $Y = Y^{\rm DAM}/\sqrt{\lambda}$. The message passing is then easily mapped by the change of variable: $Y=S {\Delta}$, $B=x/(\Delta \sqrt{\lambda})$ and the denoising function $f_{\rm DAM}(x)$ is replaced by $\eta(B)$. }
\begin{theorem}[State Evolution for Low-RAMP \cite{rangan2012iterative,DeshpandeM14,deshpande2015asymptotic}] \label{theorem1} Consider the problem $Y=X^{(0)}[X^{(0)}]^T/\sqrt{N}+\sqrt{\Delta_0} \xi$ as in (\ref{eq:Y}), and define $S=Y/\Delta$.  Let $\eta_t(B^t)$ be a sequence of functions such that both $\eta$ and $\eta'$ and Lipschitz continuous, then the empirical averages \begin{align}
 M^{t}& = \frac 1N \sum_i \hat x^t_i x^{(0)}_i \,, &Q^{t} &= \frac 1N  \sum_i (\hat x^t_i)^2\,,&\Psi^t &= \frac 1N  \sum_i \psi(B_i^t, x^{(0)}_i)\,,
\end{align}
for a large class of function $\psi$ (see \cite{deshpande2015asymptotic}) converge, when $N\to  \infty$, to their state evolution predictions where 
\begin{align}
M_{\rm SE}^{t} &= \mathbb{E} \left[x^{(0)} \eta_t(Z)\right]\, , &Q^{t}_{\rm SE} &=  \mathbb{E}\left[ \eta_t(Z)^2 \right]\, ,& \Psi^t_{\rm SE} &=  \mathbb{E} \left[\psi(Z, x^{(0)}) \right]\,,
\end{align}
 where $Z$ is a random Gaussian variable with mean $\frac{M^t x^{(0)}}{\Delta}$ and variance $\frac{\Delta Q^t}{\Delta^2}$, and $x^{(0)}$ is distributed according to the prior $P_0$.
\end{theorem}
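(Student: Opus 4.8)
The plan is to reduce the claimed statement to the now-standard rigorous state evolution theorem for AMP iterations with matrix noise, which in the rank-one symmetric setting was established by Bayati--Montanari-type arguments and refined by \cite{rangan2012iterative,DeshpandeM14,deshpande2015asymptotic}. Concretely, I would first recognize that the iteration (\ref{flo1-1})--(\ref{flo1-3}) is exactly the ``symmetric'' AMP recursion driven by the Wigner-type matrix $S = Y/\Delta = X^{(0)}[X^{(0)}]^{\top}/(\Delta\sqrt{N}) + (\sqrt{\Delta_0}/\Delta)\,\xi$, i.e.\ a low-rank spike plus a GOE-like noise matrix with entry variance $\Delta_0/\Delta^2 = \mathbb{E}[S^2]$. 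The Onsager correction $b^{t} = \mathbb{E}[S^2]\,\frac1N\sum_i \eta'_{t-1}(B_i^{t-1})$ is precisely the one that decouples the iterates from the dependence structure of $S$, so that conditionally on the past, $\mathbf{B}^t$ behaves like a Gaussian vector with a rank-one mean term coming from the spike and an isotropic fluctuation term coming from the noise.

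The key steps, in order, would be: (i) fix the change of variables making the correspondence with the statement of \cite{deshpande2015asymptotic} explicit --- as sketched in the footnote, set $Y = Y^{\mathrm{DAM}}/\sqrt{\lambda}$ with $\lambda = \Delta_0^{-1}$, then $Y = S\Delta$, $B = x/(\Delta\sqrt{\lambda})$, and identify their denoiser $f_{\mathrm{DAM}}$ with our $\eta_t$ --- and check that the Lipschitz hypotheses on $\eta_t$ and $\eta'_t$ translate to theirs; (ii) invoke the conditioning technique (the ``master'' lemma, e.g.\ Lemma~4.4 of \cite{deshpande2015asymptotic}), which by induction on $t$ produces a Gaussian characterization of the joint law of $(\mathbf{B}^1,\dots,\mathbf{B}^t)$ conditionally on $X^{(0)}$, with covariance structure governed by the scalar recursions for $M^t$ and $Q^t$; (iii) read off that the effective field seen by coordinate $i$ at time $t$ is $Z_i \sim \mathcal{N}\!\big(\frac{M^t}{\Delta}x_i^{(0)},\ \frac{\Delta_0}{\Delta^2}Q^t\big)$, matching the stated mean $M^t x^{(0)}/\Delta$ and variance $\Delta_0 Q^t/\Delta^2$ (note the paper writes $\Delta Q^t/\Delta^2$, which I would flag as a typo for $\Delta_0 Q^t/\Delta^2$, consistent with (\ref{cavity:eq})); (iv) apply the accompanying concentration statement: for any pseudo-Lipschitz test function $\psi$ of the pair $(B_i^t, x_i^{(0)})$, the empirical average $\frac1N\sum_i \psi(B_i^t, x_i^{(0)})$ converges, as $N\to\infty$, to $\mathbb{E}[\psi(Z, x^{(0)})]$; then specialize $\psi$ to $x^{(0)}\eta_t(B)$, to $\eta_t(B)^2$, and to the generic class of $\psi$ to obtain the three displayed limits for $M^t_{\mathrm{SE}}$, $Q^t_{\mathrm{SE}}$, $\Psi^t_{\mathrm{SE}}$.

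One subtlety I would address explicitly is the role of the replacement of $S_{ij}^2$ by its mean $\mathbb{E}[S^2] = \Delta_0/\Delta^2$ in the Onsager term: I would argue (as the paper already notes after (\ref{eq:sigma_AMP})) that this replacement does not change the leading order in $N$, using a standard concentration bound for $\frac1N\sum_k S_{ki}^2 \sigma_k$ around its mean, so that the algorithm analyzed is genuinely the $b^t$-form (\ref{flo1-1})--(\ref{flo1-3}) to which the rigorous theorem applies verbatim. A second point is that the cited results are stated for the spiked-Wigner ensemble with a specific normalization of $\lambda$; I would carefully verify that our matrix $S$, after the stated substitutions, falls exactly in their model --- in particular that the spike strength and the noise variance combine into the single effective signal-to-noise ratio that their Lemma tracks, and that our order parameters $M^t, Q^t$ coincide with theirs under the change of variable.

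The main obstacle is not conceptual but bookkeeping: making the dictionary between the three cited papers' conventions and ours airtight, especially reconciling the normalizations of $Y$, $S$, $B$, the signal-to-noise parameter $\lambda = 1/\Delta_0$, and the interplay between the ``model'' temperature $\Delta$ (which enters only through $S = Y/\Delta$ and hence rescales $\eta_t$, $b^t$, and the Gaussian field) and the ``ground-truth'' noise $\Delta_0$ (which sets the Wigner variance). Once that dictionary is fixed, the proof is essentially a citation plus the routine verification that Lipschitz-ness of $\eta_t, \eta'_t$ is inherited from the choice $\eta_t = \partial_B f_{\mathrm{in}}$ together with the smoothness of $f_{\mathrm{in}}$ for the priors considered; the only genuine mathematical content --- the conditioning/induction argument that yields Gaussianity of the iterates and the concentration of separable functions --- is entirely supplied by \cite{rangan2012iterative,DeshpandeM14,deshpande2015asymptotic} and need not be reproved here.
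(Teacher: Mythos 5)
Your proposal follows essentially the same route as the paper: the theorem is established by a change of variables reducing the iteration (\ref{flo1-1})--(\ref{flo1-3}) to the spiked-Wigner AMP of \cite{rangan2012iterative,DeshpandeM14,deshpande2015asymptotic} (in particular Lemma~4.4 of \cite{deshpande2015asymptotic}), exactly as the paper does in its footnote dictionary, with the conditioning/induction argument entirely delegated to those references. Your additional observations --- flagging the variance $\frac{\Delta Q^t}{\Delta^2}$ as a typo for $\frac{\Delta_0 Q^t}{\Delta^2}$ (consistent with (\ref{cavity:eq})) and justifying the replacement of $S_{ij}^2$ by its mean --- are correct and only make the bookkeeping more explicit than the paper's own presentation.
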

This means that the variable $B_i^t$ converges to a Gaussian with mean $M^t X_0/\Delta$ and variance $\Delta_0 Q^t/\Delta^2$  as predicted within the cavity method in (\ref{cavity:eq}). This provides a rigorous basis for the analysis of AMP. In particular, one can choose $\eta_t(B) =\partial_B f_{\rm in}(B,A_{\rm SE}^t)$ with $A_{\rm SE}^t = \mathbb{E} \left[\eta'_t(Z)\right]$ and this covers the AMP algorithm discussed in this section.

\subsection{Phase diagram and convergence of AMP out of the Nishimori line}
\label{sec:phase_diagram_RS}

In the Bayes-optimal setting, where the knowledge of the model is
complete as in Eq.~(\ref{Bayes_optimal}), the statistical physics analysis 
of the problem presents important simplifications known as the Nishimori conditions \cite{NishimoriBook}.
Specifically, as direct consequence of the Bayesian formula and basic properties of probability distributions, it is easy to see that \cite{zdeborova2015statistical}
\begin{align}
&\text{Bayes-optimal: } & & \mathbb{E} \left[ f(x_1,x_2) \right] =
                            \mathbb{E} \left[ f(x^{(0)},x) \right] \, ,
\label{eq:Nishimori_condition}
\end{align}
where $f$ is a generic function and $x_1$, $x_2$ and $x$ are distributed according the posterior while $x^{(0)}$ is distributed according $P_0(x)$.
The most striking property of systems that verify
the Nishimori conditions is that there cannot be any replica symmetry breaking in the equilibrium solution of these
systems\footnote{Note, however, that the dynamics can still be complicated, and 
in fact a dynamic spin glass (d1RSB) phase can appear also in the Bayes-optimal case \cite{LesKrzZde17}.} \cite{nishimori2001absence}. 
This simplifies considerably the analysis of the Bayes-optimal inference. 
From the algorithmic point of view, the replica symmetry assures that the marginals are asymptotically exactly described 
by the belief-propagation algorithms \cite{zdeborova2015statistical}. 
In this sense, AMP provides an optimal analysis of Bayes-optimal inference. 
In real-world situations it is, however, difficult to ensure the satisfaction of the Nishimori conditions,
as the knowledge of the prior and of the noise is limited and/or the parameter learning is not perfect.
The understanding of what happens beyond the Bayes-optimal condition is then crucial.

Using state evolution Eqs.~(\ref{eq:SE_RS_M})-(\ref{eq:SE_RS_Sigma}) - or equivalently replica theory for a replica symmetric ansatz - 
one can obtain the phase diagram of Low-RAMP in a mismatching models setting.
As soon as we deviate from the Bayes-optimal setting and mismatch the models,
the Nishimori conditions are not valid anymore and so the replica symmetry is not guaranteed
and Low-RAMP is typically not optimal.
If the mismatch is substantial, the replica symmetry gets broken and we enter in a glassy phase.

\begin{figure}[h!]
\centering
\includegraphics[width=0.8\columnwidth]{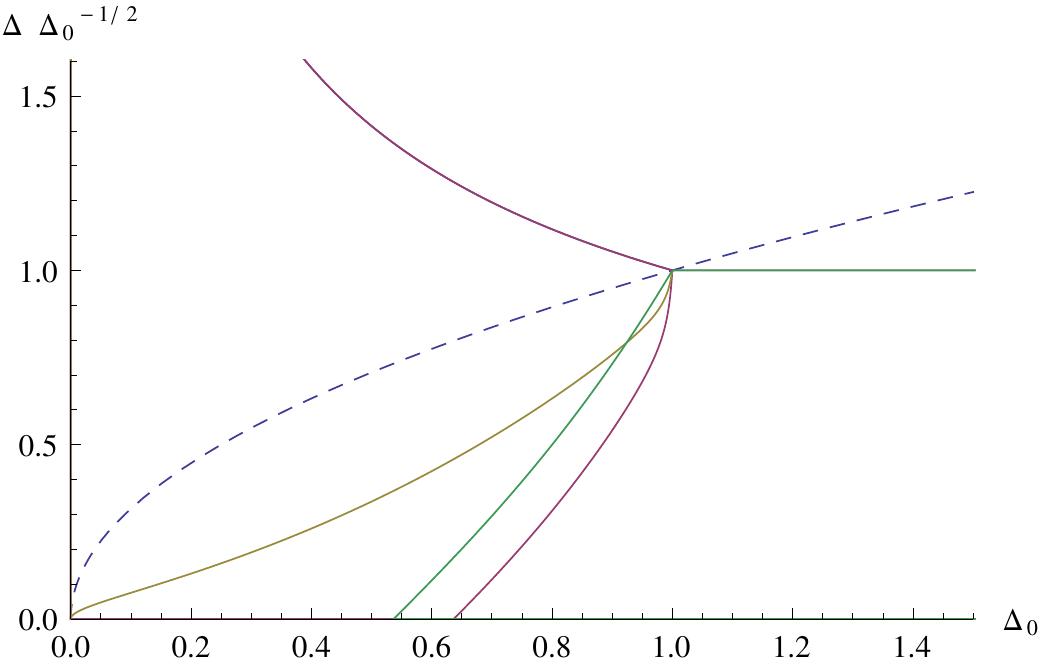} \\
\includegraphics[width=0.4\columnwidth]{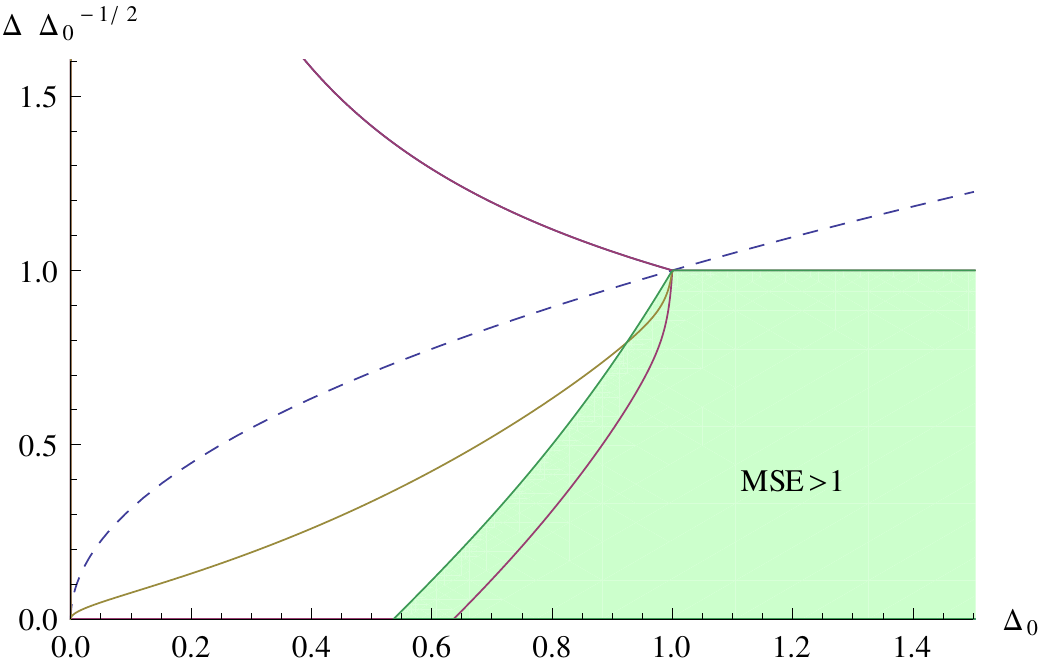}
\includegraphics[width=0.4\columnwidth]{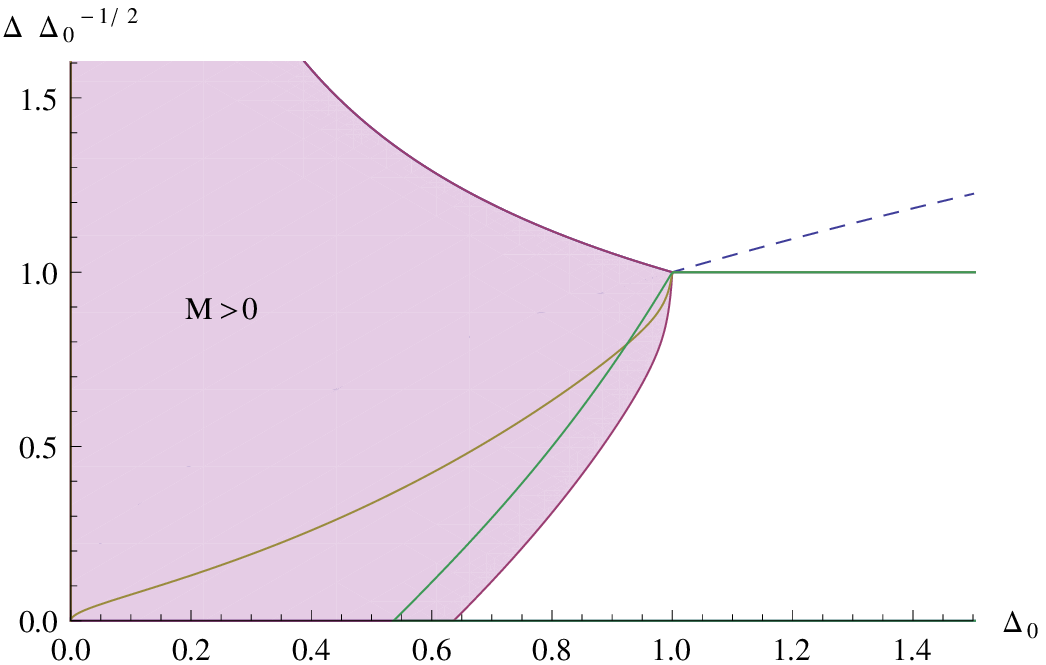}
\includegraphics[width=0.4\columnwidth]{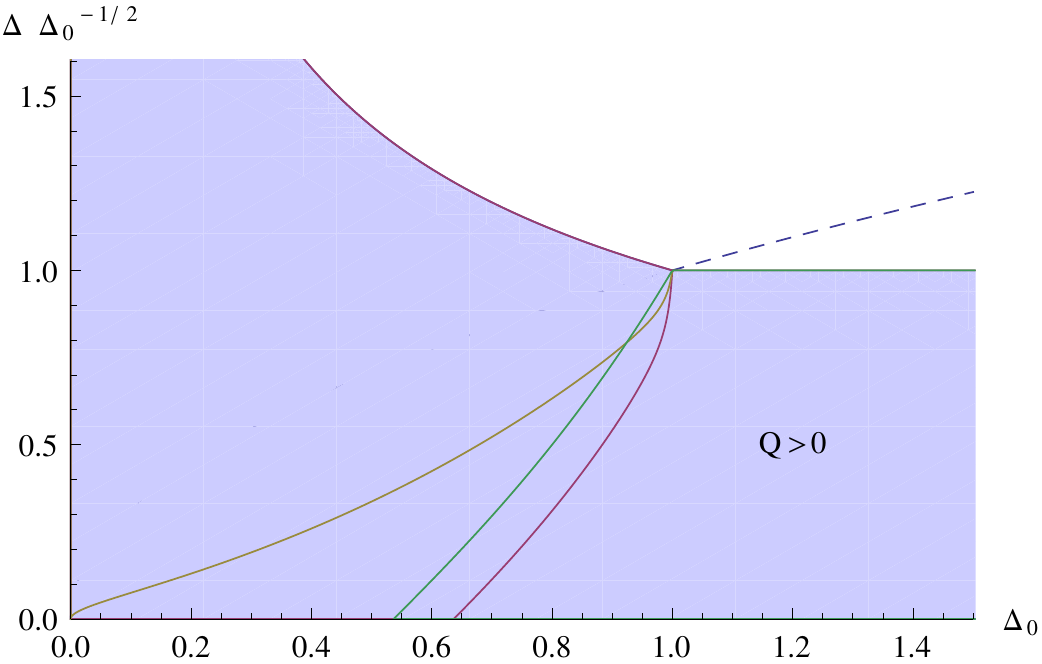}
\includegraphics[width=0.4\columnwidth]{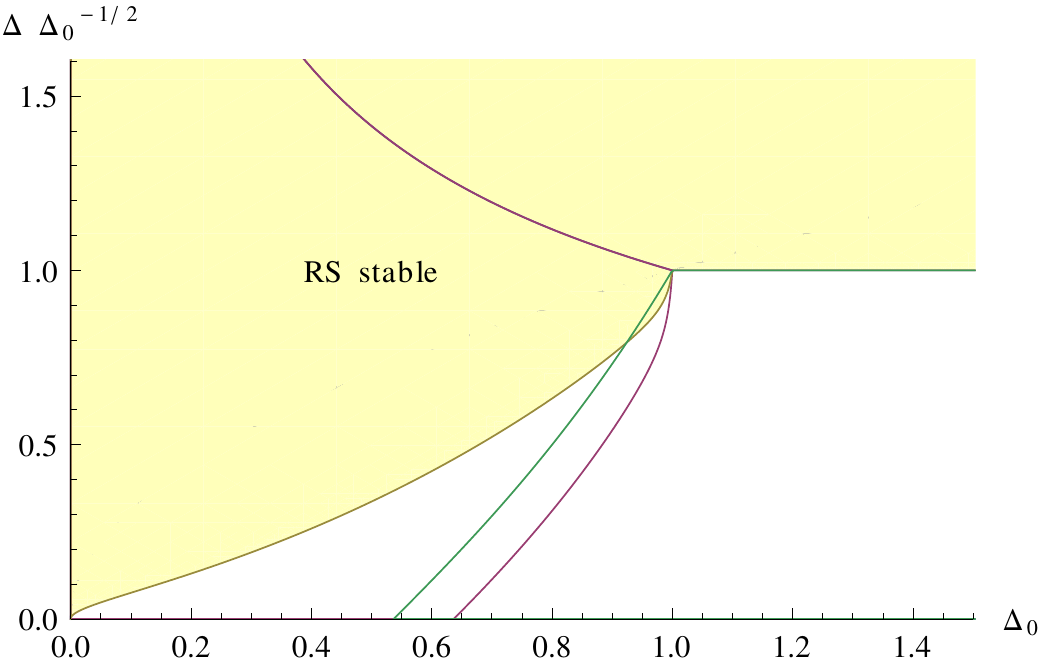}
\caption{Phase diagram for the planted SK model (rank-one matrix
  estimation with prior $\pm 1$ with
  probability $1/2$) with $\Delta_0$ being the truth noise variance
  and $\Delta$ the assume noise variance, obtained solving the SE Eqs.~(\ref{eq:SE_RS_M})-(\ref{eq:SE_RS_Sigma}). 
The dashed line is the Nishimori line $\Delta = \Delta_0$.
Four different regions are featured: $\text{MSE}>1$, $M>0$, $Q>0$ and
the region of RS stability. 
Note that the Nishimori line lies in the RS stability region.
For $\Delta_0 < 1$, the line where $M$ and $Q$ starts to be different from zero (coming from large $\D$) corresponds to $\Delta=1$.
The line $M=0$ crosses $\Delta = 0$ at $\Delta_0 = 2 / \pi \sim 0.6366$.
The lines $\text{MSE}=1$ intersects the RS stability line at $\Delta_0 \simeq 0.923$, $\Delta \simeq 0.757$,
so that for $0.923\lesssim \Delta_0 < 1 $ we have stable solutions with $\text{MSE}>1$.
}
\label{fig:phaseDiagramRS}
\end{figure}

\begin{figure}[h!]
\centering
\includegraphics[width=0.84\columnwidth]{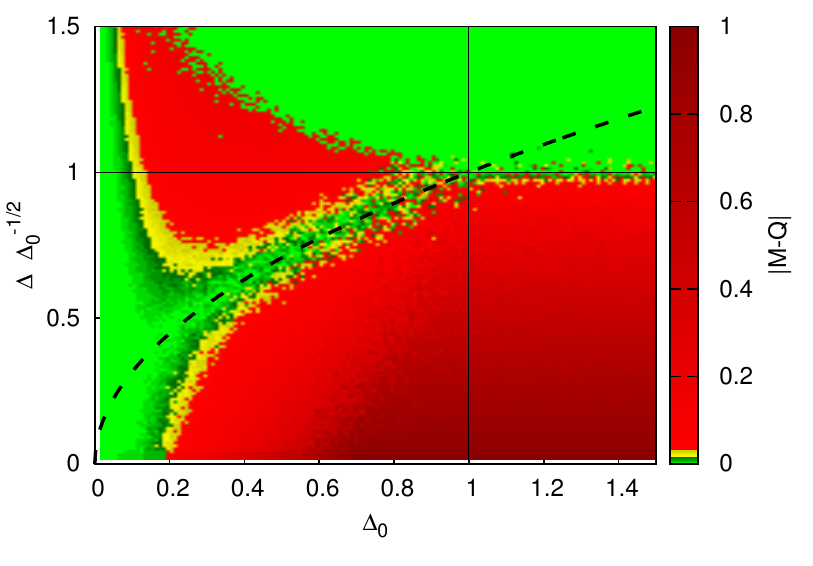} \\
\includegraphics[width=0.42\columnwidth]{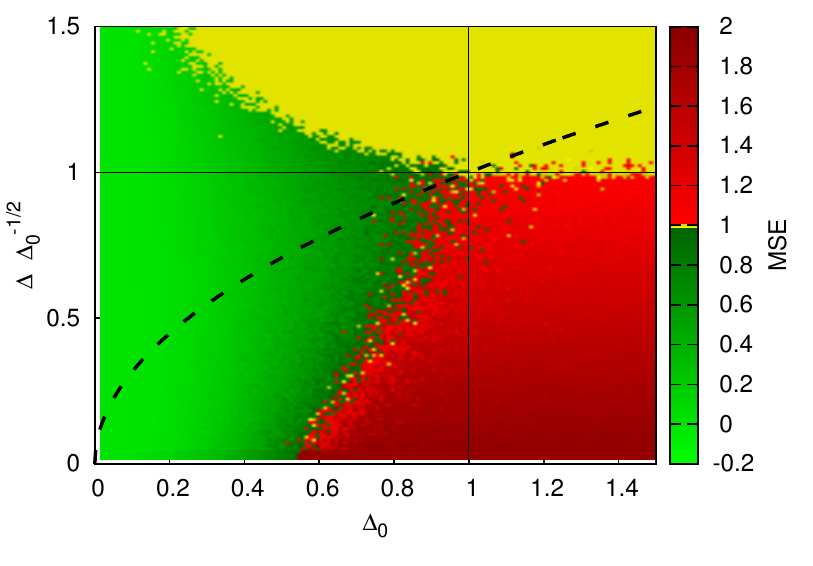}
\includegraphics[width=0.42\columnwidth]{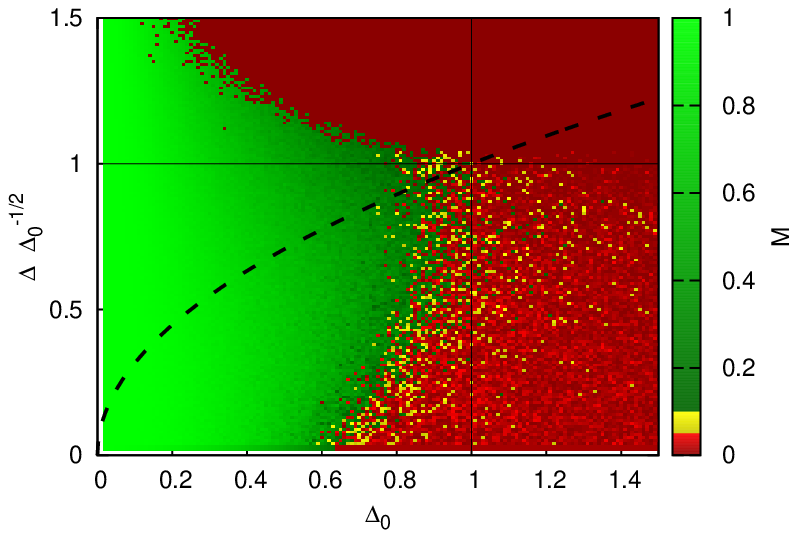}
\includegraphics[width=0.42\columnwidth]{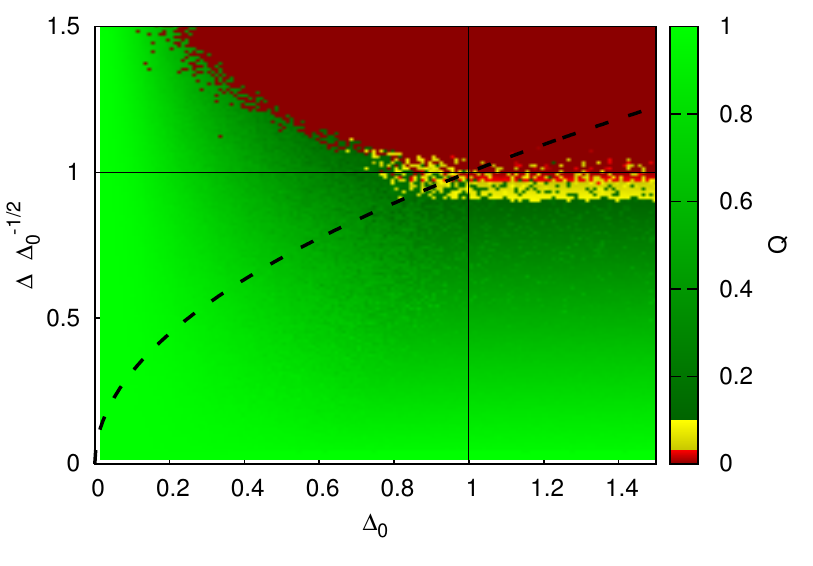}
\includegraphics[width=0.42\columnwidth]{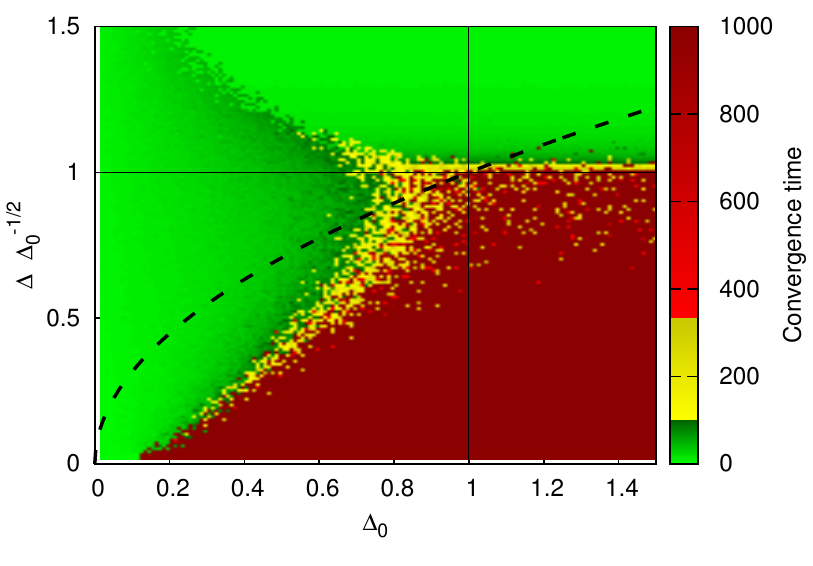}
\caption{Heat-map of the values obtained running AMP for size $N=5000$
  for the planted SK model. 
The dashed line is the Nishimori line $\Delta = \Delta_0$.
We show the value of $|M-Q|$ and $\text{MSE}$, $M$, $Q$ and the
convergence time. 
The iterations are stopped when the average change of a variable in a single iteration is less than $10^{-8}$ or at 1000 iterations if convergence is not reached.
All is in perfect agreement with the state evolution
prediction from Fig. \ref{fig:phaseDiagramRS}.
}
\label{fig:phaseDiagramAMP5k}
\end{figure}

To illustrate the behaviour of AMP out of the Bayes-optimal case, let us consider the planted Sherrington-Kirkpatrick (SK) model \cite{SherringtonKirkpatrick75},
a well-known model both in computer science and physics communities.
This is a particular case of the low-rank matrix estimation model Eq.~(\ref{eq:Y})
for a bimodal prior, cf. Eq.~(\ref{eq:SK_prior}). 
The output noise is Gaussian with the assumed variance $\Delta$
mismatching the ground-truth variances $\Delta_0$, cf. Eqs.~(\ref{eq:definition_S_R}),(\ref{eq:def_output_channel}).
In Fig.~\ref{fig:phaseDiagramRS} we show the phase diagram of the model obtained by the SE Eqs.~(\ref{eq:SE_RS_M})-(\ref{eq:SE_RS_Sigma}) as
$\Delta$ and $\Delta_0$ are independently changed. The dashed line is the Nishimori line where the Bayes-optimal condition $\Delta=\Delta_0$ holds.
For $\Delta_0 > 1$ the signal/noise ratio is too low and the estimation is impossible, $M$ is always zero and $\text{MSE} \geq 1$.
Note that a MSE larger than 1 is worse than the error achieved by a random guess from the prior.
At $\Delta_0 = 1$, the inference starts to be successful only on the Nishimori line.
As one is further and further away from the Nishimori line, the algorithm needs a lower and lower value of noise in order to achieve a positive overlap $M$ and a MSE lower than random guess. 
In particular, if $\D$ is too large, i.e. $\D > 1$, one always reaches the trivial fixed point $M=Q=0$.
The case $\D \to 0$ is also especially meaningful for an evaluation of the MAP estimator Eq.~(\ref{eq:MAP_estimator}).
In this case, the value of the overlap with the ground truth $M$ is zero till $\Delta_0 = 2 / \pi \sim 0.6366$ \cite{NishimoriBook}
while the MSE is larger than 1 till $\Delta_0 \simeq 0.55$, almost half than the equivalent threshold on the Nishimori line.

The SE equations describe the average (or large size) behaviour of Low-RAMP and as such they converge in all the phase diagram.
Single finite size instances of Low-RAMP do not converge point-wise
too far from Nishimori line, cf. Fig.~\ref{fig:phaseDiagramAMP5k}.

The analysis of the point-wise convergence on single instances of
Low-RAMP can be obtained looking at the Hessian eigenvalues of
Low-RAMP equations. A simple way to derive a necessary convergence
criterium is to ask whether or not the fixed-point of AMP is stable
with respect to weak, infinitesimal, random perturbations.
Let us see how it can be derived with the notations of section
\ref{sec:rigorous_RS}. If we perform, in Eq.~(\ref{flo1-1}), the
perturbation ${\bf B}^t \to {\bf B}^t+ {\bm \epsilon}^t$, where
${\bm \epsilon}$ is a i.i.d.  infinitesimal vector sampled from
${\cal N} ({0,\epsilon})$, then we may ask how this perturbation is
carried out at the next step of the iteration. From the recursion
(\ref{flo1-1})-(\ref{flo1-4}), we see that (a) $b^{t+1}$ is not
modified to leading order, as
$\eta'(B+\delta B)=\eta'(B) + \delta B \eta''(B)$ and the average in
(\ref{flo1-3}) makes the perturbation of $O(1/\sqrt N)$ and (b)
${\bf B}^{t+1} \to {\bf B}^{t+1}+ {\bm \epsilon}^{t+1}$ with
${\bm \epsilon}^{t+1}=S {\bm \epsilon}^{t} \eta'({\bf
  B^t})/\sqrt{N}$. The $\ell_2$ norm of the perturbation has thus been
multiplied, up to a constant ${\mathbb E}(S^2)=\Delta_0/\Delta^2$, by
the empirical average of $\eta'({\bf B})$. The fixed-point will be
stable if the perturbation does not grow. This yields, coming back to
the main notations of the article, to the following criterion:
\begin{align}
 \lambda = 1 - \frac{\D_0}{\D^2} \mathbb{E}_{x^{(0)},W} \left[ \left(
  \frac{\partial^2 f_{\rm in}(A,B)}{\partial B^2}  
 \bigg|_{A=\frac{\D_0}{\D^2} Q - \frac{\D_0-\D}{\D^2} \Sigma , \, B=
  \frac{M}{\D} x^{(0)} + \sqrt{\frac{\D_0 Q}{\D^2}} W } \right)^2
  \right]\, .
\label{eq:repliconRS}
\end{align}
%
For positive $\lambda$ the perturbation decreases and AMP algorithm converges, for negative
$\lambda$ it grows and the algorithm does not converge.  
Interestingly, condition (\ref{eq:repliconRS}) is equivalent to the stability of replica symmetric
(RS) solutions in the replica theory given by  RS replicon
\cite{de1978stability} (and indeed \cite{bolthausen2014iterative} has
shown rigorously in the SK model, convergence of the AMP algorithm in
a phase where the replica symmetric solution is stable). 
The line where the RS solution becomes unstable (and Low-RAMP stops converging point-wise) is shown in Fig.~\ref{fig:phaseDiagramRS} with a yellow line.
Note that, as expected, the RS stability line lies always below the Nishimori line and the two touch at the tri-critical point $\D=\D_0=1$.
For small $\D_0$, the stability line is roughly given by $\D \simeq 4 \sqrt{\D_0 / (2 \pi)} \exp(-1/(2\D_0))/3 $
and it touches the $\D = 0$ axis only for $\D_0 \to 0$. 
This means that for any finite $\D_0$ the RS solution becomes always unstable for $\D$ small enough.
%
The stability line is always above the $M=0$ line, cf. Fig.~\ref{fig:phaseDiagramRS}.
Note that it is possible to get RS stable solutions with $\text{MSE}>1$: 
so it is possible that AMP converges point-wise
to something worse than random guess.

In Fig.~\ref{fig:phaseDiagramAMP5k} we show the results obtained by running Low-RAMP for single instances of the same problem for $N=5000$.
We iterate the Low-RAMP equations, without any damping, for $10^3$ steps or we stop when the average change of the variables in a single step of iteration is less than $10^{-8}$.
The four regions highlighted in Fig. \ref{fig:phaseDiagramRS} are well
distinguishable. We also show the value of $|M-Q|$, which is zero on
the Nishimori line.
In particular, it is clear that the point-wise convergence is very fast (less than 100 iterations) well inside the RS stability region,
then the convergence time increases rapidly at the boundary of the RS stability region and then stops converging point-wise outside of it.

%
The scenario illustrated in this section for SK is quite general within low-rank matrix estimation problems.
Some differences arise when $x$ is very sparse: in this case on the Nishimori line there is a first order transition \cite{LesKrzZde17}
and the scenario becomes more complicated, with the coexistence of different phases close to the transition.

\begin{figure}[h!]
\centering
\includegraphics[width=0.49\columnwidth]{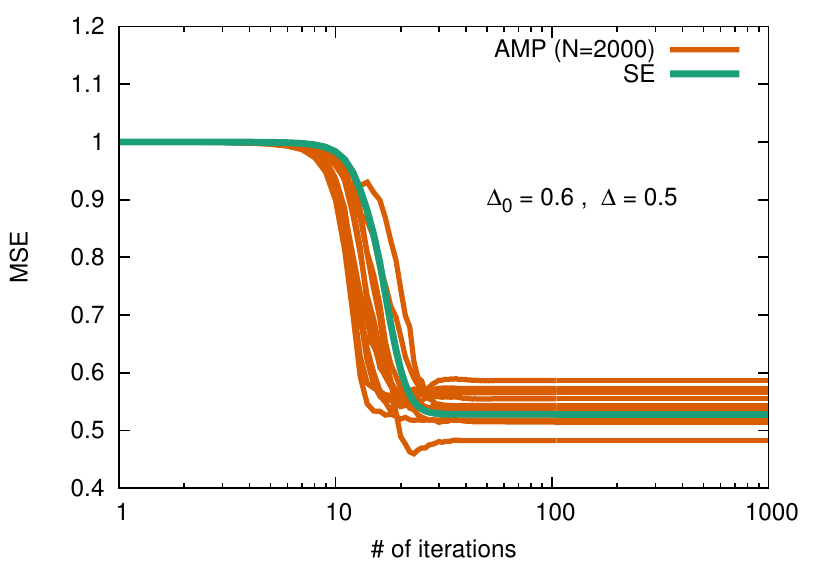} 
\includegraphics[width=0.49\columnwidth]{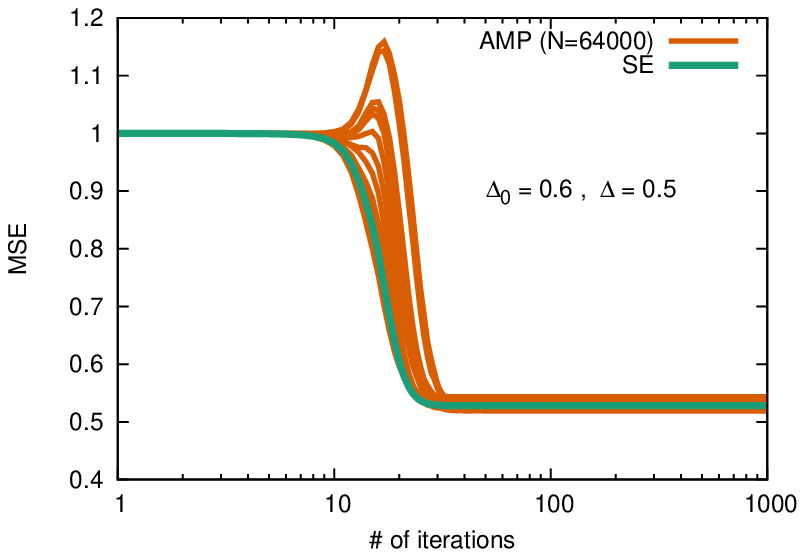} \\
\includegraphics[width=0.49\columnwidth]{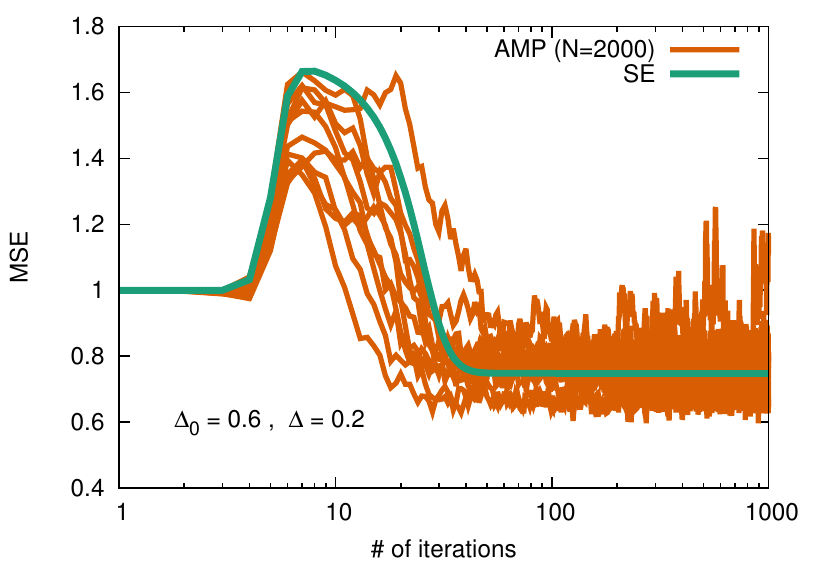} 
\includegraphics[width=0.49\columnwidth]{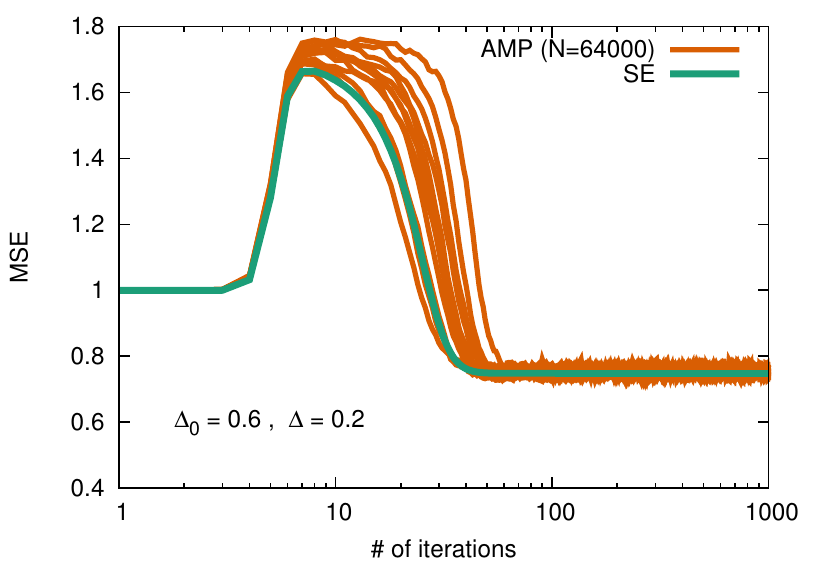} 
\caption{Value of the MSE under iterations of AMP equations for several instances for $N=2000$ (left) and $N=64000$ (right) in the planted SK model 
with $\Delta_0 = 0.6$ and $\Delta=0.5$ (top), $\Delta=0.2$ (bottom), compared with the SE.
}
\label{fig:convergence_AMP_vs_SE}
\end{figure}

\subsection{Optimality and restored Nishimori condition}
\label{sec:optimality_and_mismatching}

The Bayes-optimal setting, as reminded in Sec.~\ref{sec:phase_diagram_RS},  is a very special situation:
the Nishimori condition Eq.~(\ref{eq:Nishimori_condition}) guarantees that 
the replica symmetry is preserved and that AMP is optimal (in absence
of a 1st order phase transition).
One consequence of the Nishimori condition is that the typical overlap with the ground truth $M$ 
and the self-overlap $Q$, as defined in Eqs.~(\ref{eq:def_MtQ_AMP}), are equal.
In this case the MSE is then given by
\begin{align}
 \text{MSE} = \mathbb{E} \left[ x_0^2 \right] - 2M +Q 
 = \mathbb{E} \left[ x_0^2 \right] - M \, ,
\end{align}
and the minimum MSE is obtained at the maximum overlap $M$.
For mismatching models $M$ and $Q$ are typically different from each other and it is 
not immediate to realize under what conditions the MSE is minimized. 
We highlighted this property in the main panel of Fig.~\ref{fig:phaseDiagramAMP5k}:
apart from the trivial solutions $M=Q=0$ (for large $\D_0$ and $\D$) and $M=Q=1$ (for very small $\D_0$),
the overlap with the ground truth $M$ and the self-overlap $Q$ are very close only near the Nishimori line.
The Nishimori line is also the region in which AMP is optimal, so it is spontaneous to ask what is the relation (if any) between these two properties
in a general setting.

Consider a situation in which we have two or more parameters to tune. These can be parameters in the prior, the variance of the noise $\D$ or also the free parameters
in the general belief-propagation equations, see for example the parameter $s$ for the ASP algorithm that we will describe in Eqs.~(\ref{eq:TT})-(\ref{eq:DD1}).
The fundamental question is then for which set of values for the parameters it is possible
to obtain optimal estimation error, and how to find them algorithmically.
One evident answer is that if all the parameters are chosen to exactly match the values of the ground truth distribution,
one ends up on the Nishimori line and then we know that the inference
is optimal, and the MSE minimum. But this is not the unique answer.

\begin{figure}[h!]
\centering
\includegraphics[width=0.47\columnwidth]{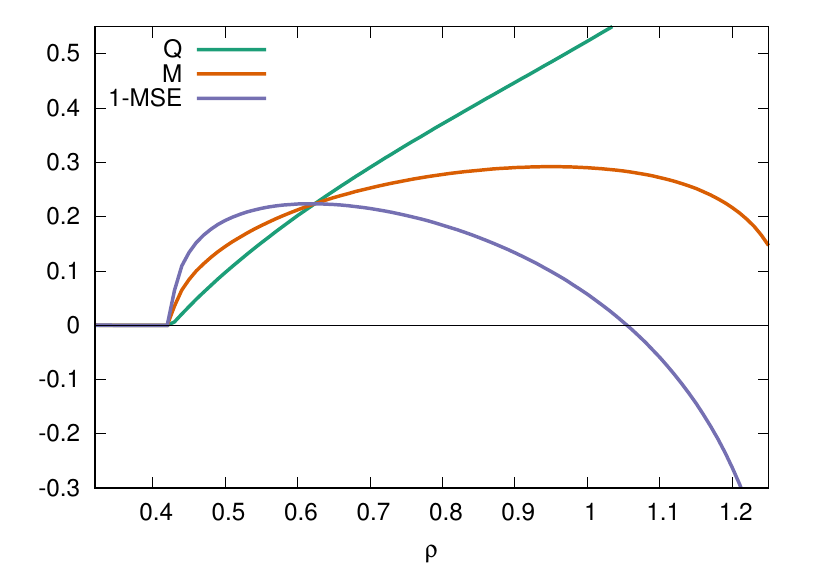} \\
\includegraphics[width=0.47\columnwidth]{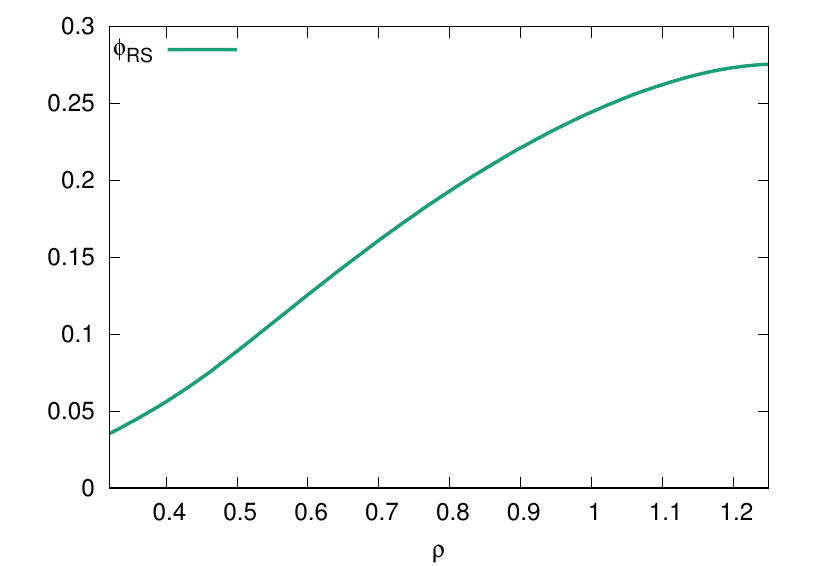}
\includegraphics[width=0.47\columnwidth]{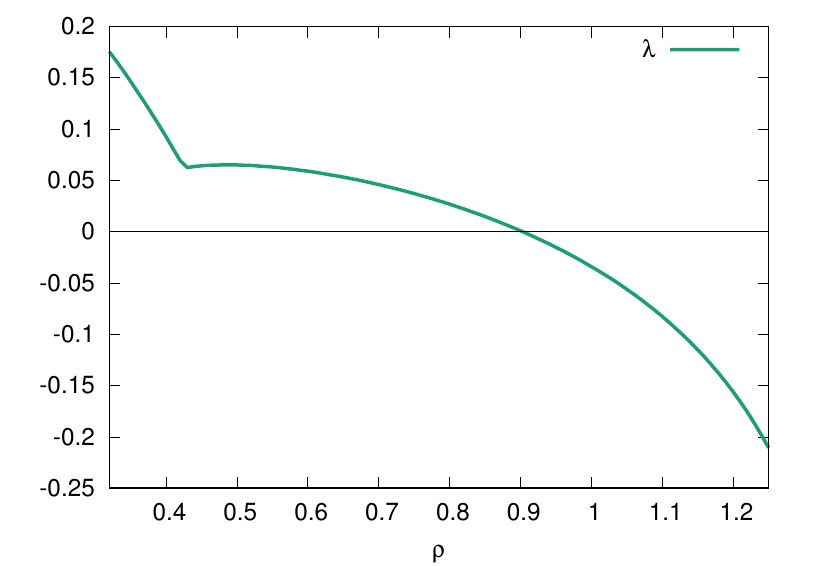} 
\caption{Value of $M$, $Q$ and MSE (top), Bethe free energy
  $\phi_{\text{RS}}$ (bottom left) and RS stability eigenvalue $\lambda$ (bottom right) as
obtained by SE changing the value of $\r$ in a Rademacher-Bernoulli prior for data generated by a planted SK model with $\D_0=0.8$
and assuming a noise channel with $\D=0.5$. The point where $M=Q$ and the
MSE is minimized at $\r = 0.623$.
}
\label{fig:M_Q_MSE_rho_neq_rho0}
\end{figure}

Let us illustrate what we found with a simple example.
The data $Y$  are generated by the planted SK model with $\D_0 = 0.8$.
In this case the Bayes-optimal inference, knowing the correct prior and noise, would get $\text{MSE} = 0.776$.
Consider now instead a situation in which one (wrongly!) assumes that the variance of the noise is $\D=0.5$ but
also, being not sure about the correct prior, assume a more general Rademacher-Bernoulli prior
\begin{align}
 P(x) = \frac{\r}{2} \delta \left( x-1 \right ) + \frac{\r}{2} \delta \left( x+1 \right ) + \left( 1 -\r \right) \delta(x) 
\end{align}
with some unknown parameter $\r$.
In Fig.~\ref{fig:M_Q_MSE_rho_neq_rho0} we show the values of $M$, $Q$ and MSE obtained by the SE Eqs.~(\ref{eq:SE_RS_M})-(\ref{eq:SE_RS_Sigma}) 
as a function of $\r$ while $\D$ is kept fixed at $0.5$.
Taking $\r = 1$, which corresponds to the correct prior, we obtain $Q \simeq 0.521$, $M\simeq 0.29$ and $\text{MSE} \simeq 0.94$, considerably worse than the Bayes-optimal error.
The maximum $M$ is obtained for $\r \simeq 0.95$, where $Q \simeq 0.48$ and $\text{MSE} \simeq 0.90$.
But, if we look to restore the optimality condition $M=Q$, we are led to accept the value $\r = 0.623$:
here $M=Q=0.224$ and $\text{MSE}=0.776$, equal to the value obtained by Bayes-optimal inference.

The above observations leads us to a hypothesis: Any combination of
the values of parameters that restores the condition $M=Q$, achieves the same performances of the Bayes-optimal setting.
We tested this hypothesis in several settings of the low-rank matrix estimation problem, 
including sparse cases with first order transition~\cite{LesKrzZde17}.
While we always found it true, the underlying reason for this eludes
us and is left for future work. 
The optimality from the restoration of the $M=Q$ condition is relevant in cases in which 
we assume a wrong functional form of prior, as in the previous example, 
so that it is indeed impossible to actually reduce to the Bayes-optimal setting.


Note that
it is nontrivial to turn the above observation into an algorithmic procedure.
A common parameter learning procedure is to maximize the Bethe free energy of the model, Eq.~(\ref{eq:freeEnergyBethe}).
This procedure gives asymptotically the Bayes-optimal parameters, if
learning of the exact prior and noise is possible.
Nevertheless, if the functional form of the prior and noise is incorrect, this procedure does not return the optimal values of the parameters -
that, in our hypothesis, would be the ones associated with the restoration of the $M=Q$ condition. 
In the previous example, the Bethe free energy is monotonically increasing in $[0,1]$ and has a local maximum only at $\r \simeq 1.26$, 
where the prior is not a well-defined probability distribution and the
MSE is larger than 1, cf. Fig.~\ref{fig:M_Q_MSE_rho_neq_rho0}. 
It is interesting to look at the RS eigenvalue Eq.~(\ref{eq:repliconRS}) for this solution, that is associated with the point-wise convergence of AMP.
The eigenvalue is positive for $\r \lesssim 0.90 $, and in particular is positive for the value $\r \simeq 0.623$, where the optimality condition $M=Q$ is restored.
Moreover, for very low $\r$, in this case for $\r \lesssim 0.42$, the only solution is the trivial $M=Q=0$, $\text{MSE}=1$.
These two extremes give the finite interval $0.42 \lesssim \r \lesssim 0.90$ in which one should indeed look to find the optimal $\r$. 
We will discuss further about this observation in Sec.~\ref{sec:use_of_ASP} about the use of ASP.


\section{The Approximate Survey Propagation: a 1RSB version of AMP}
\label{sec:the_cavity_approach_1RSB}

AMP is an established approach to analyze systems with a ferromagnet-like transition,
where one expects to have just two possible fixed-points of the iterations. 
There are, however, situations where there exists a huge number of fixed-points for the AMP equations. 
In particular, we have shown in the previous section that in the low-rank matrix estimation problem with mismatching models 
there is a region where the replica symmetry is broken and the AMP
algorithm does not converge. 
In this case one needs to use the cavity method in conjunction with
a \textit{replica symmetry breaking} (RSB) approach, as was introduced by M\'ezard and Parisi \cite{MP01,mezard2003cavity,MPZ02}. 
In the following we show how this approach can be carried out for the low-rank matrix estimation problem
and how this provides a systematic method to deal with mismatching models in a natural way.
Note that
we need to insist on the notion of independence of the noise elements,
that is an essential for belief-propagation-based approaches to work.

\subsection{Derivation of the ASP algorithm for the low-rank matrix estimation problem}

We derive here the 1-step replica symmetry breaking (1RSB) approximate massage passing, 
that we call \textit{approximate survey propagation} (ASP) algorithm,
for the low-rank matrix estimation problem.
To derive the correct equations in this case, let us consider a
replicated inference problem, basically turning the method of Monasson
\cite{Mo95} into a message passing algorithm. 
Given the matrix $Y_{ij}$ and $a=1,\ldots,s$, being $s$ the number of \textit{real replicas}, we assume that
\begin{align}
Y_{ij} = \frac{x_i^{(a)}x_j^{(a)}}{\sqrt{N}}+\xi_{ij}^{(a)}\, ,
\end{align}
where $\xi_{ij}^{(a)}$ are independent Gaussian noises with zero mean and variance $\Delta$.
The partition function becomes
\begin{align}
Z_{\text{rep}} \left( Y \right) = \int\left[ \prod_{a=1}^s
  \prod_{i=1}^N \de x_i^{(a)}
  P_X(x_i^{(a)})\right]\exp\left[\sum_{i\leq j}\sum_{a=1}^s
  g\left(Y_{ij}| w^{(aa)}_{ij} \right)\right]\, ,
\label{eq:rep_Z}
\end{align}
where $w^{(ab)}_{ij} \equiv x_i^{(a)} x_j^{(b)} / \sqrt{N} $
and $P_X(x)$ and $g$ have the same definition as the previous sections, cf. Eq.~(\ref{eq:posterior}).
If $s=1$ we get back the usual inference problem of Eq.~(\ref{eq:Y}).

In order to evaluate the marginals over the variables $\{x_i^{(a)}\}$, we need to write the BP equations for the replicated system.
In the following let us omit normalization factors when they are irrelevant (they can be determined a posteriori).
The BP equations are
\begin{align}
m_{i\to ij}(\underline x_i) &\sim \left[\prod_{a=1}^s
                              P_X(x_i^{(a)})\right] \prod_{k \neq j}
                              m_{ki\to i}(\underline x_i) \, ,\\
m_{ij\to i}(\underline x_i)&\sim \int\de \underline  x_j m_{j\to
                             ij}(\underline x_j)\exp\left[\sum_{a=1}^s
                             g\left(Y_{ij}|
                             w^{(aa)}_{ij}\right)\right]\, ,
\label{eq:BPeqs}
\end{align}
where we have introduced the notation $\underline x_i=\{x_i^{(1)}, \ldots, x_i^{(s)}\}$. 
This follows directly from the factor graph represented in Fig.~\ref{fig:factor_graph_rep}.

\begin{figure}
\centering
\includegraphics[scale=1.]{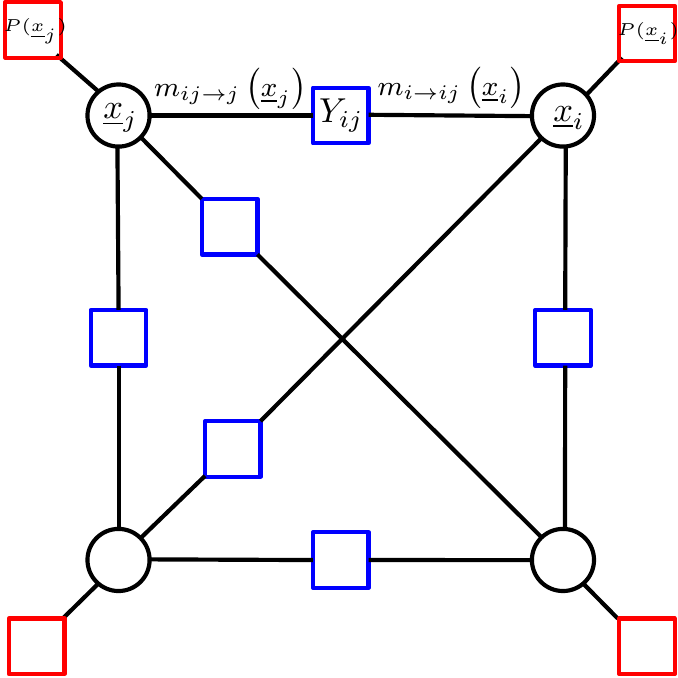}
\caption{The factor graph representing the structure of the replicated posterior measure whose partition function is described in Eq.~(\ref{eq:rep_Z}). 
The variable nodes are replicated variables $\underline x_i \equiv\{x_i^{(1)},\ldots,x_i^{(s)}\}$.}
\label{fig:factor_graph_rep}
\end{figure}

At this point we introduce a 1RSB parametrization for the cavity distributions 
\begin{align}
m_{j\to ij}(\underline x_j) \sim 
\int \de h \, \exp \left[-  \frac{1}{2\Delta^{(0)}_{j\to ij}}\left(h-\hat x_{j\to ij}\right)^2 \right]
\prod_{a=1}^s \exp \left[ -  \frac{1}{2\Delta^{(1)}_{j\to ij}}\left(x_j^{(a)}-h\right)^2 \right] \:.
\label{eq:1RSB_ansatz_cavity}
\end{align}
The form of this ansatz elucidates why we are considering a replicated system:
if the posterior measure Eq.~(\ref{eq:posterior}) develops a 1RSB structure, 
the phase space of the solutions of the inference problem gets clustered in exponentially many basins of solutions \cite{MPV87}.
Therefore, if we consider a set of $s$ \emph{real} copies of the same inference problem and we infinitesimally couple them, 
they will acquire a finite probability to fall inside the same cluster of solutions. 
This line of reasoning is exactly the same as the one considered to describe the real replica approach for glasses \cite{Mo95,CKPUZ17} and 
the ansatz of Eq.~(\ref{eq:1RSB_ansatz_cavity}) reproduces the infinite dimensional caging ansatz (at the 1RSB level) for hard spheres in infinite dimension \cite{CKPUZ14, CKPUZ13}.

The ansatz of Eq.~(\ref{eq:1RSB_ansatz_cavity}) allows for three relevant correlation functions
\begin{align}
\nonumber
\left\langle x_j^{(a)} \right\rangle &= \hat x_{j\to ij}  \, ,
\\
\left.\left\langle x_j^{(a)}x_j^{(b),\top} \right\rangle\right|_{a\neq
  b} &= \Delta_{j\to ij}^{(0)} + \hat x_{j\to ij} \, \hat x_{j\to
       ij}^\top\, ,
\label{eq:correlations}
\\
\left\langle x_j^{(a)}x_j^{(a),\top} \right \rangle &= \Delta_{j\to
                                                      ij}^{(1)}+\Delta_{j\to
                                                      ij}^{(0)} + \hat
                                                      x_{j\to ij} \,
                                                      \hat x_{j\to
                                                      ij}^\top\, ,
\nonumber
\end{align}
where the averages are taken with the measure defined in Eq.~(\ref{eq:1RSB_ansatz_cavity}).
The last two lines give access to the correlation between cavity messages from one copy to another.
Note that if $\Delta^{(0)}_{j\to ij}=0$ different replicas become uncorrelated. 
Therefore in the limit $\Delta^{(0)}_{j\to ij}\to 0$ one recovers the replica symmetric AMP. 
Equivalently, if one fixes $s=1$ the ansatz reduces naturally to the RS parametrization for the cavity messages and one gets back again to the AMP algorithm.

Given the 1RSB ansatz, we can plug Eq.~(\ref{eq:1RSB_ansatz_cavity}) in Eq. (\ref{eq:BPeqs}) to get
\begin{align}
m_{ij\to i}(\underline x_i)  \sim  &
\int \de h e^{-\frac{1}{2\Delta^{(0)}_{j\to ij}}\left(h-\hat x_{j\to ij}\right)^2}
\prod_{a=1}^s \int\de   x_j^{(a)} \exp\left[-\frac{1}{2\Delta^{(1)}_{j\to ij}}\left(x_j^{(a)}-h\right)^2
+g\left(Y_{ij}| w^{(ab)}\right)\right] \nonumber
\\
 \sim & \exp\left[\frac{1}{\sqrt N} S_{ij}\hat x_{j\to ij}\sum_{a=1}^s x_j^{(a)}
-\frac{1}{2N} \hat R_{ij}\left(\Delta^{(1)}_{j\to ij}\Delta^{(0)}_{j\to ij}+\left(\hat x_{j\to ij}\right)^2\right)\sum_{a=1}^s \left(x_j^{(a)}\right)^2\right.
\nonumber
\\
& \qquad +\left.\frac{1}{2N}S_{ij}^2 \Delta^{(0)}_{j\to
  ij}\left(\sum_{a=1}^sx_j^{(a)}\right)^2+\frac{1}{2N}S_{ij}^2
  \Delta^{(1)}_{j\to ij}\sum_{a=1}^s\left(x_j^{(a)}\right)^2\right]\, ,
\end{align}
where the matrices $S_{ij}$ and $\hat R_{ij}$ are the same as introduced in Eq.~(\ref{eq:definition_S_R}).
Plugging this result inside the first one of Eqs.~(\ref{eq:BPeqs}) we get
\beq
m_{i\to ij}(\underline x_i)\sim 
\left[\prod_{a=1}^s P_X(x_i^{(a)})\right]\exp\left[ T_{i\to ij}
  \sum_{a=1}^s x_i^{(a)} -\frac{1}{2} V_{i\to ij}^{(1)} \sum_{a=1}^s
  \left(x_i^{(a)}\right)^2+\frac 12 V_{i\to ij}^{(0)}
  \left(\sum_{a=1}^s x_i^{(a)}\right)^2\right]\, ,
\nonumber
\eeq
where we have defined
\begin{align}
T_{i\to ij} &= \frac{1}{\sqrt N} \sum_{k \neq j} S_{ik}\hat x_{k\to
              ik}\, ,
\label{eq:1RSB_BF_T}
\\
V^{(1)}_{i\to ij} &=\frac 1N \sum_{k \neq j}\left[\hat
                    R_{ik}\left(\Delta^{(1)}_{k\to
                    ik}+\Delta^{(0)}_{k\to ik}+\left(\hat x_{k\to
                    ik}\right)^2\right)-   S_{ik}^2 \Delta_{k\to
                    ik}^{(1)}  \right]\, ,\\
V^{(0)}_{i\to ij} &= \frac 1N \sum_{k \neq j} S_{ik}^2\Delta_{k\to ik}^{(0)}\:.
\label{eq:1RSB_BF_V0}
\end{align}
We can now close the equations.
We have
\begin{align}
\left \langle x_i^{(a)}\right \rangle &= \frac1s
                                        \frac{\partial}{\partial
                                        T_{i\to ij}}f_{\text{in}}
                                        \left[T_{i\to
                                        ij},V^{(1)}_{i\to
                                        ij},V^{(0)}_{i\to ij}\right]\,
  ,\\
\left \langle\left( x_i^{(a)}\right)^2\right \rangle &= -\frac{2}{s}
                                                       \frac{\partial}{\partial
                                                       V^{(1)}_{i\to
                                                       ij}}
                                                       f_{\text{in}}\left[T_{i\to
                                                       ij},V^{(1)}_{i\to
                                                       ij},V^{(0)}_{i\to
                                                       ij}\right]\, ,\\
\left \langle x_i^{(a)}x_i^{(b)}\right \rangle &= \frac{2}{s(s-1)}
                                                 \left[\frac{\partial}{\partial
                                                 V_{i\to
                                                 ij}^{(0)}}+\frac{\partial}{\partial
                                                 V_{i\to
                                                 ij}^{(1)}}\right]f_{\text{in}}\left[T_{i\to
                                                 ij},V^{(1)}_{i\to
                                                 ij},V^{(0)}_{i\to
                                                 ij}\right]\, ,
\end{align}
where we have introduced the function
\begin{align}
f_{\text{in}}\left[T,V^{(1)},V^{(0)}\right] &= 
\ln \int \de \underline x \left[\prod_{a=1}^sP_X(x^{(a)})\right] \exp\left[T \sum_{a=1}^s x^{(a)}-\frac 12 V^{(1)}\sum_{a=1}^s \left(x_i^{(a)}\right)^2+\frac 12 V^{(0)}\left(\sum_{a=1}^s x_i^{(a)}\right)^2\right]
\nonumber
\\
&=\ln \int \de h e^{-\frac 12 V^{(0)}h^2} \sqrt{\frac{V^{(0)}}{2\pi}} \left\{ \int \de x P_X(x) \exp\left[-\frac 12 V^{(1)}x^2+ (T+V^{(0)}h)x\right] \right\}^s\:.
\end{align}
Using Eqs.~(\ref{eq:correlations}), we obtain 
\begin{align}
\hat x_{i\to ij} &= \frac1s \frac{\partial}{\partial T_{i\to
                   ij}}f_{\text{in}} \left[T_{i\to ij},V^{(1)}_{i\to
                   ij},V^{(0)}_{i\to ij}\right] \, ,
\label{eq:1RSB_BF_x}
\\
\Delta^{(0)}_{i\to ij}&=\frac{2}{s(s-1)} \left[\frac{\partial}{\partial V^{(1)}_{i\to ij}}+\frac{\partial}{\partial V^{(0)}_{i\to ij}}\right]f_{\text{in}} \left[T_{i\to ij},V^{(1)}_{i\to ij},V^{(0)}_{i\to ij}\right]
-\frac{1}{s^2}\left(\frac{\partial }{\partial T_{i\to ij}}
                        f_{\text{in}} \left[T_{i\to ij},V^{(1)}_{i\to
                        ij},V^{(0)}_{i\to ij}\right] \right)^2\, ,\\
\Delta^{(1)}_{i\to ij}&=2\left[\frac{1}{s(1-s)}\left(\frac{\partial}{\partial V^{(1)}_{i\to ij}}+\frac{\partial}{\partial V^{(0)}_{i\to ij}}\right) -\frac{1}{s}\frac{\partial}{\partial V^{(1)}_{i\to ij}} \right]
 f_{\text{in}} \left[T_{i\to ij},V^{(1)}_{i\to ij},V^{(0)}_{i\to ij}\right] \:.
 \label{eq:1RSB_BF_delta1}
\end{align}
The Eqs.~(\ref{eq:1RSB_BF_x})-(\ref{eq:1RSB_BF_delta1}) together with Eqs.~(\ref{eq:1RSB_BF_T})-(\ref{eq:1RSB_BF_V0})
constitute the 1RSB-BP equations, within the Gaussian ansatz of
Eq.~(\ref{eq:1RSB_ansatz_cavity}), and with Parisi parameter $s$.

%
Finally, the moments of the marginal distributions are obtained by introducing
\begin{align}
T_{i} &= \frac{1}{\sqrt N} \sum_{k } S_{ik}\hat x_{k\to ik}\, ,\\
V^{(1)}_{i} &=\frac 1N \sum_{k}\left[\hat
              R_{ik}\left(\Delta^{(1)}_{k\to ik}+\Delta^{(0)}_{k\to
              ik}+\left(\hat x_{k\to ik}\right)^2\right)-   S_{ik}^2
              \Delta_{k\to ik}^{(1)}  \right]\, ,\\
V^{(0)}_{i} &= \frac 1N \sum_{k} S_{ik}^2\Delta_{k\to ik}^{(0)}\, ,
\end{align}
through which we have
\begin{align}
\nonumber
\hat x_{i} \equiv \frac 1s\sum_{a=1}^s \left[ x_i^{(a)} \right] 
=& \frac1s \frac{\partial}{\partial
   T_{i}}f_{\text{in}}\left[T_i,V^{(1)}_i,V^{(0)}_i\right] \, ,
\\
\nonumber
\left.\left[x_i^{(a)}x_i^{(b)}\right]\right|_{a\neq b}- \hat x_i^2
=\Delta^{(0)}_{i}
=& \frac{2}{s(s-1)} \left[\frac{\partial}{\partial V^{(1)}_{i}}+\frac{\partial}{\partial V^{(0)}_{i}}\right]f_{\text{in}}\left[T_i,V^{(1)}_i,V^{(0)}_i\right] 
\\
&-\frac{1}{s^2}\left(\frac{\partial
  f_{\text{in}}\left[T_i,V^{(1)}_i,V^{(0)}_i\right]}{\partial
  T_{i}}\right)^2\, ,
\\
 \left[\left(x_i^{(a)}\right)^2\right]-\left.\left[x_i^{(a)}x_i^{(b)}\right]\right|_{a\neq b} 
= \Delta^{(1)}_{i}
=& 2\left[\frac{1}{s(1-s)}\left(\frac{\partial}{\partial V^{(1)}_{i}}+\frac{\partial}{\partial V^{(0)}_{i}}\right) 
 -\frac{1}{s}\frac{\partial}{\partial V^{(1)}_{i}} \right]f_{\text{in}}\left[T_i,V^{(1)}_i,V^{(0)}_i\right]
 \, ,\nonumber
\end{align}
where the square brackets $\left[ \cdot \right]$ indicate the average over the replicated posterior measure defined in Eq.~(\ref{eq:rep_Z}).
The complexity of the 1RSB-BP algorithm described in
Eqs.~(\ref{eq:1RSB_BF_T})-(\ref{eq:1RSB_BF_V0}), (\ref{eq:1RSB_BF_x})-(\ref{eq:1RSB_BF_delta1}) 
can be reduced by working directly with the moments of the real marginals instead using the moments of the cavity marginals.
This \emph{TAPification} procedure is well known in inference problems (cf. Sec. \ref{sec:lowRAMP} and references therein). 
The result is the ASP algorithm for the low-rank matrix estimation problem:
\begin{align}
T_{i}^t =& \frac{1}{\sqrt N} \sum_{k } S_{ik}\hat x_{k}^t 
- \frac 1N \hat x_i^{t-1} \sum_{k}S_{ik}^2\left(\Delta^{(1),t}_k + s
           \Delta^{(0),t}_k\right) \, , \label{eq:TT}
\\
V^{(1),t}_{i} =&\frac 1N \sum_{k}\left[ \hat R_{ik}\left(\Delta^{(1),t}_{k}+\Delta^{(0),t}_{k}+\left(\hat x_{k}^t\right)^2\right)
-   S_{ik}^2 \Delta_{k}^{(1),t}  \right] \, ,\label{eq:VV1}
\\
V^{(0),t}_{i} =& \frac 1N \sum_{k} S_{ik}^2\Delta_{k}^{(0),t}\, , \label{eq:VV0}
\\
\hat x_{i}^{t+1} =& \frac1s \frac{\partial}{\partial
                    T_{i}}f_{\text{in}}\left[T_i^t,V^{(1),t}_i,V^{(0),t}_i\right]
                    \, , \label{eq:MM}
\\
\Delta^{(0),t+1}_{i}=&\frac{2}{s(s-1)} \left[\frac{\partial}{\partial V^{(1)}_{i}}+\frac{\partial}{\partial V^{(0)}_{i}}\right]f_{\text{in}}\left[T_i^t,V^{(1),t}_i,V^{(0),t}_i\right] 
-\frac{1}{s^2}\left(\frac{\partial
                       f_{\text{in}}\left[T_i^t,V^{(1),t}_i,V^{(0),t}_i\right]}{\partial
                       T_{i}}\right)^2\, , \label{eq:DD0}
\\
\Delta^{(1),t+1}_{i} =& 2\left[\frac{1}{s(1-s)}\left(\frac{\partial}{\partial V^{(1)}_{i}}+\frac{\partial}{\partial V^{(0)}_{i}}\right) 
-\frac{1}{s}\frac{\partial}{\partial V^{(1)}_{i}}
                        \right]f_{\text{in}}\left[T_i^t,V^{(1),t}_i,V^{(0),t}_i\right]\, . \label{eq:DD1}
\end{align}
Strictly speaking we have different algorithms for different values of the Parisi parameter $s$. 
We will comment more about the use of this parameter in Sec.~\ref{sec:optimality_and_mismatching}.
Note that for $s=1$ the equations depend only on $\sigma_i = \Delta^{(1)}_{i} + \Delta^{(0)}_{i} $ and the algorithm 
reduces to the Low-RAMP Eqs.~(\ref{eq:B_AMP})-(\ref{eq:sigma_AMP})
with $B_i = T_i$ and $A_i = V^{(1)}-V^{(0)}$.
Observe also that, similarly to what happens in Low-RAMP, the equations can be further simplified replacing $S_{ij}^2$ by its mean, without changing the leading order in $N$.
This simplification allows to express Eqs.~(\ref{eq:TT})-(\ref{eq:VV0}) simply as matrix multiplications, cf. Sec. \ref{sec:rigorous_1RSB}.
The introduction of a RSB structure in the algorithm allows ASP to  
converge point-wise and to obtain a low MSE also in regions where the RS solution is unstable
and AMP does not converge point-wise, cf. Fig.~\ref{fig:AMP_vs_1RSB}.

The fixed-points of the ASP equations are stationary points of the following free energy:
\begin{align}
 \Phi^{\text{1RSB}}_{\text{Bethe}} =&
 \max_{T_i, V^{(0)}_i, V^{(1)}_i} 
 \sum_{1 \leq i \leq N} \frac{1}{s} f_{\text{in}} \left( T_i,V^{(0)}_i,V^{(1)}_i \right)
 - T_i \hat x_i  + \frac{1}{2} \left( V^{(1)}_i - V^{(0)}_i \right)
                                      \D^{(1)}_i + \frac{1}{2} \left(
                                      V^{(1)}_i - s V^{(0)}_i \right)
                                      \left( \hat x_i^2 + \D^{(0)}_i
                                      \right) \nonumber
 \\
 & + \frac{1}{2}   \sum_{1 \leq i , j \leq N} \biggl\{  
 \frac{S_{ij}}{\sqrt{N}} \hat x_i \hat x_j  
 -  \frac{\hat R_{ij}}{2N} \left( \hat x_i^2 + \Delta^{(1)}_{i} + \Delta^{(0)}_{i} \right) \left( \hat x_j^2 + \Delta^{(1)}_{j} + \Delta^{(0)}_{j} \right)  +\nonumber
 \\
& \qquad \qquad \qquad 
+   \frac{S_{ij}^2}{2N}  \left[ 2 \hat x_i^2 \left( \Delta^{(1)}_{j} + s \Delta^{(0)}_{j} \right) 
 - \left( \Delta^{(1)}_{i} +s \Delta^{(0)}_{i} \right) \left( \Delta^{(1)}_{j} +s \Delta^{(0)}_{j} \right)  \right] \biggl\}  \label{eq:Bethe_1RSB}
\end{align}
This free energy is known in statistical physics literature as the 1RSB potential and is useful to compare different fixed-points at same
$s$, in the same spirit of the RS case. 
Note that when one tries to compare fixed-points of ASP that are associated to different values of $s$, the extremization of this free energy does not 
correspond to the minimum MSE, cf. Sec.~\ref{sec:use_of_ASP}.

\begin{figure}[h!]
\centering
\includegraphics[width=0.495\columnwidth]{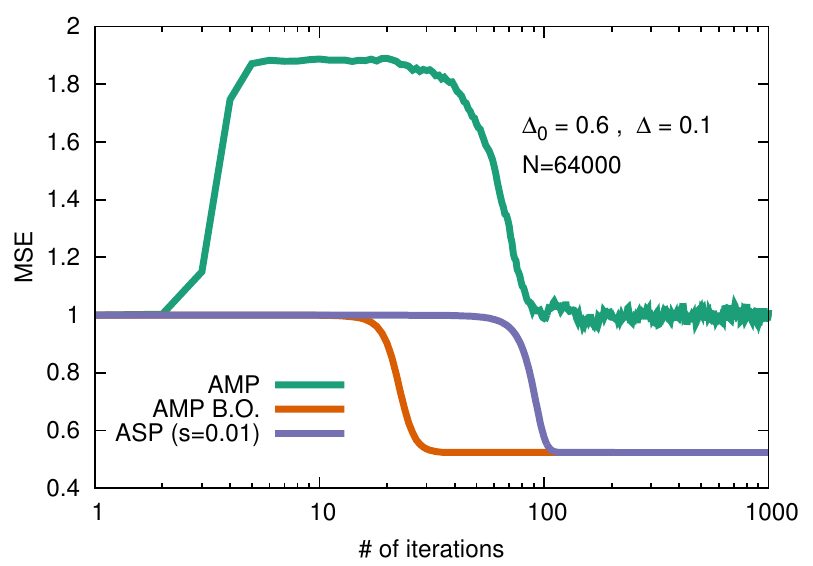} 
\includegraphics[width=0.495\columnwidth]{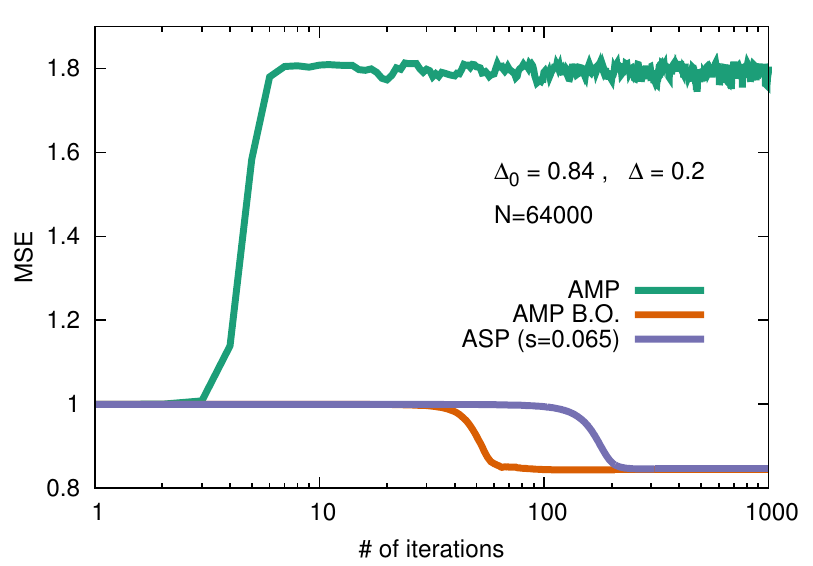} 
\caption{MSE under iterations of ASP for $N=64000$ in the planted SK model 
with $\Delta_0 = 0.6$, $\Delta=0.1$, $s=0.01$ (left) and $\Delta_0 = 0.84$, $\Delta=0.2$, with $s=0.065$ (right).
We compare ASP with AMP run on the same data with the same $\Delta$ as ASP (green lines) and AMP run in the Bayes-optimal setting $\Delta=\Delta_0$ (orange lines).
In these situations, well inside the RS instability region, AMP does not converge point-wise and the associated MSE is typically equal or larger than 1.
The ASP algorithm instead converges point-wise with a MSE close to the AMP run in the Bayes-optimal setting ($\Delta=\Delta_0$), 
at least for some value of $s$, cf. Sec.~\ref{sec:use_of_ASP}.
}
\label{fig:AMP_vs_1RSB}
\end{figure}

\subsection{The 1RSB state evolution equations}

The ASP algorithm can be implemented on a single instance of the inference problem.
Moreover, in the same lines as for AMP (cf. Sec. \ref{sec:state_evolution_RS}) it is possible to obtain the state evolution equations for ASP.
One assumes that $Y$ is generated through the process of first extracting
the signal $\{x_i^{(0)} \}$ from a probability distribution $ P_0 ( \{x_i^{(0)} \}) =\prod_{i=1}^N P_0 (x_i^{(0)})$ and,
then, it is measured through a Gaussian channel of zero mean and variance $\Delta_0$ so that
$P(Y_{ij}) = \exp\left[g^{(0)}\left(Y_{ij}| w_{ij}^{(0)} \right)\right]$ with $g^{(0)}(Y|w)$ given in Eq.~(\ref{eq:def_output_channel}).
Once again, central limit theorem assures that the average over $Y$ of the variables $T_{i}$ of Eq.~(\ref{eq:TT}) become Gaussian  with 
\begin{align}
\overline{T_i^t}&= \frac{M^t}{\D} x_i^{(0)}
\, , &
\overline{\left(T_i^t \right)^2}&= \frac{\D_0}{\D^2}Q^t
                                  +\left(\frac{M^t}{\D}x_i^{(0)}\right)^2\, ,
\end{align}
where $M$ and $Q$ are the usual order parameters (cf. Eq.~(\ref{eq:def_MtQ_AMP}))
\begin{align}
& M^t =\frac 1N \sum_{k=1}^N\hat x_k^t x_k^{(0)}
\, , &
& Q^t =\frac 1N \sum_{k=0}^N \left(\hat x_k^t \right)^2\:.
\label{def_MtQ} 
\end{align}
Instead $V_i^{(1)}$ and $V_i^{(0)}$, as defined in Eq.~(\ref{eq:VV1})-(\ref{eq:VV0}), at leading order in $N$ are concentrated around their mean value
\begin{align}
& V^{(0),t} =\overline{V_i^{(0),t}} = \frac{\D_0}{\D^2}\D^{(0),t}  
 \, , &
& V^{(1),t} =\overline{V_i^{(1),t}}= \frac 1\D \left(\D^{(1),t}+\D^{(0),t}+Q^t
  \right)-\frac{\D_0}{\D^2}\D^{(1),t}\, ,
\end{align}
and we have defined the order parameters
\begin{align}
& \D^{(0),t} = \frac 1N \sum_{k=1}^N \Delta_k^{(0),t}  
\, , &
& \D^{(1),t} = \frac 1N \sum_{k=1}^N \Delta_k^{(1),t}\:.
\label{eq:definition_delta0_delta1}
\end{align}
Starting from this, the parameters $M$, $Q$, $\Delta^{(1)}$ and $\Delta^{(0)}$ are fixed self-consistently using Eqs.~(\ref{eq:MM})-(\ref{eq:DD1}).
Let us consider $M$.  From Eq.~(\ref{eq:MM}) we get that
\begin{align}
\nonumber
M^{t+1} =& \frac 1N \int \left[\prod_{i=1}^N\de x_i^{(0)}P_0 \left(x^{(0)}_i\right)\right]\left(\sum_{k=1}^N\hat x_k^{t+1} \, x_k^{(0)}\right)
\\
\label{eq:M1}
=& \frac{1}{s} \int \de \underline x^{(0)} P_0 (\underline  x^{(0)})  \cdot \\
& \cdot \left.\frac{\partial}{\partial T} f_{\text{in}} \int \frac{\de w}{\sqrt{2\pi}}e^{-w^2/2}
\left[T^t, \frac 1\D 
\left(\D^{(1),t}+\D^{(0),t}+Q^t \right)-\frac{\D_0}{\D^2}\D^{(1),t}, \, 
\frac{\D_0}{\D^2}\D^{(0),t}\right]\right|_{T=\frac{M^t}{\Delta}x^{(0)}+\sqrt{\frac{\D_0}{\D^2}Q^t}W}\:.
\nonumber
\end{align}
Therefore, defining 
\begin{align}
{\mathbb{E}}_{x^{(0)},W}(A) = \int \frac{\de W}{\sqrt{2\pi}}e^{-W^2/2} \int \de \underline x^{(0)} P_0 (\underline x^{(0)}) A\: 
\end{align}
and
\begin{align}
\label{eq:def_T_SE}
T^t &=\frac{M^t}{\Delta}x^{(0)}+ \sqrt{\frac{\Delta_0
      Q^t}{\Delta^2}}W\, ,\\
V^{(1),t} &= \frac 1\D \left(\D^{(1),t}+\D^{(0),t}+Q^t
            \right)-\frac{\D_0}{\D^2}\D^{(1),t}\, , \label{eq:def_V1_SE}\\
V^{(0),t}&=\frac{\D_0}{\D^2}\D^{(0),t}\, ,
\label{eq:def_V0_SE}
\end{align}
one can rewrite Eq.~(\ref{eq:M1}) as
\begin{align}
M^{t+1} =\frac 1s {\mathbb{E}}_{x^{(0)},W}\left[
  \frac{\partial}{\partial T}
  f_{\text{in}}\left[T^t,V^{(1),t},V^{(0),t}\right] \,
  x^{(0)}\right]\, .
\label{eq:1RSB_SE_M}
\end{align}
Using Eqs.~(\ref{eq:DD0})-(\ref{eq:DD1}) and the definitions given in Eq.~(\ref{def_MtQ})-(\ref{eq:definition_delta0_delta1}) one gets similarly
\begin{align}
Q^{t+1} &=\frac{1}{s^2}
          {\mathbb{E}}_{x^{(0)},W}\left[\left(\frac{\partial}{\partial
          T} f_{\text{in}}\left[T^t,V^{(1),t},V^{(0),t}\right]\right)^2\right]\, ,
\label{eq:1RSB_SE_Q}
\\
\Delta^{(0),t+1}&={\mathbb{E}}_{x^{(0)},W}\left[\frac{2}{s(s-1)}
                  \left[\frac{\partial}{\partial
                  V^{(1)}}+\frac{\partial}{\partial
                  V^{(0)}}\right]f_{\text{in}}\left[T^t
                  ,V^{(1),t},V^{(0),t}\right]
                  -\frac{1}{s^2}\left(\frac{\partial
                  f_{\text{in}}\left[T^t
                  ,V^{(1),t},V^{(0),t}\right]}{\partial
                  T}\right)^2\right]\, ,
\label{eq:1RSB_SE_Delta0}
\\
\Delta^{(1),t+1}&={\mathbb{E}}_{x^{(0)},W}\left[2\left[\frac{1}{s(1-s)}\left(\frac{\partial}{\partial
                  V^{(1)}}+\frac{\partial}{\partial V^{(0)}}\right)
                  -\frac{1}{s}\frac{\partial}{\partial V^{(1)}}
                  \right]f_{\text{in}}\left[T^t,V^{(1),t},V^{(0),t}\right]\right]\, ,
\label{eq:1RSB_SE_Delta1}
\end{align}
that are the state evolution equations of ASP, or 1RSB-SE. 
The solution of the state evolution equations coincides with the result of the replica theory for the 1RSB structure 
provided $\D^{(0)}=q_1-q_0$ , $Q=q_0 $ and $\Delta^{(1)} = q_d-q_1$
(cf. Appendix \ref{sec:equivalence_SE_replica}). 
The 1RSB-SE provides the typical asymptotic behaviour of ASP. 
In Fig.~\ref{fig:ASP_and_SE} we show how single instances of ASP
converge to the 1RSB-SE for large sizes.

Note that we derived the 1RSB eqs.~(\ref{eq:1RSB_SE_M}-\ref{eq:1RSB_SE_Delta1}) as
  a state evolution of the ASP algorithm without using the replica
  trick in any way. We used the aid of real replicas in the derivation
of the ASP algorithm, but we could have simply postulated the algorithm
and derive (or prove, see section \ref{sec:rigorous_1RSB}) its state
evolution anyway. Our approach thus provides a concrete algorithmic
meaning to the 1RSB equations, that can be understood independently of
the replica method. The advantage of this algorithmic
interpretation is that the ASP algorithm follows its state evolution
even if the statistical properties of the model are {\it not} described by 1RSB.

\begin{figure}[h!]
\centering
\includegraphics[width=0.47\columnwidth]{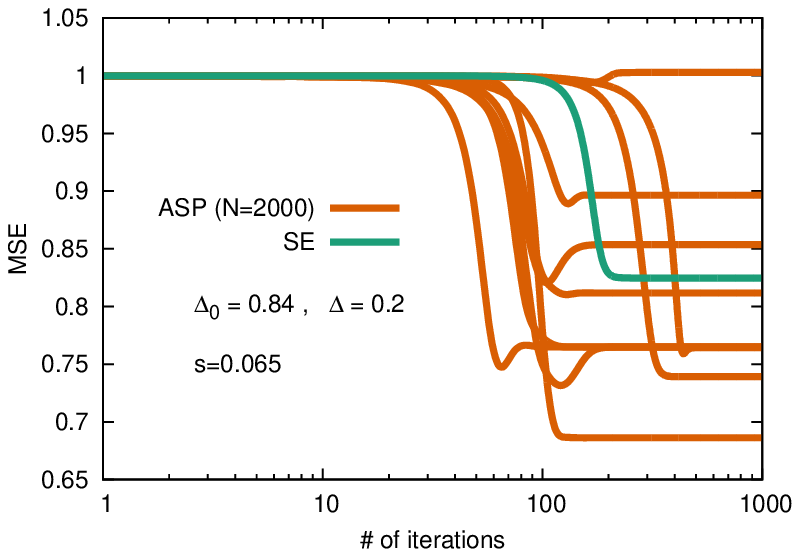} 
\includegraphics[width=0.47\columnwidth]{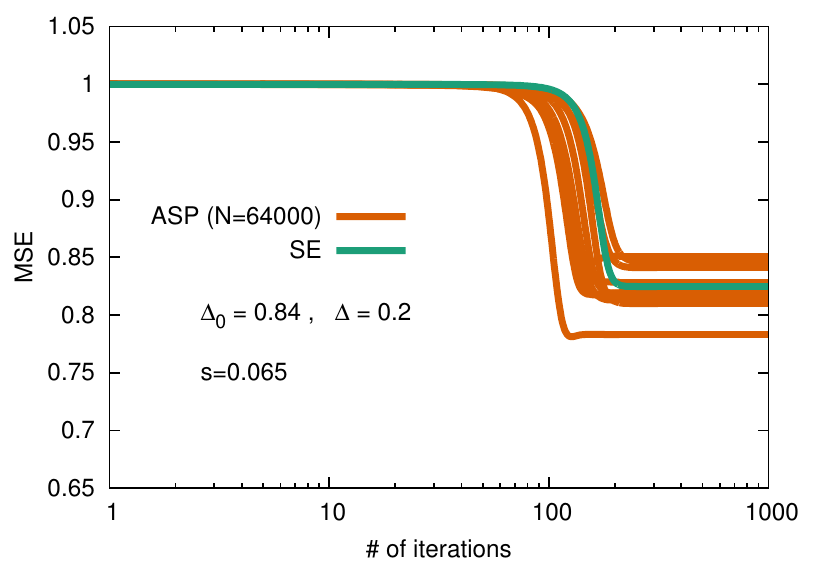} \\
\includegraphics[width=0.47\columnwidth]{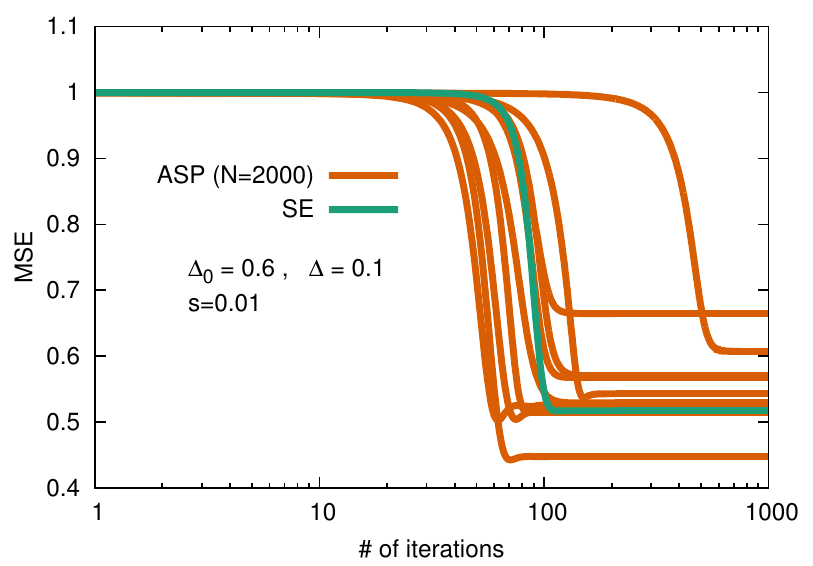} 
\includegraphics[width=0.47\columnwidth]{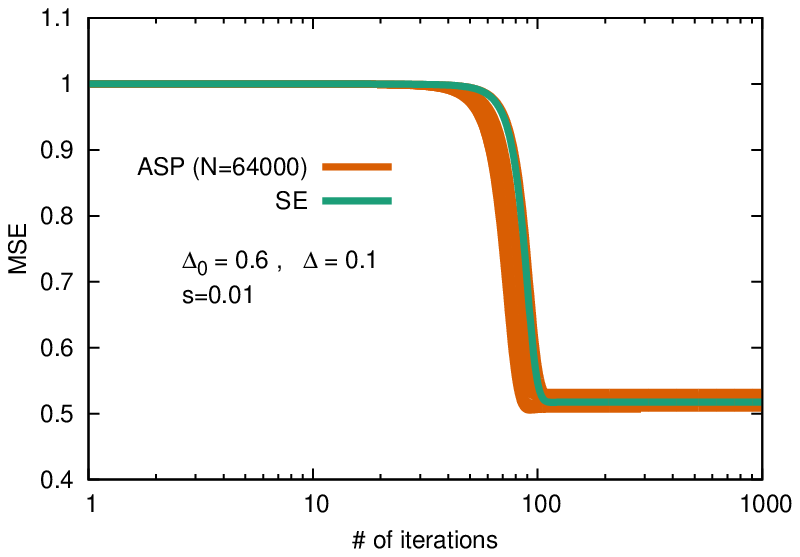} 
\caption{Iterations of ASP for system sizes $N=2000$ (left) and
  $N=64000$ (right) compared with SE for the planted SK model with $\Delta_0 = 0.84$, $\Delta=0.2$, $s=0.065$ (top)
and $\Delta_0 = 0.6$, $\Delta=0.1$, $s=0.01$ (bottom). Note that for these values of $\D_0$ and $\D$ we are in the region where RS solution is unstable, cf. Fig.~\ref{fig:phaseDiagramRS}.
}
\label{fig:ASP_and_SE}
\end{figure}

\subsubsection{The 1RSB free energy and complexity}
The algorithmic interpretation of the 1RSB is not
limited to the fixed point equations, but concerns the free energy as
well. In the same way AMP can be interpreted as extremizing the replica
symmetric Bethe free energy (\ref{eq:freeEnergyBethe}), ASP can be
interpreted as extremizing the 1RSB Bethe
free energy (\ref{eq:Bethe_1RSB}).

The 1RSB-SE equations corresponds to the stationary points of the free
energy,  which we give here without derivation, as it can
  readily be adapted from the RS one:
\begin{align}
 \Phi_{\text{1RSB}} = \max \left\{ \phi_{\text{1RSB}} \left( M,Q,\Delta^{(0)},\Delta^{(1)} \right) , 
 \frac{\partial \phi_{\text{1RSB}}}{\partial M} =  \frac{\partial \phi_{\text{1RSB}}}{\partial Q} 
 =  \frac{\partial \phi_{\text{RS}}}{\partial \Delta^{(0)}} =  \frac{\partial \phi_{\text{RS}}}{\partial \Delta^{(1)}} = 0\right\}
\end{align}
with
\begin{align}
\phi_{\text{1RSB}} \left( M,Q,\Delta^{(0)},\Delta^{(1)} \right) =&
 s \frac{\D_0}{4 \D^2} Q^2 - \frac{1}{2 \D} M^2 - \frac{\D_0 - \D}{4 \D^2} \left( Q + \D^{(0)} + \D^{(1)} \right)^2 +
\nonumber
 \\
& + (1-s) \frac{\D_0 }{4 \D^2} \left( Q + \D^{(0)}  \right)^2
 + \frac{1}{s} {\mathbb{E}}_{x^{(0)},W} \left[ f_{\text{in}} \left(
  T,V^{(0)},V^{(1)} \right) \right]\, ,
\label{eq:1RSB_freeEnergy}
\end{align}
where $T$, $V^{(0)}$ and $V^{(1)}$ are functions of the order parameters as for Eqs.~(\ref{eq:def_T_SE})-(\ref{eq:def_V0_SE}). 
This free energy coincides with the one obtained by replica theory under 1RSB ansatz
and reduces to Eq.~(\ref{eq:RSfreeEnergy}) for $s=1$. 
From the previous expression one can obtain the value of $s$ that extremizes the free energy.
From the point of view of physics, this value of $s$ would be the one that describes the equilibrium states of the system in a 1RSB phase. 
From the point of view of inference, this value of $s$ does not minimizes the MSE, as we discuss in Sec.~\ref{sec:use_of_ASP}.

In the case the posterior measure develops a 1RSB structure, 
the phase space of configurations becomes clustered in exponentially many basins.
In this situation, the number of basins is counted by the complexity \cite{MPV87}.
The complexity $\Sigma$ as function of the free energy of a single
metastable state can be obtained as the Legendre transform of the 1RSB free
energy of the system, obtaining
\begin{align}
 \Sigma \left( M,Q,\Delta^{(0)},\Delta^{(1)} \right) =&
 s^2 \frac{\D_0}{4 \D^2} \D^{(0)} \left( 2 Q + \D^{(0)} \right)
 - s^2 \frac{\partial}{\partial s} {\mathbb{E}}_{x^{(0)},W} \left[ \frac{1}{s}  f_{\text{in}} \left( T,V^{(0)},V^{(1)} \right) \right]
\label{eq:complexity}
\end{align}
From the point of view of physics, the static solution corresponds to the metastable
states with the lowest free energy and null complexity. 
The inclusion of the 1RSB structure in ASP allows to analyze
situations with nonzero complexity $\Sigma$.

\subsubsection{Rigorous approach reloaded}
\label{sec:rigorous_1RSB}

Interestingly, we can use the rigorous result for state evolution of
AMP from Sec.~\ref{sec:rigorous_RS} in the present context as well,
and show that ASP follows rigorously a state evolution corresponding
to the 1RSB equations.

\begin{theorem}[State Evolution for Approximate Survey Propagation]
For the ASP algorithm, the empircal averages 
\begin{align}
 M^{t}& = \frac 1N \sum_i \hat x^t_i x^{(0)}_i \,, &Q^{t} &= \frac 1N  \sum_i (\hat x^t_i)^2\,,&\Psi^t &= \frac 1N  \sum_i \psi(B_i^t, x^{(0)}_i)\,,
\end{align}
for a large class of function $\psi$ (see \cite{deshpande2015asymptotic}) converge, when $N\to  \infty$, to their state evolution predictions where 
\begin{align}
M_{\rm SE}^{t} &= \mathbb{E} \left[x^0 \eta_t(Z)\right]\, , &Q^{t}_{\rm SE} &=  \mathbb{E}\left[ \eta_t(Z)^2 \right]\, ,& \Psi^t_{\rm SE} &=  \mathbb{E} \left[\psi(Z, x^{(0)}) \right]\,,
\end{align}
 where $Z$ is a random Gaussian variable with mean $\frac{M^t x^{(0)}}{\Delta}$ and variance $\frac{\Delta Q^t}{\Delta^2}$, and $x^{(0)}$ is distributed according to the prior $P_0$.
\end{theorem}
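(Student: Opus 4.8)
The plan is to observe that, once the TAPification has been carried out, the ASP recursion (\ref{eq:TT})--(\ref{eq:DD1}) is literally a special case of the generalized AMP iteration (\ref{flo1-1})--(\ref{flo1-4}), and then to quote Theorem~\ref{theorem1}, which holds for \emph{any} denoiser $\eta_t$ with $\eta_t,\eta'_t$ Lipschitz. As in the treatment of Low-RAMP (Sec.~\ref{sec:rigorous_RS}), I would work throughout with the variant of ASP in which every $S_{ik}^2$ is replaced by its mean $\mathbb{E}[S^2]=\Delta_0/\Delta^2$, which alters the iterates only at subleading order in $N$.

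First I would fix the time-dependent denoiser. Let $V^{(0),t}$ and $V^{(1),t}$ be the deterministic quantities built from the scalar order parameters through (\ref{eq:def_V1_SE})--(\ref{eq:def_V0_SE}), and set
\begin{align}
\eta_t(T)\ \defeq\ \frac1s\,\frac{\partial}{\partial T}\,f_{\text{in}}\!\left[T,V^{(1),t},V^{(0),t}\right].
\end{align}
With the identification $B^t_i\equiv T^t_i$, equation (\ref{eq:MM}) becomes exactly $\hat x^{t+1}_i=\eta_t(T^t_i)$ once one knows that $V^{(1),t}_i,V^{(0),t}_i$ concentrate on $V^{(1),t},V^{(0),t}$. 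The only substantive algebraic point is then the following: combining (\ref{eq:DD0}) and (\ref{eq:DD1}) gives $\Delta^{(1),t+1}_i+s\,\Delta^{(0),t+1}_i=\frac2s\,\partial_{V^{(0)}}f_{\text{in}}-\frac1s(\partial_T f_{\text{in}})^2$, while the heat-kernel identity $\partial_{V^{(0)}}f_{\text{in}}=\tfrac12\big(\partial_T^2 f_{\text{in}}+(\partial_T f_{\text{in}})^2\big)$ — valid because $f_{\text{in}}$ is the logarithm of a Gaussian convolution whose variance is controlled by $V^{(0)}$ and whose centre enters only through $T+z$ — collapses this to
\begin{align}
\Delta^{(1),t+1}_i+s\,\Delta^{(0),t+1}_i\ =\ \frac1s\,\frac{\partial^2}{\partial T^2}f_{\text{in}}\!\left[T^t_i,V^{(1),t},V^{(0),t}\right]\ =\ \eta'_t(T^t_i).
\end{align}
Hence the Onsager term in (\ref{eq:TT}) is precisely $b^t\hat x^{t-1}_i$ with $b^t=\frac{\mathbb{E}[S^2]}{N}\sum_k\eta'_{t-1}(T^{t-1}_k)$, which is exactly the coefficient prescribed by (\ref{flo1-3}). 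Thus $(\hat{\mathbf x}^t,T^t)$ obeys (\ref{flo1-1})--(\ref{flo1-4}) with this $\eta_t$, and (modulo the standard treatment of the initialization) Theorem~\ref{theorem1} applies.

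Next I would close the induction on $t$ and extract the remaining order parameters. Assuming $Q^s,\Delta^{(0),s},\Delta^{(1),s}$ have converged to their state-evolution values for all $s\le t$ — so that the $\eta_s$ above are indeed the correct deterministic denoisers and the $V^{(\cdot),s}_i$ do concentrate — Theorem~\ref{theorem1} yields that $M^{t+1},Q^{t+1}$ converge to $\mathbb{E}[x^{(0)}\eta_t(Z)]$ and $\mathbb{E}[\eta_t(Z)^2]$, and, specializing the admissible test function $\psi$ to the ($V$-concentrated) maps $T\mapsto\Delta^{(0),t+1}_i$ and $T\mapsto\Delta^{(1),t+1}_i$ read off from (\ref{eq:DD0})--(\ref{eq:DD1}), that $\Delta^{(0),t+1},\Delta^{(1),t+1}$ concentrate as well. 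Since $Z$ has the law of the state-evolution field $T^t$ of (\ref{eq:def_T_SE}), these limits are exactly the 1RSB-SE equations (\ref{eq:1RSB_SE_M})--(\ref{eq:1RSB_SE_Delta1}); this simultaneously advances the induction and identifies the limit, proving the claim.

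The main obstacle is entirely technical: verifying the Lipschitz hypotheses of Theorem~\ref{theorem1} for $\eta_t$ and $\eta'_t$ uniformly along the trajectory, together with the concentration of $V^{(1),t}_i,V^{(0),t}_i$ needed to replace them by the deterministic constants $V^{(1),t},V^{(0),t}$. For bounded priors such as (\ref{eq:SK_prior}) this is immediate; for general $P_X$, and in particular for $s<1$ — where the prefactors $1/(s(s-1))$ and $1/(s(1-s))$ in (\ref{eq:DD0})--(\ref{eq:DD1}) change sign and the ``Gaussian ansatz'' (\ref{eq:1RSB_ansatz_cavity}) is only a formal device — one must check that the integrals defining $f_{\text{in}}$ and its derivatives converge and that the auxiliary iterates remain bounded. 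The algebraic identity relating $\Delta^{(1)}+s\Delta^{(0)}$ to $\eta'_t$ continues to hold verbatim wherever those integrals make sense, so the architecture of the argument is unaffected by these points.
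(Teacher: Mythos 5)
Your proposal is correct and follows essentially the same route as the paper: rewrite ASP in the generalized AMP form (\ref{flo2-1})--(\ref{flo2-3}) with the denoiser $\eta_t(T)=\frac1s\partial_T f_{\rm in}[T,V^{(1),t},V^{(0),t}]$ built from the deterministic state-evolution values of $V^{(0)},V^{(1)}$, and then invoke Theorem~\ref{theorem1}. In fact you supply the one computation the paper leaves implicit — the identity $\Delta^{(1),t}_i+s\Delta^{(0),t}_i=\frac1s\partial_T^2 f_{\rm in}=\eta'_{t-1}(T^{t-1}_i)$, which is what makes the ASP Onsager term coincide with the coefficient $b^t$ of (\ref{flo2-3}) — together with the induction needed to justify using the deterministic $V$'s; both are correct and strengthen rather than alter the argument.
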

\begin{proof}
First, let us rewrite the ASP algorithm in a AMP-like form as
considered in \cite{DeshpandeM14,deshpande2015asymptotic}
\begin{align}
{\bf T}^t =& \frac{1}{\sqrt N} S \hat {\bf x}^t - b^t \hat {\bf
             x}^{t-1}\, , \label{flo2-1} \\
\hat x_{i}^{t+1} =& \eta_t({ T}_i^t)\, , \label{flo2-2}
\end{align}
with, again 
\begin{align}
b^t = \frac{ {\mathbb E} [S^2] }N
  \sum_{i=1}^N(\partial_T\eta_t({T}_i^t))\, .  \label{flo2-3}
\end{align}
With the particular choice of the following denoising function
\begin{align}
\eta_t({ T}_i^t) \eqdef \frac1s \frac{\partial}{\partial
  T_{i}}f_{\text{in}}\left[T_i^t,V^{(1),t},V^{(0),t}\right]  \label{flo2-4}
\end{align}
where $V^{(1),t}$, $V^{(1),t}$ are deterministic variables given by
eqs.~(\ref{eq:def_V1_SE}-\ref{eq:def_V0_SE}) with use of eqs.~(\ref{eq:1RSB_SE_Delta0}-\ref{eq:1RSB_SE_Delta1}).
Once ASP is written in this form, we can apply
directly theorem \ref{theorem1} from section \ref{sec:rigorous_RS} and reach the desired result. 
\end{proof}

\subsection{Behaviour and performance of ASP }
\label{sec:use_of_ASP}

The ASP algorithm is a natural generalization of AMP, taking into
account the 1-step replica symmetry broken structure
in place of a replica symmetry. In this section we discuss the performance of ASP in terms of its
convergence and estimation error it reaches as a function of the
Parisi parameter $s$. We recall that for $s=1$ the ASP algorithm reduces
to AMP. We illustrate our findings again on the SK model. We
compare the performance of the algorithm on finite size instances with
the fixed-points of 1RSB-SE Eqs.~(\ref{eq:1RSB_SE_M})-(\ref{eq:1RSB_SE_Delta1}). 

\subsubsection{MSE as a function of the Parisi parameter $s$}

The most interesting quantity from the inference point of view is the
mean-squared error reached by the ASP algorithm for different values
of the Parisi parameter $s$. In Fig.~\ref{fig:1RSB_SE_MSE_alls} we
show the MSE for the planted SK model for two values of $\Delta_0$
(the two panels) as a function of $\Delta$ for various values of the
Parisi parameter $s$. 

First we note that in the whole region of convergence of AMP, i.e. in
the RS stability region, ASP converges for every value of $s$ to the same fixed point as AMP, cf. Fig.~\ref{fig:1RSB_SE_MSE_alls}.
In this sense, ASP provides a strong test (on a single instance) of the replica symmetry of the problem:
if ASP converges to the same fixed point for any value of $s$, we have
an argument to claim replica symmetry (in absence of discontinuous
phase transitions).

The horizontal line in Fig.~\ref{fig:1RSB_SE_MSE_alls}   is the optimal MSE reached by AMP
in the Bayes-optimal case where $\Delta=\Delta_0$. The black line
corresponds to the MSE reached for the equilibrium value of $s^*(\Delta)$ for which the complexity
Eq.~(\ref{eq:complexity}) vanishes, or equivalently where the free
energy Eq.~(\ref{eq:1RSB_freeEnergy}) is maximized. These are the states dominating
the Boltzmann measure if 1RSB was the correct description of the system (which it is not in this model, as well known and explained in the next section).

\begin{figure}
\centering
\includegraphics[width=0.48\columnwidth]{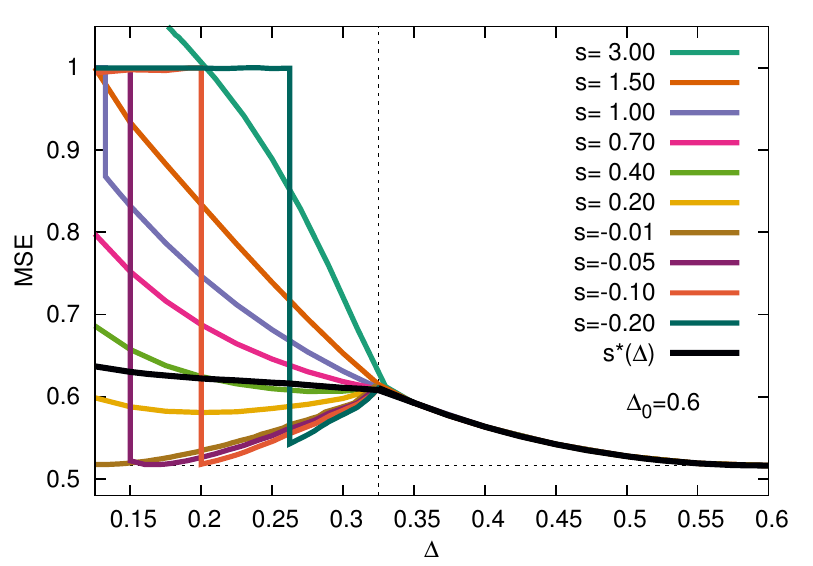} 
\includegraphics[width=0.48\columnwidth]{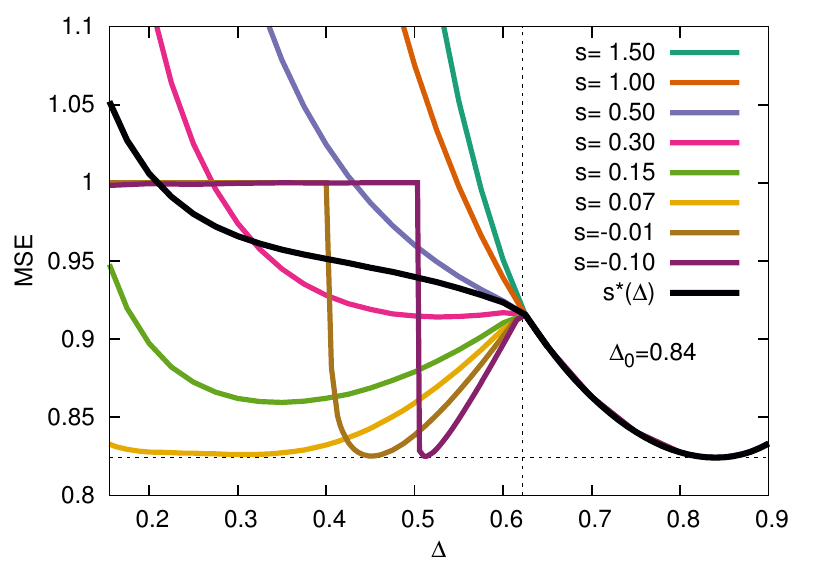} 
\caption{Mean-squared error as extracted from the solution of 1RSB SE Eqs.~(\ref{eq:1RSB_SE_M})-(\ref{eq:1RSB_SE_Delta1}) for SK model with $\Delta_0 = 0.6$ (left) and $\Delta_0 = 0.84$ (right) 
varying $\Delta$ and for several $s$. 
For large $\Delta$ (on the right of the vertical dotted line) the RS solution is stable and ASP converges to the same solution for any $s$. 
For small $\Delta$ (on the left of the vertical dotted line) the RS solution is unstable, AMP stops converging point-wise and ASP gives different solutions varying $s$.
We plot also the value of MSE obtained by ASP when $s$ is fixed to $s^*(\D)$, defined as the value of $s$ that maximizes the Bethe free energy Eq.~(\ref{eq:1RSB_freeEnergy}).
Note that in some cases the MSE obtained by ASP in the RSB region is
indistinguishable from the optimal one obtained for $\Delta=\Delta_0$. 
}
\label{fig:1RSB_SE_MSE_alls}
\end{figure}

\begin{figure}
\centering
\includegraphics[width=0.8\columnwidth]{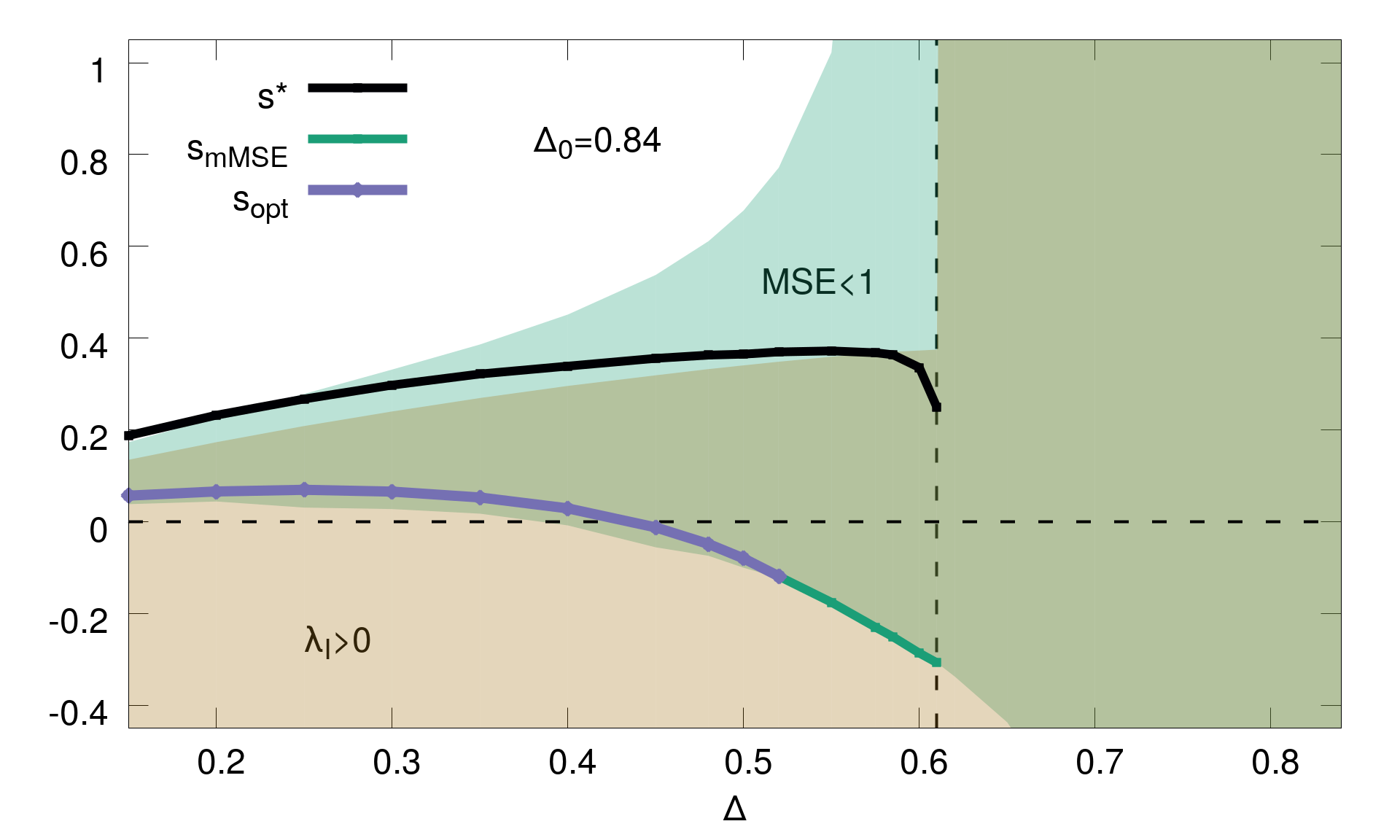} 
\caption{The values of $s_{\rm opt}$ (for which the optimal error is
  achieved), $s_{\rm m MSE}$ (for which a minimum errors
  is achieved), and the equilibrium values $s^*$
as obtained by solving the 1RSB-SE Eqs.~(\ref{eq:1RSB_SE_M})-(\ref{eq:1RSB_SE_Delta1}) for the planted SK model with  
 $\Delta_0 = 0.84$.
The vertical dashed line is where the RS solution becomes unstable.
The light orange and light green regions define the part of the plane where $\l_{\text{I}}>0$ and $\text{MSE}<1$, respectively. 
The common area is dark green.
Outside of the $\text{MSE}<1$ region, the solution is such that $\text{MSE}>1$ for larger $s$ and $\text{MSE}=1$ (trivial solution $M=Q=0$) for smaller $s$.
}
\label{fig:s_optimal_vs_Delta}
\end{figure}

\begin{figure}
\centering
\includegraphics[width=0.48\columnwidth]{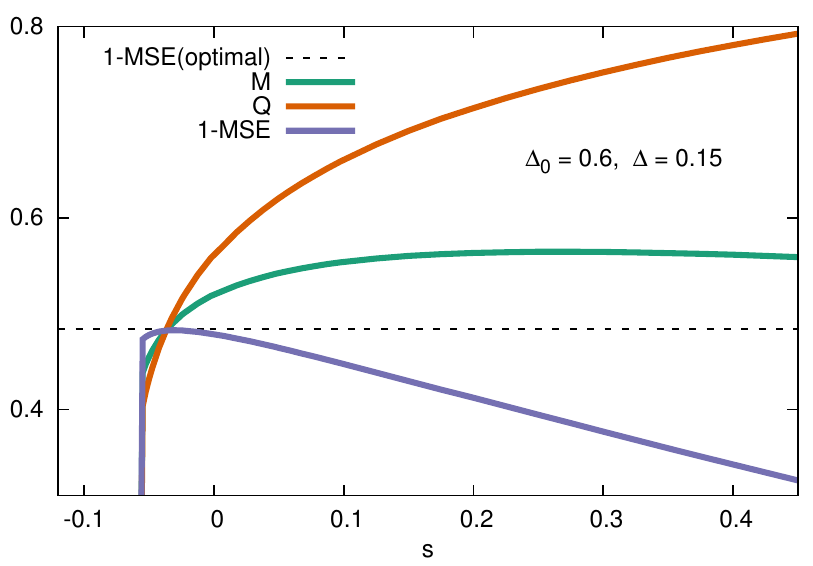} 
\includegraphics[width=0.48\columnwidth]{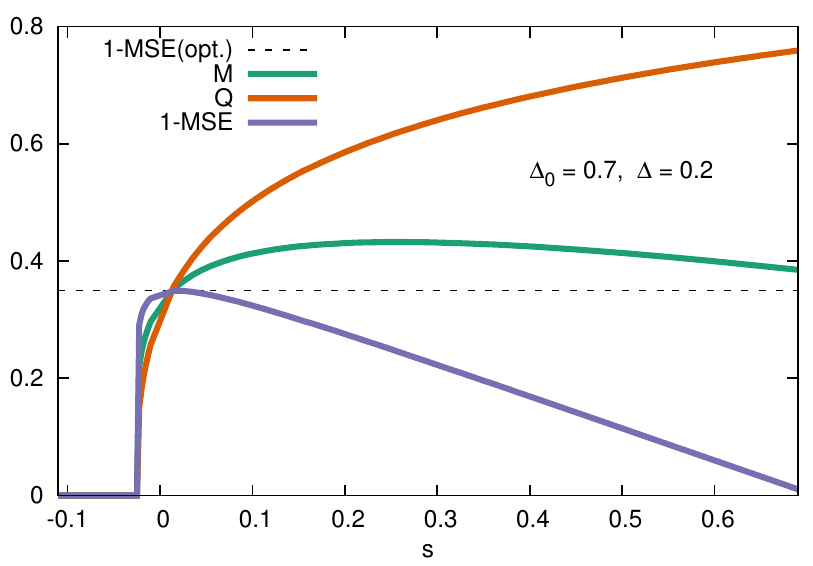} 
\\
\includegraphics[width=0.48\columnwidth]{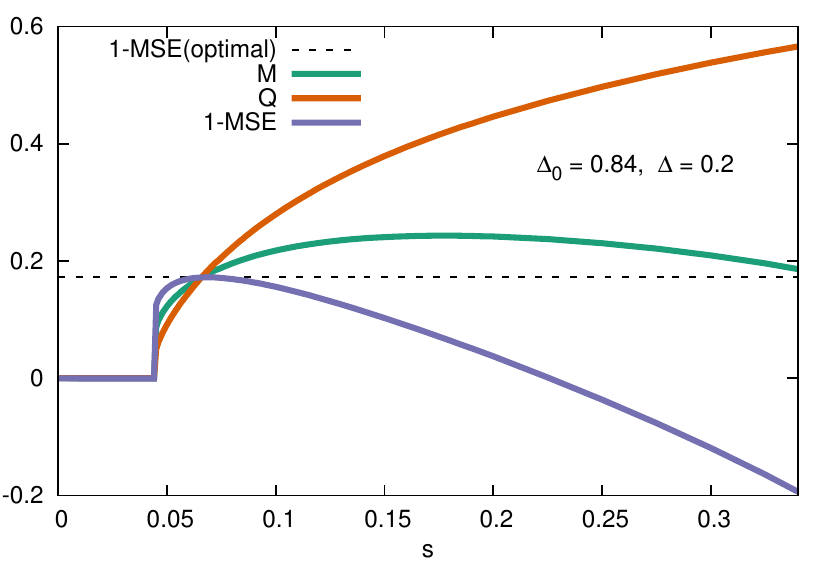}
\includegraphics[width=0.48\columnwidth]{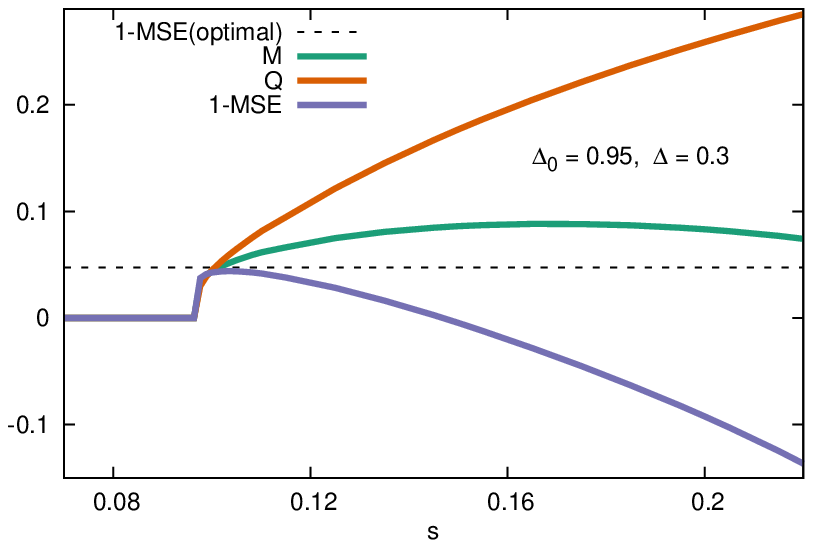} 
\caption{The values of $M$, $Q$
and $1-\text{MSE}=Q-2M$ as obtained by solving the 1RSB-SE
Eqs.~(\ref{eq:1RSB_SE_M})-(\ref{eq:1RSB_SE_Delta1}) for the planted SK model with  
 $\Delta_0 = 0.6, \Delta=0.15$ (top left),
 $\Delta_0 = 0.7, \Delta=0.2$ (top right),
 $\Delta_0 = 0.84, \Delta=0.2$ (bottom left),
 $\Delta_0 = 0.95, \Delta=0.3$  (bottom right) as a function of the
 Parisi parameter $s$.
 Note that all these points lie in the RS instability region, cf. Fig.~\ref{fig:phaseDiagramRS}.
 We show the value of MSE obtained by AMP on the Nishimori line as a dashed black line.
 In all the cases shown, the MSE is minimum when the Nishimori condition $M=Q>0$ is restored: in this point the MSE is equal to the optimal one.
 The value of $s$ in which the Nishimori condition is satisfied is:
$s=-0.0370$ (top left),  
$s=0.0127$  (top right),  
$s=0.0658$  (bottom left),  
$s=0.0994$ (bottom right). 
}
\label{fig:SE_1RSB_minimum_s}
\end{figure}

We denote by $s_{\rm mMSE}$
the value of the Parisi parameter
that minimizes the MSE.
Quite remarkably, we see in Fig.~\ref{fig:1RSB_SE_MSE_alls} that for
some cases the Bayes-optimal MSE is reached also when
$\Delta \neq \Delta_0$ for a particular value of the Parisi parameter
that we denote $s_{\rm opt}$.  This is also
seen in Fig.~\ref{fig:D00p84_M_Q_MSE_lambdaI} left panel where the MSE is plotted as a function
of $s$ for a variety of values of $\Delta$. Moreover, we remark and
illustrate in Fig.~\ref{fig:s_optimal_vs_Delta} that
the value of the Parisi parameter
$s_{\rm mMSE}$ that minimizes the MSE, and $s_{\rm opt}$ for which
the Bayes optimal error is reached, are unrelated to the equilibrium value
$s^*$.
Note that, somehow counter-intuitively, very closely under the RS instability the optimality condition $M=Q$ cannot typically be achieved 
for any value of $s$ (see, e.g., the yellow curve corresponding to $\D = 0.6$ in Fig.~\ref{fig:D00p84_M_Q_MSE_lambdaI}): 
in this case the 1RSB solution depends weakly on $s$ and 
one explores only a narrow interval in $M$ and $Q$ (that may not include the case $M=Q>0$) before jumping to the trivial solution $M=Q=0$.

In Fig.~\ref{fig:s_optimal_vs_Delta} we show the $s_{\text{opt}}$, and the value $s_{\text{mMSE}}$ for which the MSE is minimum at given $\D$ 
(when $s_{\text{opt}}$ exists the two are equal) 
and the region where $\text{MSE}<1$ for the SK model with $\D_0=0.84$.
Near the RS instability the MSE depends weakly on $s$ and the $s_{\text{opt}}$ does not exist: in this case the minimum MSE is obtain for the minimum $s$ 
for which the solution is not trivial (green line). 
In the same figure we also plot the value of $s^*$ that extremizes the Bethe free energy of the model:
it is well distinct from $s_{\text{opt}}$.

To clarify the origin of the above observation that Bayes-optimal
error can be restored for $s_{\rm opt}$ we remind that in Sec.~\ref{sec:optimality_and_mismatching} we noticed that the inference can be optimal 
out of the Nishimori line. More precisely, we have put forward a hypothesis that
the estimation is optimal every time the Nishimori condition $M=Q>0$ is restored.
We showed one example in Fig.~\ref{fig:M_Q_MSE_rho_neq_rho0}:
introducing a nonzero density of zeros in the prior distribution, it is possible to obtain optimal estimation in the SK model using AMP out of Nishimori.
%
%
The ASP algorithm naturally introduces a free parameter in the form of
the Parisi parameter $s$, cf. Eq.~(\ref{eq:rep_Z}). In Fig. \ref{fig:SE_1RSB_minimum_s} we
illustrate that minimization of the MSE as a function of $s$ is again
related to restoration of the Nishimori condition $M(s)=Q(s)>0$.

This can also be seen analytically by computing the 
derivative of the MSE with respect to the parameter $s$ at the fixed-point of 1RSB-SE. 
In the replica theory notation \cite{MPV87,SD84} we can express this derivative, cf. Appendix \ref{sec:derivative_of_MSE}, as
\begin{align}
 \frac{\partial \text{MSE}}{\partial s} &= 
 -  2 \int dh \, \frac{\partial f_{\text{I}}(s,h)}{\partial s}  \, D(s,h) 
\end{align}
while 
\begin{align}
M-Q =  \int dh \, f_{\text{I}}(s,h) \, D(s,h)\, ,
\end{align}
with 
\begin{align}
 D(s,h) = {\mathbb{E}}_{x^{(0)}} 
 \left[ \left( f_{\text{I}}''(s,h) + f_{\text{I}}'(s,h) \frac{T}{\hat Q}  +  x^{(0)} \frac{T}{\hat Q} \right)  
 \frac{1}{\sqrt{2 \pi (- \hat Q)}}  \exp \left(  \frac{ T^2 }{2 \hat Q} \right)  \right] \, ,
\label{eq:D_for_MSE}
\end{align}
$\hat Q = - \Delta_0 Q / \Delta^2 $ and 
\begin{align}
 f_{\text{I}}(s,h) &= \frac{1}{s} \log {\mathbb{E}}_{W} \left[  e^{s
          f_{\text{II}}(h-\sqrt{V^{(0)}} W)} \right] \, ,
 &
 f_{\text{II}}(h) &= \log \, {\mathbb{E}}_{x} \left[ \exp \left( -
          \frac{V^{(1)}}{2} x^2 + h x \right) \right] \, .
\end{align}
So, if there exists a value of $s$ such that $D(s,h)$ is zero for
all $h$, we have that for such value $M=Q$ and MSE is extremized.
This is exactly what we observe: at the solution where MSE is minimum the function $D(s,h)$ is zero for all $h$ (and so $M=Q>0$).
In this sense we could define the optimal $s_{\rm opt}$ as the value for which at the fixed-point of 1RSB-SE it happens that (if exists)
\begin{align}
& D(s_{\text{opt}},h)
= 
{\mathbb{E}}_{x^{(0)}} 
 \left[ \left( f_{\text{I}}''(s_{\text{opt}},h) + f_{\text{I}}'(s_{\text{opt}},h) \frac{T}{\hat Q}  +  x^{(0)} \frac{T}{\hat Q} \right)  
 \frac{1}{\sqrt{2 \pi (- \hat Q)}}  \exp \left(  \frac{ T^2 }{2 \hat Q} \right)  \right]
\equiv 0  \, .
\end{align}



It seems to us that there should be a deeper theoretical reason behind
the observation that the restoration of the Nishimori condition leads
to the optimal estimation error. We have not found it and let this
question for future work.



\subsubsection{Point-wise convergence of ASP}

The inclusion of a 1RSB structure into the ASP algorithm extends the region of parameters in which 
the algorithm converges point-wise with respect to AMP.  
Note that the range of parameters for which AMP does
not converge point-wise corresponds exactly to the range of parameters
for which the large iteration time behavior of ASP depends on the
Parisi parameter $s$, see Fig.~\ref{fig:1RSB_SE_MSE_alls}, where below
the vertical line is
exactly where the replicon eigenvalue eq. (\ref{eq:repliconRS})
becomes negative.

\begin{figure}[h!]
\centering
\includegraphics[width=0.48\columnwidth]{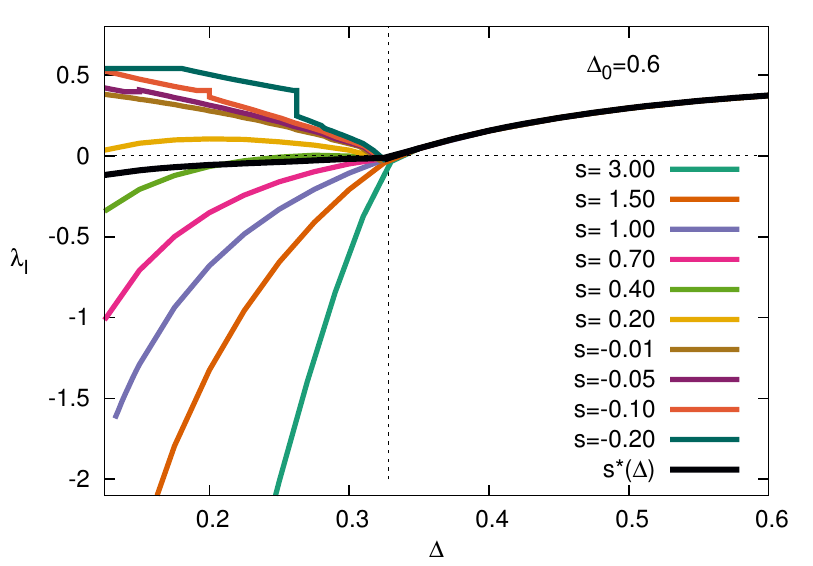} 
\includegraphics[width=0.48\columnwidth]{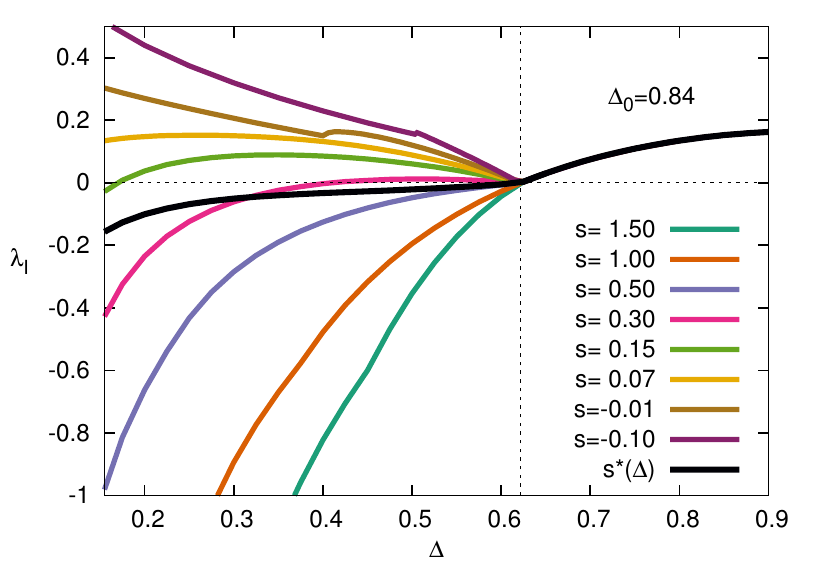} \\
\includegraphics[width=0.48\columnwidth]{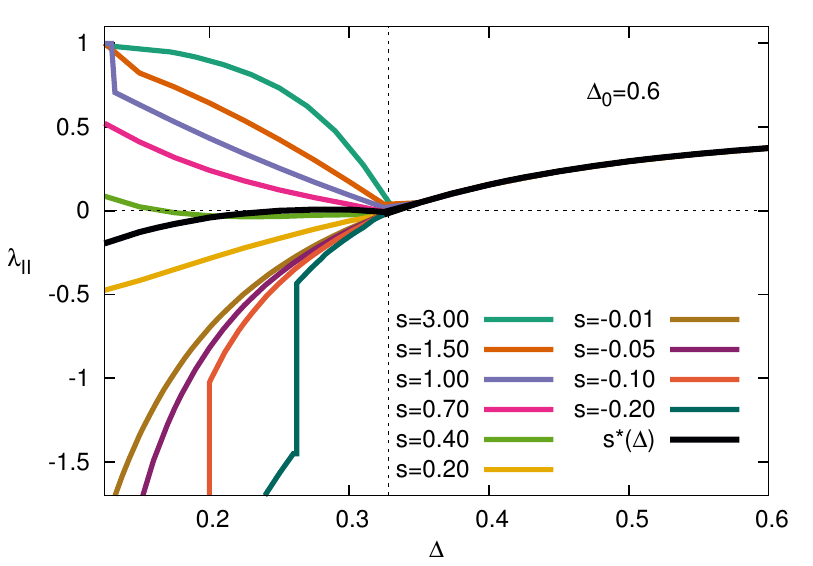} 
\includegraphics[width=0.48\columnwidth]{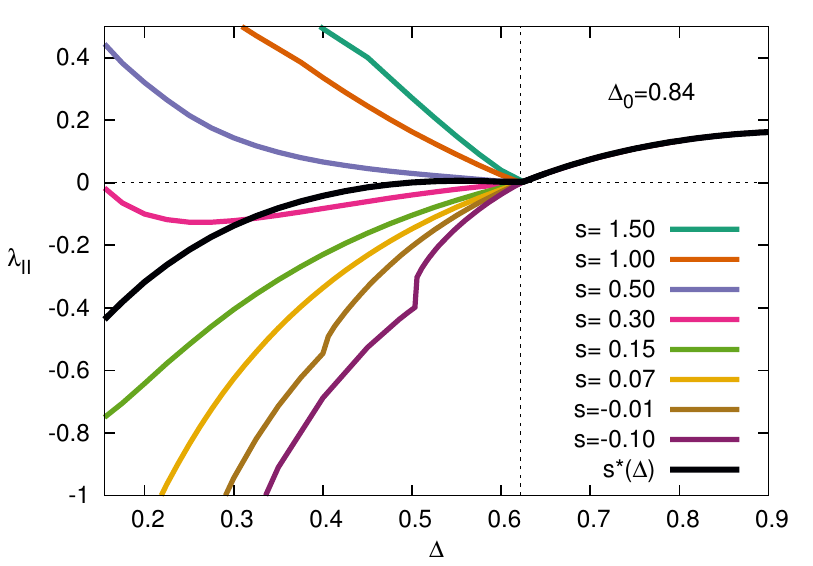} 
\caption{The stability parameters (\ref{eq:lambdaI_1RSB}-\ref{eq:lambdaII_1RSB}) as obtained by
  solving the 1RSB SE
  Eqs.~(\ref{eq:1RSB_SE_M})-(\ref{eq:1RSB_SE_Delta1}) for the planted SK model with $\Delta_0 = 0.6$ (left) and $\Delta_0 = 0.84$ (right) 
varying $\Delta$ and for several $s$. 
For large $\Delta$ (on the right of the vertical line) the RS solution is stable and ASP converges to the same solution for any $s$. 
For small $\Delta$ (on the left of the vertical line) the RS solution
is unstable, AMP stops converging point-wise and ASP converges if and
only if $\l_{\text{I}}$ is positive. 
We plot also the values of $\lambda_{\text{I}}$,$\lambda_{\text{II}}$ obtained by ASP
when $s$ is fixed to the equilibrium value $s^*(\D)$.
}
\label{fig:1RSB_SE_lambdas_alls}
\end{figure}


ASP converges point-wise in a larger region that AMP, at least for some value of $s$.
The analysis of the point-wise convergence of single instances is obtained looking at the 
Hessian eigenvalues of ASP equations. Again, this can be readily
derived by simply repeating the reasoning we used for AMP in
section \ref{sec:phase_diagram_RS} to reach Eq.~(\ref{eq:repliconRS}), using this time ASP Eqs.~(\ref{flo2-1})-(\ref{flo2-4}), 
instead of the AMP ones Eqs.~(\ref{flo1-1})-(\ref{flo1-4}). 
This leads to a perturbation growing as $(\Delta_0/\Delta^2 ) {\mathbb E}_B
\eta'(B)$. Leading to the condition 
\begin{align}
 \lambda_{\rm ASP} = 1 - \frac{\D_0}{\D^2} \mathbb{E}_{x^{(0)},T} \left[ \left(
\frac 1s  \frac{\partial^2 f_{\rm in}(T,V_1,V_0)}{\partial T^2}  \right)^2 \right] \, .
\label{eq:convergence1RSBASP}
\end{align}

Interestingly, this corresponds to a well known quantity in the replica
theory \cite{MPV87,SD84}, that appears in the stability of the 1RSB
solution against further breaking of replica symmetry. In in this case
(see Appendix \ref{sec:equivalence_SE_replica}) two kind of
instabilities are often discussed
\cite{MonRic03,MonParRic04,KrzPagWei04,MerMezZec05} and the two eigenvalues
that express the 1RSB stability can be written as
\begin{align}
\label{eq:lambdaI_1RSB}
 \l_{\text{I}} &= 1-\frac{\D_0}{\D^2}\int_{-\infty}^\infty \de h
                 P_{\text{I}}(h)\left(f_{\text{I}}''(s,h)\right)^2\, ,\\
\l_{\text{II}} &= 1-\frac{\D_0}{\D^2}\int_{-\infty}^\infty \de h P_{\text{II}}(s,h)\left(f_{\text{II}}''(h)\right)^2\:. \label{eq:lambdaII_1RSB}
\end{align}
where 
\footnote{This notation is convenient in replica theory, in particular to generalize to full replica symmetry breaking (FRSB) ansatz.
The function $P_{\text{II}}(s,h)$ of Eq.~(\ref{eq:P1h}) is equal to the function $P(x,h)$ \cite{SD84}, that enforces the Parisi equation 
and gives the local field distribution, when $x=s$.
}
\begin{align}
 f_{\text{II}}(h) &= \log \, {\mathbb{E}}_{x} \left[ \exp \left( - \frac{V^{(1)}}{2} x^2 + h x \right) \right] 
 \label{eq:f1h}
 \\
 f_{\text{I}}(s,h) &= \frac{1}{s} \log {\mathbb{E}}_{W} \left[  e^{s f_{\text{II}}(h-\sqrt{V^{(0)}} W)} \right]
 \\
 P_{\text{I}} \left( h \right) &=
 \frac{\D}{ \sqrt{2 \pi \D_0 Q}} {\mathbb{E}}_{x^{(0)}} \left[ \exp \left( -  \frac{\D^2}{2 \D_0 Q} \left(  \frac{M}{\D} x^{(0)} + h \right)^2 \right) \right]
 \\
 P_{\text{II}}(s,h) &= e^{s f_{\text{II}}(h)} \, {\mathbb{E}}_{W} \left[  P_{\text{I}}(h- \sqrt{V^{(0)}} W) \, e^{-s f_{\text{I}}(s,h-\sqrt{V^{(0)}} W)}  \right]
  \label{eq:P1h}
\end{align}
with $V^{(1)}, V^{(0)}$ given by Eqs.~(\ref{eq:def_V1_SE})-(\ref{eq:def_V0_SE})
and $x$, $x^{(0)}$ and $W$ being random variables distributed
according $P(x)$, $P^{(0)}(x)$ and a standard Gaussian distribution,
respectively.

A change of variable, as in Appendix \ref{sec:equivalence_SE_replica},
shows that $\lambda_{\rm ASP}= \l_{\text{I}}$: the convergence of the
ASP algorithm is determined by the first-type instability towards
further replica symmetry breaking, just as it is for Survey
Propagation  \cite{MonParRic04,KrzPagWei04,MerMezZec05}.

%
%
In the region of RS stability $\D^{(0)} = V^{(0)} = 0$ and the two
eigenvalues are equal and coincide with the RS replicon eigenvalue in Eq.~(\ref{eq:repliconRS}).
To understand the meaning of these two eigenvalues consider that the 1RSB structure consists of equivalent metastable states whose distance from each other is given by $Q$. 
There are then two ways in which a further hierarchical level of RSB states can appear:
the 1RSB states can aggregate in a way to establish a new scale of distance between states (type I instability, associated to negative $\lambda_{\text{I}}$),
or each metastable state can split into a hierarchy of new states (type II instability, associated to negative $\lambda_{\text{II}}$). 

\begin{figure}[h!]
\centering
\includegraphics[width=0.48\columnwidth]{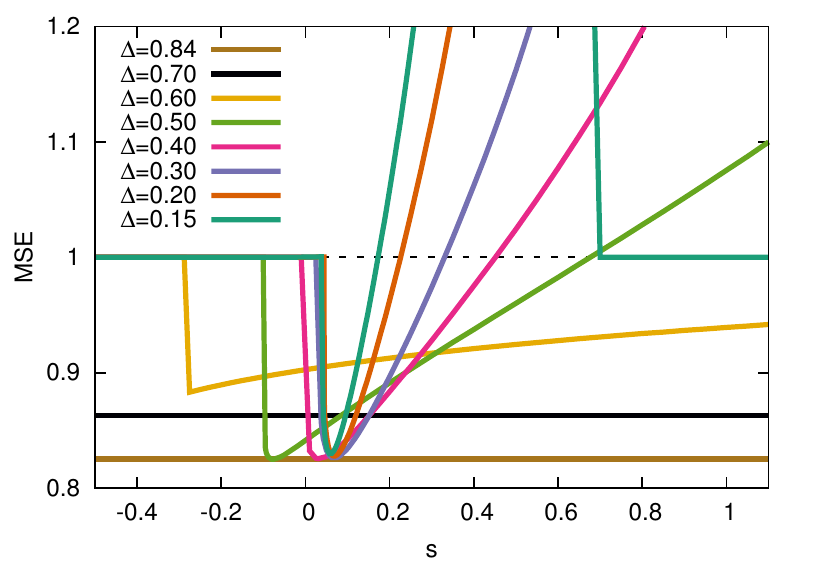}
\includegraphics[width=0.48\columnwidth]{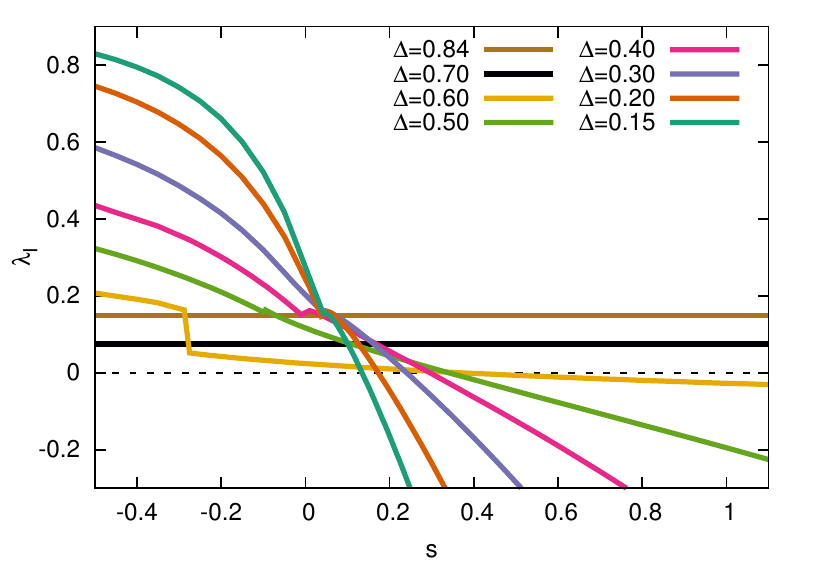} \\
\caption{MSE, and the eigenvalue $\l_{\text{I}}$ Eq.~(\ref{eq:lambdaI_1RSB})
  evaluated from the solution of 1RSB-SE
  Eqs.~(\ref{eq:1RSB_SE_M})-(\ref{eq:1RSB_SE_Delta1}) for the planted SK model with  
$\Delta_0 = 0.84$ varying $s$ for several values of $\D$. In this case the RS instability arrives at $\D = 0.622$.
}
\label{fig:D00p84_M_Q_MSE_lambdaI}
\end{figure}

Let us discuss the behavior of the eigenvalues as we move away from the Nishimori line.
To be concrete, we keep referring to the case of the planted SK model Eq.~(\ref{eq:SK_prior}), cf. Fig.~\ref{fig:1RSB_SE_lambdas_alls}. 
In the RS stability region, the two eigenvalues are equal and positive.
At the boundary of the RS stability region, the eigenvalues become marginal and the replica symmetry gets broken:
beyond this line the solution of the 1RSB-SE depends on $s$ and so do $\l_{\text{I}}$ and $\l_{\text{II}}$.
For the low-rank matrix estimation problem, the transition is towards
a full replica symmetry broken (FRSB) state \cite{MPV87},
so that the 1RSB states are always unstable and at least one among $\l_{\text{I}}$ and $\l_{\text{II}}$ is negative beyond the RS instability line. 
Nevertheless, unlike the RS algorithm laid out by AMP, in ASP there is a scale of distance among state - tuned via the $s$ parameter -
so that the algorithm can converge point-wise also when the inner structure of the states is more complicated than 1RSB.
%
In Fig.~\ref{fig:1RSB_SE_lambdas_alls} we show the value of $\l_{\text{I}}$ and $\l_{\text{II}}$ crossing the RS stability line for several values of $s$.
We see $\l_{\text{I}}$ is positive for $s$ sufficiently small, indicating that ASP converges point-wise for such values also in the RSB region (cf. Fig.~\ref{fig:AMP_vs_1RSB}).
%

\subsubsection{Results on single instances of ASP}

So far we mostly concentrated on results of the state
evolution of the ASP algorithm. We now show that
these results are indeed describing the behaviour of the ASP algorithm
run on large but finite-size instances. 

In Fig.~\ref{fig:ASP_and_SE} we compared how the error evolves as a
function of the iteration time for the state evolution an the ASP
algorithm of several instances of two different system sizes ($N=2000$
and $N=64000$). We see that for the large system size the finite size
effects are small enough and the agreement with the theory is rather good for each of the runs.

\begin{figure}[h!]
\centering
\includegraphics[width=0.49\columnwidth]{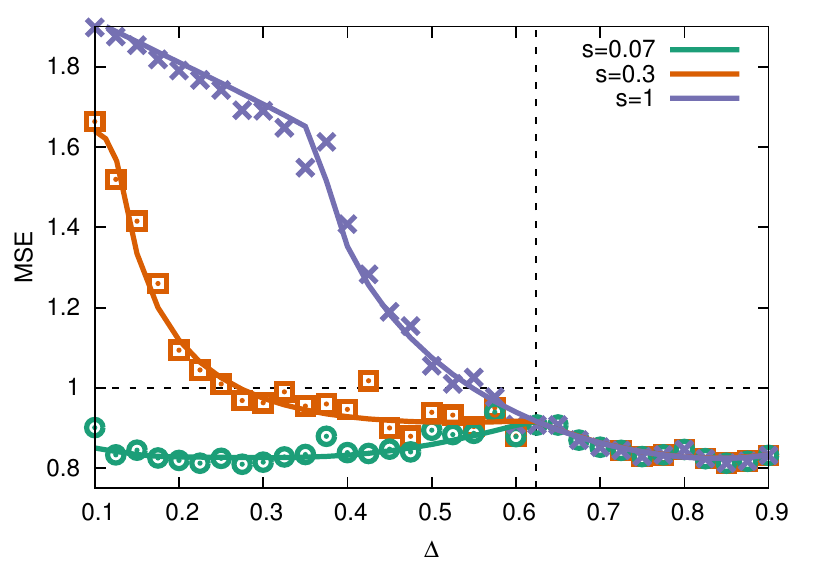}
\includegraphics[width=0.49\columnwidth]{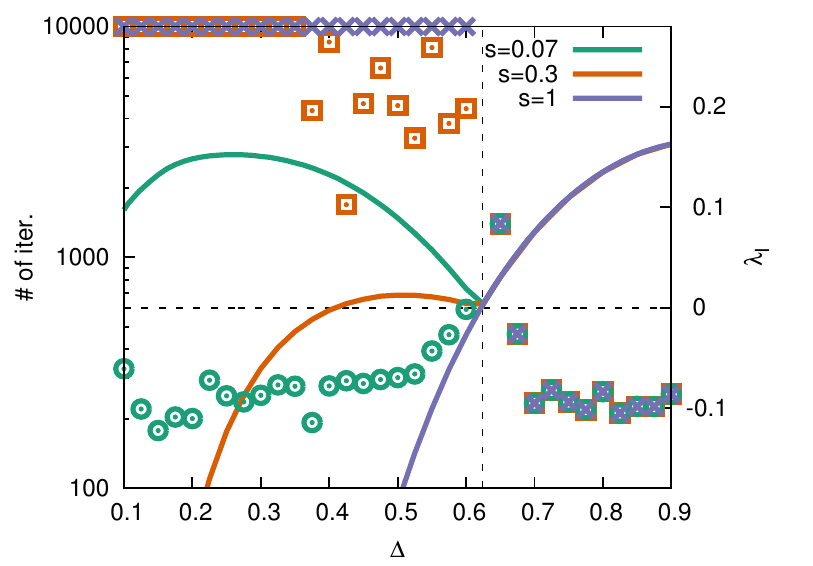}
\caption{MSE and convergence time of ASP for $N=50000$ with $s=0.07$, $s=0.3$ and $s=1$ (equivalent to AMP)
for the SK model with $\Delta_0 = 0.84$ varying $\Delta$. Each point is an average over 3 instances of the problem. 
The iterations are stopped when the average change of a variable in a single iteration of ASP is less than $10^{-8}$ or at $10^4$ iterations if convergence is not reached.
The continuous line are the values of MSE (left) and of $\l_{\text{I}}$ (right) obtained with the 1RSB-SE Eqs.~(\ref{eq:1RSB_SE_M})-(\ref{eq:1RSB_SE_Delta1}).
}
\label{fig:ASP_vs_SE_fixs}
\end{figure}

In Fig.~\ref{fig:ASP_vs_SE_fixs} we report the result of ASP for size $N=5 \times 10^4$ and $s=0.07$, $0.3$, $1$ in the same situation 
of the right column of Figs.~\ref{fig:1RSB_SE_MSE_alls} and \ref{fig:1RSB_SE_lambdas_alls}.
We see that in the RS region the fixed-point is the same for all $s$ while in the RSB region the result depends strongly on $s$.
Note that for $s=0.07$ the MSE is less than 1 for any $\Delta$ in all the instances.
The convergence time is increasing for any $s$ at the RS stability -- where $\l_{\text{I}} =0$ -- but the algorithm keeps converging point-wise for $s=0.07$ even at low $\D$, 
as predicted by the positivity of $\l_{\text{I}}$ from 1RSB-SE at $s=0.07$.
In general, we see as for this size the trend predicted by SE is visible in the MSE obtained by ASP.

Finally, in Figs.~\ref{fig:phaseDiagramASP5k_alls} we show the heat-maps of the
MSE (left panels) and convergence times (right panels) as obtained by the
ASP algorithm run on single instances of $N=5000$ variables and, respectively, $s=0.5$, $s=0.1$ and $s=0.02$. 
We use the same color-map as Fig.~\ref{fig:phaseDiagramAMP5k}, to enhance the difference w.r.t. AMP.
On this scale, ASP with $s=0.5$ is very similar to AMP, but for $s=0.1$ and $s=0.02$ the difference becomes evident.
In particular, the $\text{MSE}<1$ and convergence region extend to lower values of $\D$ including sections of the RSB region, cf. Fig.~\ref{fig:phaseDiagramRS}.
It is visible that the convergence time increases at the RS
instability boundary, where $\l_{\text{I}} = 0$, but then for $s=0.1$ and
$s=0.02$ the convergence time goes down again beyond the RS instability.
Note that also the region of convergence to the trivial point $Q=M=0$ is extended, 
so that for very small $s$ the ASP algorithm convergences point-wise to either the trivial or to nontrivial fixed-point in most of the phase diagram.

\begin{figure}[h!]
\centering
\includegraphics[width=0.46\columnwidth]{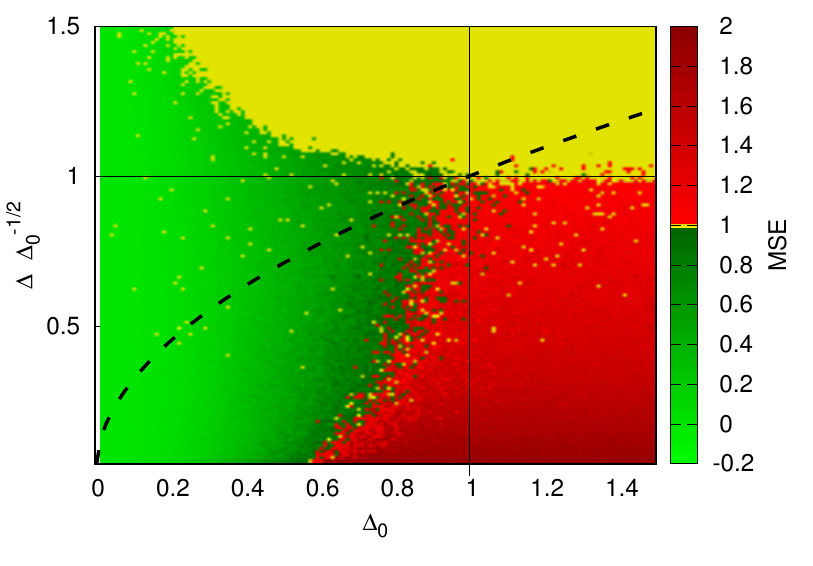}
\includegraphics[width=0.46\columnwidth]{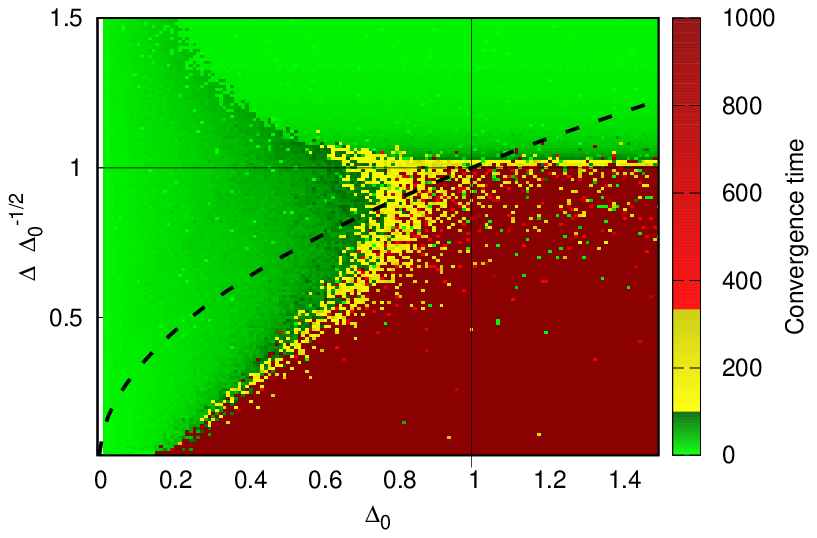} \\
\includegraphics[width=0.46\columnwidth]{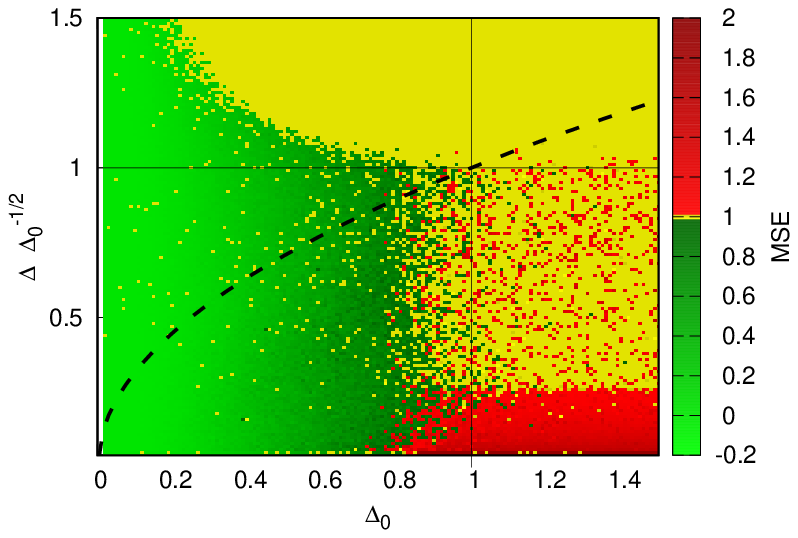}
\includegraphics[width=0.46\columnwidth]{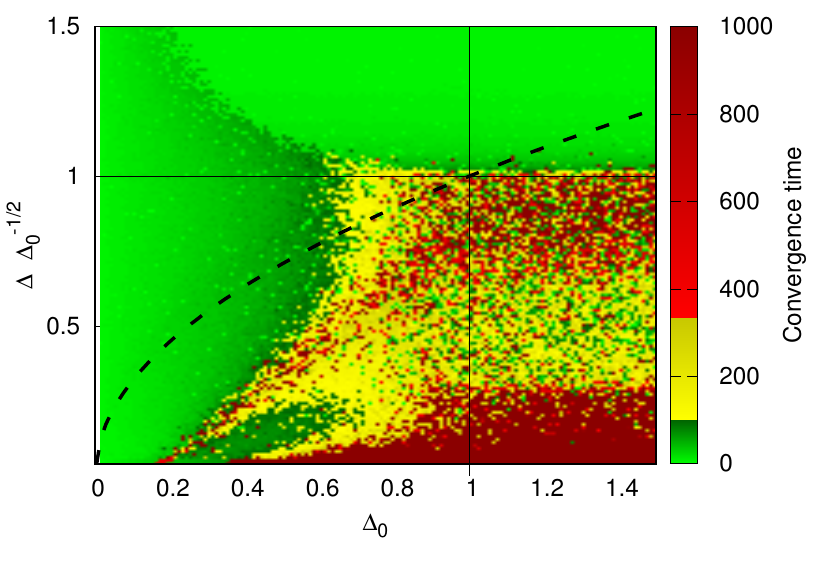} \\
\includegraphics[width=0.46\columnwidth]{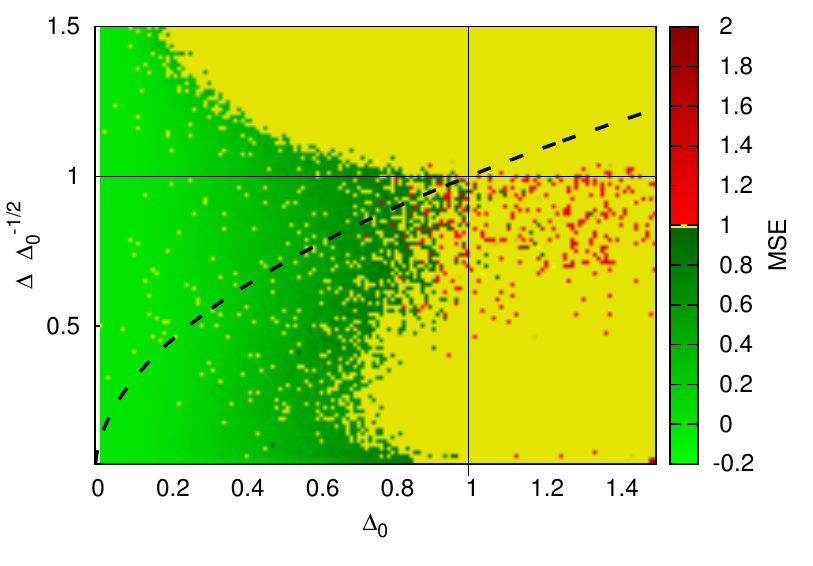}
\includegraphics[width=0.46\columnwidth]{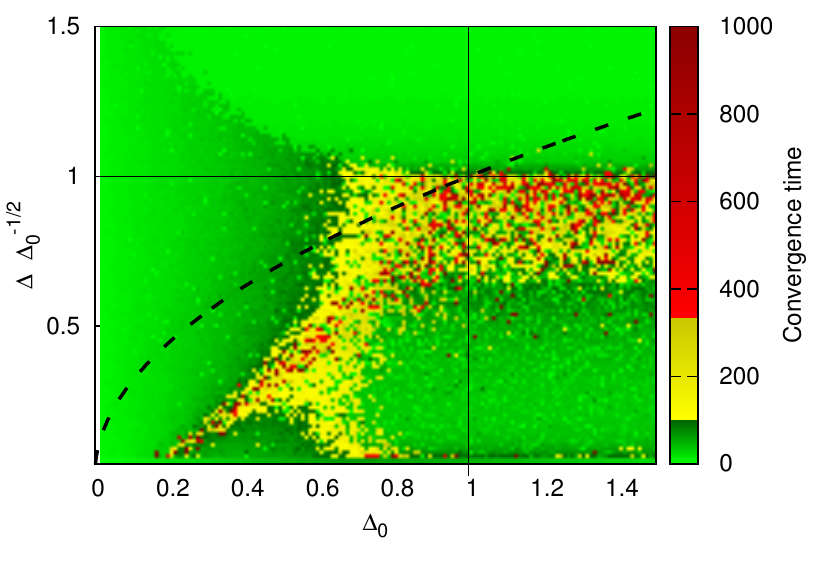}
\caption{Heat-map of the MSE (left) and convergence time (right) obtained running ASP for $s=0.5$ (top), $s=0.1$ (middle) and $s=0.02$ (bottom) for size $N=5000$ for the
 planted SK model. 
The dashed line is the Nishimori line $\Delta = \Delta_0$.
The iterations are stopped after 1000 iterations or as soon as the average change of a variable in a single iteration of ASP is less than $10^{-8}$.
}
\label{fig:phaseDiagramASP5k_alls}
\end{figure}

\section{Conclusions and open questions}


In this paper we introduced the approximate survey propagation (ASP)
algorithm for the class of low-rank matrix estimation problems.  We
derived the state evolution describing the large size behavior of the
algorithm, finding that the fixed-points of the state evolution of ASP
reproduce the one-step replica symmetric fixed-point equations
well-known in physics of disordered systems.  

This leads to a new algorithmic interpretation of the
replica method where the self-consistent equations actually describe
the behavior of an iterative algorithm: AMP in the replica
symmetric case, and ASP when the replica symmetry is broken (just as
BP and SP corresponds to the same situation on sparse graphs).

We characterize the performance of ASP in terms of convergence and
mean-squared error as a function of the free Parisi parameter $s$.  In
particular, we reported the results of the algorithm for the analysis
of the planted Sherrington-Kirkpatrick model.
Notably we found that when there is a model mismatch between the true generative model and the inference model, the performance of AMP rapidly degrades
both in terms of MSE and of convergence. Using ASP for a suitably
chosen value of $s$ we observed we can always restore convergence and
improve the estimation error. 


Among other results, our analysis leads us to a striking hypothesis that whenever $s$ (or other parameters) can be set in such a way that the Nishimori
condition $M=Q>0$ is restored, then the algorithm is able to reach mean-squared error as low as the Bayes-optimal error obtained on the
Nishimori line, i.e. when the model and its parameters are known and matched in the inference procedure. 
Another hypothesis to which we have not find a counter-example in the present model is whether the
Nishimori condition $M=Q$ implies convergence of the ASP algorithm.
Whenever $M=Q$ we always observed that ASP converges point-wise to a fixed-point with minimal MSE even if the generative model and the inference model are highly mismatched.
Unveiling the physical origin and range of validity of these properties is an interesting direction for future work.

Another direction for future work 
is an algorithmic procedure able to select values of the Parisi
parameter $s$ that lead to low estimation errors. 
In this paper we remark that the standard methods based on expectation maximization or choice of the equilibrium value of $s^*$ are
sub-optimal. 

Finally, we mention that the derivation we have done here to obtain the ASP algorithm can be easily generalized along the same lines to an arbitrary number of replica symmetry breaking levels.
While to formally derive such algorithm is in principle
straightforward, it is not obvious what are the algorithmic
consequences in terms of estimation error and convergence in
comparison to AMP and ASP. This is also left  for future work.  



\section*{Acknowledgments}

This work is supported by "Investissements d’Avenir"
LabEx PALM (ANR-10-LABX-0039-PALM) (SaMURai
and StatPhysDisSys projects), by the ERC under the European Union’s FP7
Grant Agreement 307087-SPARCS and the European Union's Horizon 2020
Research and Innovation Program 714608-SMiLe, as well as by the French
Agence Nationale de la Recherche under grant ANR-17-CE23-0023-01
PAIL. We gratefully acknowledge the support of NVIDIA Corporation with
the donation of the Titan Xp GPU used for this research.


\appendix



\section{The 1RSB-SE: equivalence with the 1RSB replica calculation}
\label{sec:equivalence_SE_replica}
We want to show the 1RSB state evolution coincides with the replica result.
Here we will not verify that all the equations coincides, but we will limit ourselves to show that Eq.~(\ref{eq:M1}) for $M$ coincides with 
the equation for the magnetization of the replica result.
In a 1RSB ansatz, the overlap matrix has the form of a block matrix in which the inner block of size $s$ has values $q_1$ while 
outside of the block the value is in principle different and indicated with $q_0$ \cite{MPV87}. The diagonal value of the overlap matrix is $q_d$.
The free energy of the system takes the form
\begin{align}
F_{\text{1RSB}} =&
\frac{1}{4 \D} \left\{   \frac{\D-\D_0}{\D} q_d^2 -2 M^2 
+ \frac{\D_0}{\D} \left[ s q_0^2 + (1-s) q_1^2 \right]
\right\}
 + \int \de h \, P_{\text{I}}(h) \, f_{\text{I}}(s,h)
\end{align}
where the functions $P_{\text{I}}(h)$ and $f_{\text{I}}(s,h)$ are equal to the expressions in Eqs.~(\ref{eq:f1h})-(\ref{eq:P1h}) provided that
\begin{align}
& \D^{(0)}=q_1-q_0 \, , &
& Q=q_0 \, , &
& \Delta^{(1)} = q_d-q_1 \, .
\end{align}
The value of the magnetization, namely the overlap of the ground truth with the inferred signal, that extremizes the free energy satisfies the following equation
\begin{align}
 M =&  \int dh \, \D \, \frac{\partial P_{\text{I}}(h)}{\partial M} \, f_{\text{I}}(s,h)
 =  \int dh \, \int D x^{(0)} \D \, \frac{\partial p_{\text{I}}(h,x^{(0)})}{\partial M} \, f_{\text{I}}(s,h) =
\\
 =& \int dh \, \int D x^{(0)} \, x^{(0)} \, \frac{\partial p_{\text{I}}(h,x^{(0)})}{\partial h} \, f_{\text{I}}(s,h)
 = - \int dh \, \int D x^{(0)} \, x^{(0)} \,  p_{\text{I}}(h,x^{(0)}) \, f'_{\text{I}}(s,h)
\end{align}
where $p_{\text{I}}(h,x^{(0)})$ is defined such that $P_{\text{I}}(h) = {\mathbb{E}}_{x^{(0)}} \left[ p_{\text{I}}(h,x^{(0)}) \right] $
and $D x^{(0)} \equiv d x^{(0)} P_0(x^{(0)})$.
Now rescaling the integration variable $h$ as
\begin{align}
 h = - \frac{\sqrt{\D_0 Q}}{\D} W - \frac{M}{\D} x^{(0)}
\end{align}
where $W$ is a standard Gaussian random variable, we obtain the 1RSB-SE Eq.~(\ref{eq:1RSB_SE_M}).
The equations for $q_0$, $q_d$ and $q_1$ can be obtained on the same lines.

\section{Derivative of MSE wrt $s$}
\label{sec:derivative_of_MSE}
In replica theory notation, cf. Appendix \ref{sec:equivalence_SE_replica}, the equations for $M$ and $Q$ can be written as
\begin{align}
M &= \int \de h\, \D  \left[\frac{\partial }{\partial M} P_{\text{I}}(h)\right]  f_{\text{I}}(s,h)\, ,
 &
Q &= \int \de h P_{\text{I}}(h)\left(f_{\text{I}}'(s,h)\right)^2\, ,
\end{align}
where the functions $P_{\text{I}}(h)$ and $f_{\text{I}}(s,h)$ are given in Eqs.~(\ref{eq:f1h})-(\ref{eq:P1h}).
So that we have
\begin{align}
M-Q =  \int dh \,  f_{\text{I}}(s,h) 
 \left[  \D \frac{\partial P_{\text{I}}(h)}{\partial M} + \frac{\partial}{\partial h} \left( P_{\text{I}}(h) \, f_{\text{I}}'(s,h) \right)  \right]
 \equiv \int dh \, f_{\text{I}}(s,h) \, D(s,h)\, ,
\end{align}
Therefore, the derivative of the MSE wrt $s$ is given by
\begin{align}
 \frac{\partial \text{MSE}}{\partial s} &= M-Q - 
  {\mathbb{E}}_{x^{(0)},W}
  \left[ \left( x^{(0)} - \frac{1}{s} \frac{\partial f_{in}}{\partial
  T} \right) \, \frac{\partial }{\partial s}  \frac{\partial
  f_{in}}{\partial T}  \right] = \\
&= 
 - 2 \int dh \, \frac{\partial f_{\text{I}}(s,h)}{\partial s} 
 \left[ \D \frac{\partial P_{\text{I}}(h)}{\partial M} + \frac{\partial}{\partial h} \left( P_{\text{I}}(h) \, f_{\text{I}}'(s,h) \right)  \right] =
 \\
 &= -  2 \int dh \, \frac{\partial f_{\text{I}}(s,h)}{\partial s}  \, D(s,h) \, .
\end{align}
Replacing the explicit expression of $P_{\text{I}}(h)$ in the previous definition of $D(s,h)$, we obtain Eq.~(\ref{eq:D_for_MSE}).

Alternately, defining $p_{\text{I}}(h,x^{(0)})$  such that $P_{\text{I}}(h) = {\mathbb{E}}_{x^{(0)}} \left[ p_{\text{I}}(h,x^{(0)}) \right] $, we can express
the two quantities as
\begin{align}
  M-Q &= - \int dh \, f'_{\text{I}}(s,h) \, \int D x^{(0)} \, p_{\text{I}}(h,x^{(0)}) \, \left[ x^{(0)} + f'_{\text{I}}(s,h) \right] \, ,
 \\
 \frac{\partial \text{MSE}}{\partial s} &=  
 - 2 \int dh \, \frac{\partial f'_{\text{I}}(s,h) }{\partial s} \, \int D x^{(0)} \, p_{\text{I}}(h,x^{(0)}) \, \left[ x^{(0)} + f'_{\text{I}}(s,h) \right] \, .
\end{align}
where $D x^{(0)} \equiv d x^{(0)} P_0(x^{(0)})$.
The condition $D(s,h) =0$, that implies $M=Q$ \textit{and} the extremization of the MSE, is then also equivalent to
\begin{align}
& \int D x^{(0)} \, p_{\text{I}}(h,x^{(0)}) \,  x^{(0)}   = - P_{\text{I}}(h) \, f'_{\text{I}}(s,h)
& \rightarrow &
& f'_{\text{I}}(s,h) = - \langle x^{(0)} \rangle_{p_I}
\end{align}
where the average is over the distribution $p_{\text{I}}(h,x^{(0)}) / P_{\text{I}}(h)$ and the condition must hold for any $h$.




\begin{thebibliography}{10}

\bibitem{neal2000markov}
R.~M. Neal, ``Markov chain sampling methods for dirichlet process mixture
  models,'' {\em Journal of computational and graphical statistics}, vol.~9,
  no.~2, pp.~249--265, 2000.

\bibitem{wainwright2008graphical}
M.~J. Wainwright, M.~I. Jordan, {\em et~al.}, ``Graphical models, exponential
  families, and variational inference,'' {\em Foundations and
  Trends{\textregistered} in Machine Learning}, vol.~1, no.~1--2, pp.~1--305,
  2008.

\bibitem{bottou2010large}
L.~Bottou, ``Large-scale machine learning with stochastic gradient descent,''
  in {\em Proceedings of COMPSTAT'2010}, pp.~177--186, Springer, 2010.

\bibitem{ThoulessAnderson77}
D.~J. Thouless, P.~W. Anderson, and R.~G. Palmer, ``Solution of 'solvable model
  of a spin glass','' {\em Philosophical Magazine}, vol.~35, no.~3,
  pp.~593--601, 1977.

\bibitem{DonohoMaleki09}
D.~L. Donoho, A.~Maleki, and A.~Montanari, ``Message-passing algorithms for
  compressed sensing,'' {\em Proc. Natl. Acad. Sci.}, vol.~106, no.~45,
  pp.~18914--18919, 2009.

\bibitem{bolthausen2014iterative}
E.~Bolthausen, ``An iterative construction of solutions of the tap equations
  for the sherrington--kirkpatrick model,'' {\em Communications in Mathematical
  Physics}, vol.~325, no.~1, pp.~333--366, 2014.

\bibitem{bayati2011dynamics}
M.~Bayati and A.~Montanari, ``The dynamics of message passing on dense graphs,
  with applications to compressed sensing,'' {\em IEEE Transactions on
  Information Theory}, vol.~57, no.~2, pp.~764--785, 2011.

\bibitem{javanmard2013state}
A.~Javanmard and A.~Montanari, ``State evolution for general approximate
  message passing algorithms, with applications to spatial coupling,'' {\em
  Information and Inference}, p.~iat004, 2013.

\bibitem{MPV87}
M.~M\'ezard, G.~Parisi, and M.~A. Virasoro, {\em Spin glass theory and beyond}.
\newblock Singapore: World Scientific, 1987.

\bibitem{donoho2016high}
D.~Donoho and A.~Montanari, ``High dimensional robust m-estimation: Asymptotic
  variance via approximate message passing,'' {\em Probability Theory and
  Related Fields}, vol.~166, no.~3-4, pp.~935--969, 2016.

\bibitem{advani2016statistical}
M.~Advani and S.~Ganguli, ``Statistical mechanics of optimal convex inference
  in high dimensions,'' {\em Physical Review X}, vol.~6, no.~3, p.~031034,
  2016.

\bibitem{zdeborova2015statistical}
L.~Zdeborov{\'a} and F.~Krzakala, ``Statistical physics of inference:
  Thresholds and algorithms,'' {\em Advances in Physics}, vol.~65,
  pp.~453--552, 2016.

\bibitem{LesKrzZde17}
T.~Lesieur, F.~Krzakala, and L.~Zdeborov\'a, ``Constrained low-rank matrix
  estimation: phase transitions, approximate message passing and
  applications,'' {\em Journal of Statistical Mechanics: Theory and
  Experiment}, vol.~2017, no.~7, p.~073403, 2017.

\bibitem{NishimoriBook}
H.~Nishimori, {\em Statistical Physics of Spin Glasses and Information
  Processing: An Introduction}.
\newblock Oxford, UK: Oxford University Press, 2001.

\bibitem{SherringtonKirkpatrick75}
D.~Sherrington and S.~Kirkpatrick, ``Solvable model of a spin-glass,'' {\em
  Phys. Rev. Lett.}, vol.~35, pp.~1792--1796, 1975.

\bibitem{MPZ02}
M.~M{\'e}zard, G.~Parisi, and R.~Zecchina, ``Analytic and algorithmic solution
  of random satisfiability problems,'' {\em Science}, vol.~297, no.~5582,
  pp.~812--815, 2002.

\bibitem{BMZ05}
A.~Braunstein, M.~M{\'e}zard, and R.~Zecchina, ``Survey propagation: An
  algorithm for satisfiability,'' {\em Random Structures \& Algorithms},
  vol.~27, no.~2, pp.~201--226, 2005.

\bibitem{yasuda2012replica}
M.~Yasuda, Y.~Kabashima, and K.~Tanaka, ``Replica plefka expansion of ising
  systems,'' {\em Journal of Statistical Mechanics: Theory and Experiment},
  vol.~2012, no.~04, p.~P04002, 2012.

\bibitem{de1978stability}
J.~De~Almeida and D.~J. Thouless, ``Stability of the sherrington-kirkpatrick
  solution of a spin glass model,'' {\em Journal of Physics A: Mathematical and
  General}, vol.~11, no.~5, p.~983, 1978.

\bibitem{wright2009robust}
J.~Wright, A.~Ganesh, S.~Rao, Y.~Peng, and Y.~Ma, ``Robust principal component
  analysis: Exact recovery of corrupted low-rank matrices via convex
  optimization,'' in {\em Advances in neural information processing systems},
  pp.~2080--2088, 2009.

\bibitem{candes2011robust}
E.~J. Cand{\`e}s, X.~Li, Y.~Ma, and J.~Wright, ``Robust principal component
  analysis?,'' {\em Journal of the ACM (JACM)}, vol.~58, no.~3, p.~11, 2011.

\bibitem{candes2009exact}
E.~J. Cand{\`e}s and B.~Recht, ``Exact matrix completion via convex
  optimization,'' {\em Foundations of Computational mathematics}, vol.~9,
  no.~6, p.~717, 2009.

\bibitem{lesieur2015mmse}
T.~Lesieur, F.~Krzakala, and L.~Zdeborov{\'a}, ``Mmse of probabilistic low-rank
  matrix estimation: Universality with respect to the output channel,'' in {\em
  53rd Annual Allerton Conference on Communication, Control, and Computing},
  pp.~680--687, IEEE, 2015.

\bibitem{krzakala2016mutual}
F.~Krzakala, J.~Xu, and L.~Zdeborov{\'a}, ``Mutual information in rank-one
  matrix estimation,'' in {\em Information Theory Workshop (ITW), 2016 IEEE},
  pp.~71--75, IEEE, 2016.

\bibitem{zdeborova2010generalization}
L.~Zdeborov{\'a} and F.~Krzakala, ``Generalization of the cavity method for
  adiabatic evolution of gibbs states,'' {\em Physical Review B}, vol.~81,
  no.~22, p.~224205, 2010.

\bibitem{parisi1979infinite}
G.~Parisi, ``Infinite number of order parameters for spin-glasses,'' {\em
  Physical Review Letters}, vol.~43, no.~23, p.~1754, 1979.

\bibitem{rangan2012iterative}
S.~Rangan and A.~K. Fletcher, ``Iterative estimation of constrained rank-one
  matrices in noise,'' in {\em IEEE International Symposium on Information
  Theory Proceedings (ISIT)}, pp.~1246--1250, IEEE, 2012.

\bibitem{NIPS2013_5074}
R.~Matsushita and T.~Tanaka, ``Low-rank matrix reconstruction and clustering
  via approximate message passing,'' in {\em Advances in Neural Information
  Processing Systems 26} (C.~Burges, L.~Bottou, M.~Welling, Z.~Ghahramani, and
  K.~Weinberger, eds.), pp.~917--925, Curran Associates, Inc., 2013.

\bibitem{DeshpandeM14}
Y.~Deshpande and A.~Montanari, ``Information-theoretically optimal sparse
  {PCA},'' in {\em IEEE International Symposium on Information Theory (ISIT)},
  pp.~2197--2201, IEEE, 2014.

\bibitem{lesieur2015phase}
T.~Lesieur, F.~Krzakala, and L.~Zdeborov{\'a}, ``Phase transitions in sparse
  {PCA},'' in {\em IEEE International Symposium on Information Theory
  Proceedings (ISIT)}, pp.~1635--1639, 2015.

\bibitem{deshpande2015asymptotic}
Y.~Deshpande, E.~Abbe, and A.~Montanari, ``Asymptotic mutual information for
  the binary stochastic block model,'' in {\em 2016 IEEE International
  Symposium on Information Theory (ISIT)}, pp.~185--189, July 2016.

\bibitem{barbier2016mutual}
J.~Barbier, M.~Dia, N.~Macris, F.~Krzakala, T.~Lesieur, and L.~Zdeborov{\'a},
  ``Mutual information for symmetric rank-one matrix estimation: A proof of the
  replica formula,'' in {\em Advances In Neural Information Processing
  Systems}, pp.~424--432, 2016.

\bibitem{LelargeMiolane16}
M.~Lelarge and L.~Miolane, ``Fundamental limits of symmetric low-rank matrix
  estimation,'' {\em arXiv:1611.03888 [math.PR]}, 2016.

\bibitem{barbier2017stochastic}
J.~Barbier and N.~Macris, ``The stochastic interpolation method: A simple
  scheme to prove replica formulas in bayesian inference,'' {\em arXiv preprint
  arXiv:1705.02780}, 2017.

\bibitem{alaoui2018estimation}
A.~E. Alaoui and F.~Krzakala, ``Asymptotic mutual information for the binary
  stochastic block modelestimation in the spiked wigner model: A short proof of
  the replica formula,'' in {\em 2018 IEEE International Symposium on
  Information Theory (ISIT)}, 2018.

\bibitem{korada2009exact}
S.~B. Korada and N.~Macris, ``Exact solution of the gauge symmetric p-spin
  glass model on a complete graph,'' {\em Journal of Statistical Physics},
  vol.~136, no.~2, pp.~205--230, 2009.

\bibitem{Mezard17}
M.~M\'ezard, ``Mean-field message-passing equations in the hopfield model and
  its generalizations,'' {\em Phys. Rev. E}, vol.~95, p.~022117, Feb 2017.

\bibitem{GabTraKrz15}
M.~Gabri{\'e}, E.~W. Tramel, and F.~Krzakala, ``Training restricted boltzmann
  machine via the ￼thouless-anderson-palmer free energy,'' in {\em Advances
  in Neural Information Processing Systems 28} (C.~Cortes, N.~D. Lawrence,
  D.~D. Lee, M.~Sugiyama, and R.~Garnett, eds.), pp.~640--648, Curran
  Associates, Inc., 2015.

\bibitem{Traelal16}
E.~W. Tramel, A.~Manoel, F.~Caltagirone, M.~Gabri{\'e}, and F.~Krzakala,
  ``Inferring sparsity: Compressed sensing using generalized restricted
  boltzmann machines,'' in {\em Information Theory Workshop (ITW), 2016 IEEE},
  pp.~265--269, 2016.

\bibitem{TubMon17}
J.~Tubiana and R.~Monasson, ``Emergence of compositional representations in
  restricted boltzmann machines,'' {\em Phys. Rev. Lett.}, vol.~118, p.~138301,
  Mar 2017.

\bibitem{yedidia2003understanding}
J.~S. Yedidia, W.~T. Freeman, and Y.~Weiss, ``Understanding belief propagation
  and its generalizations,'' {\em Exploring artificial intelligence in the new
  millennium}, vol.~8, pp.~236--239, 2003.

\bibitem{plefka1982convergence}
T.~Plefka, ``Convergence condition of the tap equation for the infinite-ranged
  ising spin glass model,'' {\em Journal of Physics A: Mathematical and
  general}, vol.~15, no.~6, p.~1971, 1982.

\bibitem{georges1991expand}
A.~Georges and J.~S. Yedidia, ``How to expand around mean-field theory using
  high-temperature expansions,'' {\em Journal of Physics A: Mathematical and
  General}, vol.~24, no.~9, p.~2173, 1991.

\bibitem{rangan2016fixed}
S.~Rangan, P.~Schniter, E.~Riegler, A.~K. Fletcher, and V.~Cevher, ``Fixed
  points of generalized approximate message passing with arbitrary matrices,''
  {\em IEEE Transactions on Information Theory}, vol.~62, no.~12,
  pp.~7464--7474, 2016.

\bibitem{vila2015adaptive}
J.~Vila, P.~Schniter, S.~Rangan, F.~Krzakala, and L.~Zdeborov{\'a}, ``Adaptive
  damping and mean removal for the generalized approximate message passing
  algorithm,'' in {\em IEEE International Conference on Acoustics, Speech and
  Signal Processing (ICASSP)}, pp.~2021--2025, IEEE, 2015.

\bibitem{nishimori2001absence}
H.~Nishimori and D.~Sherrington, ``Absence of replica symmetry breaking in a
  region of the phase diagram of the ising spin glass,'' in {\em AIP Conference
  Proceedings}, vol.~553, pp.~67--72, AIP, 2001.

\bibitem{MP01}
M.~M{\'e}zard and G.~Parisi, ``The bethe lattice spin glass revisited,'' {\em
  The European Physical Journal B-Condensed Matter and Complex Systems},
  vol.~20, no.~2, pp.~217--233, 2001.

\bibitem{mezard2003cavity}
M.~M{\'e}zard and G.~Parisi, ``The cavity method at zero temperature,'' {\em
  Journal of Statistical Physics}, vol.~111, no.~1-2, pp.~1--34, 2003.

\bibitem{Mo95}
R.~Monasson, ``Structural glass transition and the entropy of the metastable
  states,'' {\em Phys. Rev. Lett.}, vol.~75, pp.~2847--2850, Oct 1995.

\bibitem{CKPUZ17}
P.~Charbonneau, J.~Kurchan, G.~Parisi, P.~Urbani, and F.~Zamponi, ``Glass and
  jamming transitions: From exact results to finite-dimensional descriptions,''
  {\em Annual Review of Condensed Matter Physics}, vol.~8, pp.~265--288, 2017.

\bibitem{CKPUZ14}
P.~Charbonneau, J.~Kurchan, G.~Parisi, P.~Urbani, and F.~Zamponi, ``Fractal
  free energies in structural glasses,'' {\em Nature Communications}, vol.~5,
  p.~3725, 2014.

\bibitem{CKPUZ13}
P.~Charbonneau, J.~Kurchan, G.~Parisi, P.~Urbani, and F.~Zamponi, ``Exact
  theory of dense amorphous hard spheres in high dimension. iii. the full
  replica symmetry breaking solution,'' {\em J. Stat. Mech.: Theor. Exp.},
  vol.~2014, no.~10, p.~P10009, 2014.

\bibitem{SD84}
H.-J. Sommers and W.~Dupont, ``Distribution of frozen fields in the mean-field
  theory of spin glasses,'' {\em Journal of Physics C: Solid State Physics},
  vol.~17, no.~32, p.~5785, 1984.

\bibitem{MonRic03}
{Montanari, A.} and {Ricci-Tersenghi, F.}, ``On the nature of the
  low-temperature phase in discontinuous mean-field spin glasses,'' {\em Eur.
  Phys. J. B}, vol.~33, no.~3, pp.~339--346, 2003.

\bibitem{MonParRic04}
A.~Montanari, G.~Parisi, and F.~Ricci-Tersenghi, ``Instability of one-step
  replica-symmetry-broken phase in satisfiability problems,'' {\em Journal of
  Physics A: Mathematical and General}, vol.~37, no.~6, p.~2073, 2004.

\bibitem{KrzPagWei04}
F.~Krzakala, A.~Pagnani, and M.~Weigt, ``Threshold values, stability analysis,
  and high-$q$ asymptotics for the coloring problem on random graphs,'' {\em
  Phys. Rev. E}, vol.~70, p.~046705, Oct 2004.

\bibitem{MerMezZec05}
S.~Mertens, M.~M\'ezard, and R.~Zecchina, ``Threshold values of random k‐sat
  from the cavity method,'' {\em Random Structures \& Algorithms}, vol.~28,
  no.~3, pp.~340--373, 2005.

\end{thebibliography}

\end{document}